\numberwithin{equation}{section}
\setlist{nosep}
\setlist{noitemsep}
\newcommand{\N}{\mathcal{N}}
\newcommand{\Z}{\mathbb{Z}}
\newcommand{\R}{\mathbb{R}}
\newcommand{\I}{\mathcal{I}}
\newtheorem{theorem}{Theorem}
\newtheorem{proposition}{Proposition}[section]
\newtheorem{lemma}[proposition]{Lemma}
\newtheorem{corollary}[proposition]{Corollary}
\newtheorem{remark}[proposition]{Remark}
\newtheorem*{notn}{Notation}
\theoremstyle{plain}
\theoremstyle{definition}
\newcommand{\tref}[1]{Theorem~\ref{t.#1}}
\newcommand{\pref}[1]{Proposition~\ref{p.#1}}
\newcommand{\lref}[1]{Lemma~\ref{l.#1}}
\newcommand{\cref}[1]{Corollary~\ref{c.#1}}
\newcommand{\sref}[1]{Section~\ref{s.#1}}
\newcommand{\rref}[1]{Remark~\ref{r.#1}}
\newcommand{\eref}[1]{(\ref{e.#1})}
\def \supp{\mathrm{supp }} 
\def \1{\mathbf{1}} 
\def \mcl{\mathcal}
\def \mbb{\mathbb}
\def \ep{\varepsilon}
\def \dist{\mathrm{dist}}
\newcommand{\g}{\mathsf{g}}
\def\nab{\nabla}
\def\({\left(}
\def\){\right)}
\def\XXint#1#2#3{{\setbox0=\hbox{$#1{#2#3}{\int}$}
		\vcenter{\hbox{$#2#3$}}\kern-.5\wd0}}
\newcommand{\fluct}{\mathrm{fluct}}
\newcommand{\Fluct}{\mathrm{Fluct}}
\newcommand{\Disc}{\mathrm{Disc}}
\newcommand{\Iso}{\mathrm{Iso}}
\def\nab{\nabla}
\def\pa{\partial}
\def\Ann{\mathrm{Ann}}
\newcommand{\M}{\mathcal{M}}
\newcommand{\E}{\mathbb{E}}
\renewcommand{\P}{\mathbb{P}}
\newcommand{\mueq}{\mu_{\mathrm{eq}}} 
\newcommand{\ov}{\overline}
\def\Leb{\mathrm{Leb}}
\def\namedlabel#1#2{\begingroup
	#2%
	\def\@currentlabel{#2}%
	\phantomsection\label{#1}\endgroup
}
\def \diam{\mathrm{diam}}
\begin{document}
	\title[Overcrowding in Coulomb Gases]{Overcrowding and Separation Estimates for the Coulomb Gas}
	\author{Eric Thoma}
	\date{February 20, 2023}
	\subjclass[2020]{82B05, 60G55, 60G70, 49S05}
	
	\begin{abstract}
	We prove several results for the Coulomb gas in any dimension $d \geq 2$ that follow from {\it isotropic averaging}, a transport method based on Newton's theorem. First, we prove a high-density Jancovici-Lebowitz-Manificat law, extending the microscopic density bounds of Armstrong and Serfaty and establishing strictly sub-Gaussian tails for charge excess in dimension $2$. The existence of microscopic limiting point processes is proved at the edge of the droplet. Next, we prove optimal upper bounds on the $k$-point correlation function for merging points, including a Wegner estimate for the Coulomb gas for $k=1$. We prove the tightness of the properly rescaled $k$th minimal particle gap, identifying the correct order in $d=2$ and a three term expansion in $d \geq 3$, as well as upper and lower tail estimates. In particular, we extend the two-dimensional ``perfect-freezing regime" identified by Ameur and Romero to higher dimensions. Finally, we give positive charge discrepancy bounds which are state of the art near the droplet boundary and prove incompressibility of Laughlin states in the fractional quantum Hall effect, starting at large microscopic scales. Using rigidity for fluctuations of smooth linear statistics, we show how to upgrade positive discrepancy bounds to estimates on the absolute discrepancy in certain regions. 
	\end{abstract}
	
	\maketitle
	
	\section{Introduction} \label{s.intro}
	\subsection{The setting.}\ \label{s.intro1}
	For $d \geq 2$, the $d$-dimensional Coulomb gas (or one-component plasma) at inverse temperature $\beta \in (0,\infty)$ is a probability measure on point configurations $X_N = (x_1,\ldots,x_N) \in (\R^d)^N$ given by
	\begin{equation} \label{e.P1def}
		\P^{W}_{N,\beta}(dX_N) = \frac{1}{\mcl Z} \exp\(-\beta \mcl H^{W}(X_N)\) dX_N
	\end{equation}
	where $dX_N$ is Lebesgue measure on $(\R^d)^N$, $\mcl Z$ is a normalizing constant, and
	\begin{equation} \label{e.mclHdef}
		\mcl H^{W}(X_N) = \frac12 \sum_{1 \leq i \ne j \leq N} \g(x_i - x_j) +  \sum_{i = 1}^N W(x_i)
	\end{equation}
	is the Coulomb energy of $X_N$ with confining potential $W$. The kernel $\g$ is the Coulomb interaction given by
	\begin{equation}
		\g(x) = \begin{cases}
			-\log |x| \quad &\text{if } d = 2 \\
			\frac{1}{|x|^{d-2}} \quad &\text{if } d \geq 3.
		\end{cases}
	\end{equation}
	While we will rarely require it, we have in mind the scaling $W = V_N := N^{2/d}V(N^{-1/d} \cdot)$ for a potential $V$ satisfying certain conditions. The normalization in $V_N$ is chosen so that the typical interstitial distance is of size $O(1)$, i.e.\ the Coulomb gas $\P^{V_N}_{N,\beta}$ is on the ``blown-up" scale. However, unless otherwise stated, we will work only under the assumption that $\Delta W$ exists and is bounded from above on $\R^d$ and \eref{P1def} is well-defined, though see \rref{A1} for comments on how this can be loosened significantly. For some results, we will need additional assumptions on $W$.
	
	Up to normalization, the kernel $\g$ gives the repulsive interaction between two positive point charges, and so the Coulomb gas exhibits a competition between particle repulsion, given by the first sum in \eref{mclHdef}, and particle confinement, given by the second sum in \eref{mclHdef}. The behavior of $X_N$ at the macroscopic scale (i.e.\ in a box of side length $O(N^{1/d})$) is largely dictated by the equilibrium measure $\mueq$, a compactly supported probability measure on $\R^d$ solving a variational problem involving $V$, in the sense that the empirical measure $N^{-1}\sum_{i=1}^N \delta_{N^{-1/d}x_i}$ is well-approximated weakly by $\mueq$ with high probability as $N \to \infty$. In particular, the rescaled points condense on the {\it droplet}, i.e.\ the support of $\mueq$. Even on mesoscales $O(N^\alpha)$, $0 < \alpha < 1/d$, the equilibrium measure gives a good approximation for particle density. Letting $\mueq^N$ be defined by $\mueq^N(A) = N\mueq(N^{-\frac1d}A)$ for measurable $A \subset \R^d$, one can form the random fluctuation measure
	\begin{equation} \label{e.fluctdef}
		\fluct(dx) = \sum_{i=1}^N \delta_{x_i}(dx) - \mueq^N(dx),
	\end{equation}
	which, despite being of size $O(N)$ in total variation, is typically of size $O(1)$ when acting on smooth functions (e.g.\ \cite{S22,LS18, AS21, BBNY19}).
	
	Most of the time, we will work with the more general probability measure $\P^{W,U}_{N,\beta}$ defined by
	\begin{equation}
		\P^{W,U}_{N,\beta}(dX_N) \propto \exp\(-\beta \mcl H^{W,U}(X_N)\) dX_N, \quad \mcl H^{W,U}(X_N) = \mcl H^W(X_N) + U(x_1,\ldots,x_N),
	\end{equation}
	where $U = U_N : (\R^d)^N \to \R$ is symmetric, superharmonic and locally integrable in each variable $x_i$, and such that the measure $\P^{W,U}_{N,\beta}$ is well-defined. Measures of this form capture behavior of the gas under conditioning. For example, the Coulomb gas \eref{P1def} disintegrates along $x_n$ to $\P^{W,U}_{N-1,\beta}$ with $U(X_{N-1}) = \sum_{i=1}^{N-1} \g(x_i - x_N)$. They also play an important role in the study of the fractional quantum Hall effect; see \sref{intro_discrepancy} for further discussion as well the surveys \cite{R22,R22_2}.
	
	We will apply a transport-type argument, called {\it isotropic averaging}, to give upper bounds for $\P^{W,U}_{N,\beta}$ on a variety of events, all concerning the overcrowding of particles. This terminology and a similar method was first used in \cite{L21}, but a technical issue limited its applicability. Our main contribution is to demonstrate that the method has wide-ranging applicability by giving relatively short and intuitive solutions to several open problems. We believe that it will be an important tool in future studies of the Coulomb gas.
	
	\subsection{A model computation.}\ \label{s.intro.model} The basic idea behind isotropic averaging will be motivated through the following model computation. We will refer to this computation, in more general forms, throughout the paper.
	
	We start by defining certain {\it isotropic averaging operators}. Given an index set $\I \subset \{1,2,\ldots,N\}$ and a rotationally symmetric probability measure $\nu$ on $\R^d$, we define
	$$
	\Iso_{\I, \nu} F((x_i)_{i \in \I})= \int_{(\R^d)^{\I}} F\( (x_i + y_i )_{i \in \I}\)  \prod_{i \in \I} \nu(dy_i)
	$$
	for any nice enough function $F : (\R^d)^{\I} \to \R$. We also consider the operator $\Iso_{\I,\nu}$ acting on functions of $X_N$, or more generally any set of labeled coordinates, by convolution with $\nu$ on the coordinates with labels in $\I$. For example, we have
	$$
	\Iso_{\I, \nu} F(X_N)= \int_{(\R^d)^{\I}} F\( X_N + (y_i \1_{\I}(i) )_{i =1}^N\)  \prod_{i \in \I} \nu(dy_i)
	$$
	by convention, and $\Iso_{\I,\nu} F(x_1) = F(x_1)$ if $1 \not \in \I$, otherwise $\Iso_{\I,\nu} F(x_1) = F \ast \nu(x_1)$.
	
	An important observation is that the kernel $\g$ is superharmonic everywhere and harmonic away from the origin, and thus we have the mean value inequality
	\begin{equation}\label{e.mvi}
		\Iso_{\I, \nu} \g(x_i - x_j) \leq \g(x_i - x_j) \quad \forall i,j.
	\end{equation}
	In our physical context, the isotropic averaging operator replaces each point charge $x_i$, $i \in \I$, by a charge distribution shaped like $\nu$ centered at $x_i$. Newton's theorem implies that the electric interaction between two disjoint, radial, unit charge distributions is the same as the interaction between two point charges located at the respective centers. More generally, if the charge distributions are not disjoint, then the interaction is more mild than that of the point charge system (this is because $\g(r)$ is decreasing in $r$ and the electric field generated by a uniform spherical charge is $0$ in the interior).
	
	Consider an event $E$ which we wish to show to be unlikely. For definiteness, we let $E$ be the event ``$B_r(z)$ contains at least $2$ particles" for some fixed $r \ll 1$ and $z \in \R^d$. By a union bound, we have
	\begin{equation} \label{e.model0}
		\P^{W,U}_{N,\beta}(E) \leq \sum_{i < j} \P^{W,U}_{N,\beta}(E_{\{i,j\}}) = \binom{N}{2} \P^{W,U}_{N,\beta}(E_{\{1,2\}}), \quad E_{\{i,j\}} := \{ x_i \in B_r(z)\} \cap \{x_j \in B_r(z)\}.
	\end{equation}
	We can bound the likelihood of $E_{\{1,2\}}$ by comparing each $X_N \in E_{\{1,2\}}$ to the weighted family of configurations generated by replacing $x_1$ and $x_2$ by unit charged annuli of inner radius $1/2$ and outer radius $1$. Letting $\nu$ be the uniform probability measure supported on the centered annulus $\Ann_{[1/2,1]}(0) \subset \R^d$, we have by Jensen's inequality
	\begin{align} \label{e.model1}
		\P^{W,U}_{N,\beta}(E_{\{1,2\}}) &= \frac{1}{\mcl Z}\int_{E_{\{1,2\}}} e^{-\beta \mcl H^{W,U}(X_N)} dX_N \leq \frac{e^{-\beta \Delta}}{\mcl Z} \int_{E_{\{1,2\}}} e^{-\beta \Iso_{\{1,2\},\nu} \mcl H^{W,U}(X_N)} dX_N \\ \notag &\leq \frac{e^{-\beta \Delta}}{\mcl Z} \int_{E_{\{1,2\}}} \Iso_{\{1,2\},\nu} e^{-\beta \mcl H^{W,U}(X_N)} dX_N
	\end{align}
	for
	$$
	\Delta = \inf_{X_N \in E_{\{1,2\}}} \mcl H^{W,U}(X_N) - \Iso_{\{1,2\},\nu} \mcl H^{W,U}(X_N).
	$$
	We can then consider the $L^2((\R^{d})^{\{1,2\}})$-adjoint of $\Iso_{\{1,2\},\nu}$, which we call $\Iso^\ast_{\{1,2\},\nu}$, to bound
	\begin{align} \label{e.model2}
		\P^{W,U}_{N,\beta}(E_{\{1,2\}}) &\leq \frac{e^{-\beta \Delta}}{\mcl Z} \int_{(\R^d)^N} \Iso^\ast_{\{1,2\},\nu} \1_{E_{\{1,2\}}}(X_N) e^{-\beta \mcl H^{W,U}(X_N)}dX_N \\\notag &= e^{-\beta \Delta} \E^{W,U}_{N,\beta}[\Iso^\ast_{\{1,2\},\nu} \1_{E_{\{1,2\}}}].
	\end{align}
	We call the above calculation, namely \eref{model1} and \eref{model2}, the {\it model computation}.
	There are now two tasks: (1) give a lower bound for $\Delta$ and (2) give an upper bound for the expectation of $\Iso^\ast_{\{1,2\},\nu} \1_{E_{\{1,2\}}}$.
	
	Regarding task (1), we expect $\Delta$ will be large: two particles initially clustered in $B_r(z)$ are replaced by annular charges of microscopic length scale, which interact mildly. It is a simple calculation to see the pairwise interaction between the charged annuli is bounded by $\g(1/2)$ (with the abuse of notation $\g(x) = \g(|x|)$). Regarding the potential term $\sum_{i=1}^N W(x_i)$ within $\mcl H^{W,U}(X_N)$, it will increase by at most a constant after isotropic averaging since $\Delta W \leq C$. The superharmonic term $U$ does not increase. Therefore, we have 
	$
	\Delta \geq \g(2r) - C.
	$
	
	Regarding task (2), since $\Iso_{\{1,2\}, \nu}$ is a convolution by $\nu^{\otimes 2}$, we have
	$$
	\Iso^\ast_{\{1,2\},\nu} \1_{E_{\{1,2\}}}(X_N) =\Iso_{\{1,2\},\nu} \1_{E_{\{1,2\}}}(X_N) \leq \| \nu \|^2_{L^\infty} \| \1_{E_{\{1,2\}}}(\cdot,\cdot,x_3,\ldots,x_N) \|_{L^1(\R^2)} \leq C r^{2d}.
	$$
	Moreover, we have $\Iso^\ast_{\{1,2\},\nu} \1_{E_{\{1,2\}}}(X_N) = 0$ if $x_1$ or $x_2$ is not in $B_{1+r}(z) \subset B_2(z)$. Thus
	$$
	\E^{W,U}_{N,\beta}[\Iso^\ast_{\{1,2\},\nu} \1_{E_{\{1,2\}}}] \leq Cr^{2d} \P^{W,U}_{N,\beta}(\{x_1,x_2 \in B_2(z)\}).
	$$
	Assembling the above, starting with \eref{model0}, we find
	\begin{equation} \label{e.model3}
		\P^{W,U}_{N,\beta}(E) \leq Ce^{-\beta \g(2r)}r^{2d} N^2 \P^{W,U}_{N,\beta}(\{x_1,x_2 \in B_2(z)\}).
	\end{equation}
	The probability appearing in the RHS will be bounded by $CN^{-2}$ by our microscopic local law \tref{1C.LL}, which is proved using a separate isotropic averaging argument, and we see that the probability of $E$ is bounded by $Cr^{2d} e^{-\beta \g(2r)}$. This is optimal in $d=2$, but can be improved to $Cr^{3d-2} e^{-\beta\g(2r)}$ in $d \geq 3$ (see \tref{1C.cluster}). The $CN^{-2}r^{2d}$ bound for the probability of $E_{\{1,2\}}$ comes from the decrease in phase space volume available to $x_1$ and $x_2$ from the full macroscopic scale of $O(N)$ volume per particle to a specific sub-microscopic ball of $O(r^d)$ volume upon restricting to $E_{\{1,2\}}$. In $d \geq 3$, the polynomial singularity of $\g$ generates additional effective constraints on $x_1$ and $x_2$ within $B_r(z)$.
	
	We remark that our technique exhibits perfect localization and gives quantitative estimates with computable constants. In particular, it is robust to certain types of conditioning and randomization of the ball $B_r(z)$, as well as allowing to prove disparate phenomena on vastly different scales. It can also be generalized to use operators other than $\Iso_{\I,\nu}$, as in the proof of \tref{etak.tight} where we give both upper and lower bounds on the minimal inter-particle difference. For the lower bound, we must apply our model computation with a ``mimicry" operator defined in \pref{createpairs}. The method, in particular techniques for estimating $\Delta$, can be made very precise, as in \tref{fLL.over}. Our model computation bears resemblance to the  Mermin-Wagner argument from statistical physics \cite{MW66}. It is also similar to an argument of Lieb, which applies only to ground states ($\beta = \infty$) and was generously shared and eventually generalized and published in \cite{NS15, RS16,  PS17}.

	\begin{notn}
		We identify $\P^{W,U}_{N,\beta}$ with the law of a point process $X$, with the translation between $X_N$ and $X$ given by $X = \sum_{i=1}^N \delta_{x_i}$. All point processes will be assumed to be simple. We also define the ``index" process $\mbb X$ given by $\mbb X(A) = \{i : x_i \in A\}$ for measurable sets $A$. For example, we have $E = \{X(B_r(z)) \geq 2\}$ and $E_{\{1,2\}} = \{\{1,2\} \subset \mbb X(B_r(z))\}$ for the events $E$ and $E_{\{1,2\}}$ considered in this subsection.
	\end{notn}
	
	\subsection{JLM laws.}\ Introduced in \cite{JLM93}, Jancovici-Lebowitz-Manificat (JLM) laws give the probability of large charge discrepencies in the Coulomb gas. The authors considered an infinite volume {\it jellium} and approximated the probability of an absolute net charge of size much larger than $R^{(d-1)/2}$ in a ball of radius $R$ as $R \to \infty$. The jellium is a Coulomb gas with a uniform negative background charge, making the whole system net neutral in an appropriate sense. Since the typical net charge in $B_R(0)$ is expected to be of order $R^{(d-1)/2}$ (see \cite{MY80}), the JLM laws are moderate to large deviation results and exhibit tail probabilities with very strong decay in the charge excess. The arguments of \cite{JLM93} are based on electrostatic principles and consider several different regimes of the charge discrepancy size.
	
	We are interested in a rigorous proof of the high density versions of the JLM laws. These versions apply when $X(B_R(z))$ exceeds the expected charge $\mueq(B_R(z))$ by a large multiplicative factor $C$. They predict that
	\begin{equation} \label{e.JLM.pred}
		\P_{\mathrm{jell}}(\{X(B_R(z)) \geq Q\}) \sim \begin{cases} \exp\(-\frac{\beta}{4} Q^2 \log \frac{Q}{Q_0}\) \quad \text{if } d=2, \\
			\exp\(-\frac{\beta}{4R} Q^2\) \quad \text{if } d=3,
		\end{cases}
	\end{equation}
	for $Q_0 = |B_R(z)|$. The prediction applies for $Q \gg R^d$ as $R \to \infty$
	
	Our main results prove the high density JLM law upper bounds in all dimensions in the ultra-high positive charge excess regime. We do so for $\P^{W,U}_{N,\beta}$, a Coulomb gas with potential confinement and superharmonic perturbation, though the result holds also for the jellium {\it mutatis mutandis}. We note that our result does not require $R \to \infty$. Indeed, we have found it very useful at $R=1$ as a local law upper bound valid on all of $\R^d$, an extension of the microscale local law in \cite{AS21} which is only proved for $z$ sufficiently far into the interior of the droplet and under other more restrictive assumptions. Note that although we do not obtain a sharp coefficient on $Q^2$ in the exponent of the $d \geq 3$ case, it could be improved with additional effort in \pref{1C.LL.iso}.
	
	\begin{theorem}[High Density JLM Law] \label{t.1C.LL}
		For any $R \geq 1$, integer $\lambda \geq 100$, and integer $Q$ satisfying
		\begin{equation} \label{e.K.cond.1}
			Q \geq \begin{cases}
				\frac{C \lambda^2 R^2 + C\beta^{-1} }{\log(\frac14 \lambda)} \quad &\text{if } d=2,\\
				C  R^{d} +C \beta^{-1} R^{d-2} \quad &\text{if } d \geq 3,
			\end{cases}
		\end{equation}
		we have
		\begin{equation}
			\P^{W,U}_{N,\beta}(\{X(B_R(z)) \geq Q \}) \leq \begin{cases}
				e^{-\frac12 \beta \log(\frac14 \lambda) Q^2 + C(1+\beta \lambda^2 R^2) Q} \quad &\text{if } d=2,\\
				e^{-2^{-d} \beta R^{-d + 2} Q(Q-1)} \quad &\text{if } d \geq 3,
			\end{cases}
		\end{equation}
		and the result remains true if $z$ is replaced by $x_1$. The constant $C$ depends only on the dimension and the upper bound for $\Delta W$. In particular if $d=2$ and $Q \geq C_{\beta,W} R^2$, we may choose $\lambda = \sqrt{\frac{Q}{R^2}}$ to see
		$$
		\P^{W}_{N,\beta}(\{X(B_R(z)) \geq Q \}) \leq e^{-\frac{\beta}{4} \log \(\frac{Q}{R^2}\) Q^2 + C\beta Q^2 + CQ}.
		$$
	\end{theorem}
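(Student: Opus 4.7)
The plan is to lift the two-particle model computation of \sref{intro.model} to $Q$ particles. I would begin with Markov's inequality applied to the $Q$-th factorial moment:
\[
\P^{W,U}_{N,\beta}(\{X(B_R(z)) \geq Q\}) \leq \E^{W,U}_{N,\beta}\!\left[\binom{X(B_R(z))}{Q}\right] = \sum_{|\I|=Q} \P^{W,U}_{N,\beta}(\{\I \subset \mbb X(B_R(z))\}),
\]
reducing the problem to bounding each event $E_\I := \{\I \subset \mbb X(B_R(z))\}$. For each such $\I$ of size $Q$, I would apply the isotropic averaging operator $\Iso_{\I,\nu}$ with $\nu$ rotationally symmetric and supported on an annulus of inner radius $\rho/2$ and outer radius $\rho$, choosing $\rho \asymp \lambda R$ in $d=2$ and $\rho \asymp R$ in $d \geq 3$. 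The model computation yields
\[
\P^{W,U}_{N,\beta}(E_\I) \leq e^{-\beta \Delta_\I}\,\E^{W,U}_{N,\beta}\!\left[\Iso^*_{\I,\nu}\1_{E_\I}\right], \qquad \Delta_\I = \inf_{E_\I}\!\bigl(\mcl H^{W,U} - \Iso_{\I,\nu}\mcl H^{W,U}\bigr).
\]

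The main task is a sharp lower bound on $\Delta_\I$, decomposed into four contributions: (i) the intra-$\I$ pair interactions, where every pair $i,j \in \I$ has $|x_i - x_j| \leq 2R$ so $\g(x_i - x_j) \geq \g(2R)$, while Newton's theorem bounds the two-annulus interaction from above by $\g$ at scale $\rho$; choosing the inner radius $\rho/2$ carefully gives a per-pair savings of at least $\log(\lambda/4)$ in $d=2$ (resp.\ $2^{-d}R^{-(d-2)}$ in $d\geq 3$), summing to $\tfrac12 Q(Q-1)$ times these quantities; (ii) cross pairs with exactly one endpoint in $\I$, which contribute no cost by \eref{mvi}; (iii) the superharmonic $U$, which decreases under $\Iso_{\I,\nu}$; (iv) the confinement $W$, which by $\Delta W \leq C$ and a second-order Taylor expansion rises by at most $CQ\rho^2$, i.e.\ $CQ\lambda^2 R^2$ in $d=2$ and $CQR^2$ in $d\geq 3$.

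On the adjoint side, radial symmetry of $\nu$ gives $\Iso^*_{\I,\nu} = \Iso_{\I,\nu}$, and
\[
\Iso_{\I,\nu}\1_{E_\I}(X_N) = \prod_{i \in \I} \nu(B_R(z) - x_i) \leq \bigl(\|\nu\|_\infty|B_R(z)|\bigr)^Q\,\1_{\{\I \subset \mbb X(B_{R+\rho}(z))\}},
\]
producing a prefactor $(C/\lambda^d)^Q$ in $d=2$ (resp.\ $C^Q$ in $d \geq 3$) times an indicator on a slightly enlarged ball. Summing over $\I$ turns this into a factorial moment on $B_{R+\rho}(z)$. To close the recursion I would invoke a preliminary bound of the form $\E^{W,U}_{N,\beta}[\binom{X(B_{R+\rho}(z))}{Q}] \leq (C\rho^d)^Q/Q!$, which I expect is the content of the auxiliary \pref{1C.LL.iso} alluded to in the theorem statement and follows from a simpler isotropic averaging computation comparing against particles spread on a widely-spaced grid. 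The size hypothesis \eref{K.cond.1} is precisely what is needed to absorb the entropy factor $Q \log \rho^d$ into the quadratic savings $\beta Q^2 \log(\lambda/4)$.

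The hard part will be preserving the sharp constant $\lambda/4$ inside the logarithm in $d=2$, which requires a careful computation of the two-annulus Newton potential and a delicate tuning of the inner radius of $\nu$. A secondary subtlety is the bootstrap: because the adjoint forces us to evaluate the indicator on the enlarged ball $B_{R+\rho}(z)$, the preliminary factorial moment bound there must already be essentially tight on the scale $\rho \asymp \lambda R$. Fortunately the condition \eref{K.cond.1} on $Q$ is tailored so that the extra volume factor $\rho^d/R^d \sim \lambda^d$ is dominated by $e^{-\tfrac12\beta Q(Q-1)\log(\lambda/4)}$ once $Q$ exceeds the stated threshold.
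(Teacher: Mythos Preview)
Your energy and adjoint computations are essentially correct and mirror \pref{1C.LL.iso}, but the argument has a genuine gap at the closing step. After the model computation you arrive at
\[
\P(\{X(B_R(z))\ge Q\})\le e^{-\beta\Delta+C(1+\beta\rho^2)Q}\,(C\lambda^{-d})^{Q}\,\E\!\left[\binom{X(B_{R+\rho}(z))}{Q}\right],
\]
and you propose to invoke a ``preliminary bound'' $\E\bigl[\binom{X(B_{R+\rho}(z))}{Q}\bigr]\le (C\rho^d)^Q/Q!$. This is not the content of \pref{1C.LL.iso} (which only records the energy/adjoint estimates you already used), and a bound of this strength is essentially equivalent to the theorem you are trying to prove: it says the $Q$-point function on the enlarged ball is $O(1)$ pointwise, which is \tref{kpoint} and is itself proved \emph{using} \tref{1C.LL}. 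The ``widely-spaced grid'' idea does not obviously produce it either, since transporting to grid sites destroys the radial symmetry that makes the adjoint tractable and still leaves you needing control on where the remaining $N-Q$ particles sit. So the bootstrap is circular as written.

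The paper resolves this with an idea you are missing: an \emph{iterative multiscale} argument. Rather than bounding the factorial moment on the enlarged ball, one decomposes according to $X(B_{\lambda R}(z))$ and observes that on $\{X(B_R(z))=n\}\cap\{X(B_{\lambda R}(z))=m\}$ with $m\le\lambda^d n$, the combinatorial factor $\binom{m}{n}$ is at most $e^{Cn}\lambda^{dn}$, exactly cancelling the $\lambda^{-dn}$ from the adjoint. This yields the one-step estimate (\pref{1C.LL.iterate})
\[
\P(\{X(B_R(z))\ge Q\})\le \P(\{X(B_{\lambda R}(z))\ge \lambda^d Q\})+e^{-\beta\binom{Q}{2}(\g(2R)-\g(\lambda R/2))+C(1+\beta\lambda^2R^2)Q},
\]
which is then iterated along radii $R,\lambda R,\lambda^2 R,\ldots$; the first term eventually vanishes once $\lambda^{dk}Q>N$, and the hypothesis \eref{K.cond.1} is precisely what makes the error terms $e^{a_k}$ form a summable geometric series. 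The point is that no a priori density bound on any ball is ever needed; the iteration manufactures it.
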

	
	\begin{remark}
		The physical principles leading to the law \eref{JLM.pred} focus on the change of free energy between an unconstrained Coulomb gas and one constrained to have charge $Q$ in $B_R(z)$. For the constrained gas, the most likely particle configurations involve a build up of positive charge on an inner boundary layer of $B_R(z)$ and a near vacuum outside of $B_R(z)$ which ``screens" the excess charge. Since the negative charge density is bounded (in a jellium by definition and in $\P^{W,U}_{N,\beta}$ by $\Delta W \leq C$), the negative screening region must be extremely thick when $Q \gg R^d$. The self-energy of the negative screening region is the dominant contributor to the \eref{JLM.pred} bounds in \cite{JLM93}. In our proof, we apply an isotropic averaging operator that moves the particles within $B_R(z)$ to the bulk of the vacuum region, extracting a large average energy change per particle, thus providing a different perspective on the JLM law.
	\end{remark}
	
	\begin{remark}
		\tref{1C.LL} applies to small $\beta > 0$. In particular, one sees that charge excesses of order $T R^d$, $T \gg 1$, become unlikely as soon as $R \geq C^{-1}\beta^{-1/2}$. For this particular estimate type, \tref{1C.LL} therefore improves the minimal effective distance given in \cite[Theorem 1]{AS21} in dimensions $d = 2$ and $d \geq 5$ ($R \geq C\beta^{-1/2} (\log \beta^{-1})^{1/2}$ and $R \geq C \beta^{\frac{1}{d-2} - 1}$, respectively).
	\end{remark}
	
	\tref{1C.LL} immediately allows us to generalize \cite[Corollary 1.1]{AS21}, which established the existence of limiting microscopic point processes for $(x_1,\ldots,x_N)$ re-centered around a point $z$. Previous to the work of Armstrong and Serfaty, the existence of such a process was only known in $d=2$ and $\beta=2$, where it is the Ginibre point process with an explicit correlation kernel. In \cite{AS21}, the point $z$ must be in the droplet $\Sigma_N$ and a mesoscopic distance $CN^{\frac{1}{d+2}}$ distance from the edge of the droplet $\pa \Sigma_N$. We are able to lift that restriction, and in particular we can take $W = V_N$ and $z = z_N$ near or in $\pa \Sigma_N$, in which case one would expect a genuinely different limit than the bulk case.
	\begin{corollary} \label{c.point.process}
		For any sequence of points $z_N \in \R^d$, the law under $\P^{W,U}_{N,\beta}$ of the point process $\sum_{i=1}^N \delta_{x_i - z_N}$ converges weakly along subsequences as $N \to \infty$ to a simple point process.
	\end{corollary}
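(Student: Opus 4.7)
My plan is to establish tightness of the laws of $Y_N := \sum_{i=1}^N \delta_{x_i - z_N}$ on the Polish space of locally finite integer-valued point measures on $\R^d$ (vague topology), extract subsequential weak limits via Prokhorov's theorem, and then verify simplicity of any such limit. Both steps reduce quickly to \tref{1C.LL}, and the key feature of that theorem used here is its uniformity in the location of $z_N$, valid even at or outside the droplet boundary.

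\textbf{Tightness.} By the standard criterion for point processes on $\R^d$ (see e.g.\ Kallenberg), it suffices to show that for each fixed $R \geq 1$, the random integers $Y_N(B_R(0)) = X(B_R(z_N))$ form a tight family in $N$. Fixing $R$ and $\beta$, the condition \eref{K.cond.1} is met for all integers $Q$ larger than some finite $Q_0 = Q_0(\beta, R, d)$, and for such $Q$ \tref{1C.LL} yields $\P^{W,U}_{N,\beta}(X(B_R(z_N)) \geq Q) \leq \exp(-c Q^2)$ in $d \geq 3$ (with $c = c_{\beta,R,d}$), with the stronger $\exp(-c Q^2 \log Q)$ decay available in $d=2$ upon choosing $\lambda$ as a suitable constant. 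This super-Gaussian tail, uniform in $N$ and in the sequence $z_N$, gives tightness of $Y_N(B_R(0))$ and hence of $Y_N$.

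\textbf{Simplicity.} To show any subsequential weak limit $Y_\infty$ is almost surely simple, it suffices to prove that for each fixed $R > 0$, $\lim_{r \to 0} \sup_N \P^{W,U}_{N,\beta}(\exists\, i \neq j: x_i, x_j \in B_R(z_N),\ |x_i - x_j| < r) = 0$. Covering $B_R(z_N)$ by $O((R/r)^d)$ balls of radius $r$ reduces this, via a union bound, to a uniform bound on $\P^{W,U}_{N,\beta}(X(B_r(z)) \geq 2)$. The model computation of \sref{intro.model}, culminating in \eref{model3}, together with the $R=1$ instance of \tref{1C.LL} (which bounds the two-point phase-space factor $N^2 \P^{W,U}_{N,\beta}(x_1, x_2 \in B_2(z))$ by a constant uniformly in $z$), yields the cluster estimate $\P^{W,U}_{N,\beta}(X(B_r(z)) \geq 2) \leq C r^{2d} e^{-\beta \g(2r)}$. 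Summing over the cover produces a total bound of order $C_R\, r^d e^{-\beta \g(2r)}$, which vanishes as $r \to 0$ in every $d \geq 2$: polynomially (as $r^{2+\beta}$) in $d = 2$, and super-polynomially via $\exp(-\beta/(2r)^{d-2})$ in $d \geq 3$. The portmanteau theorem transfers this uniform bound to $Y_\infty$, giving simplicity.

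The argument is essentially soft once \tref{1C.LL} is in hand; the main conceptual obstacle is location-independence, i.e.\ the requirement that both invocations of \tref{1C.LL} (at scale $R$ for tightness, and at scale $1$ for the cluster estimate) be uniform in $z_N$ even outside the droplet. This is exactly the improvement of \tref{1C.LL} over the bulk-only \cite[Corollary 1.1]{AS21} and is what permits lifting the restriction on $z_N$ imposed there.
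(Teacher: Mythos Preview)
Your argument is correct and follows the paper's approach for the tightness step: both reduce to the uniform-in-$z$ tail bound of \tref{1C.LL} to get tightness of $X(B_R(z_N))$, and hence of the point process laws. The paper's own proof is a two-line appeal to \tref{1C.LL} plus \cite[Theorem 11.1.VII]{DV08} and does not spell out the simplicity of the limit; you supply this explicitly via the cluster bound of the model computation \eref{model3} combined with the $R=1$ case of \tref{1C.LL}, which is exactly the mechanism the paper later formalizes in \tref{1C.cluster}. So your proof is a faithful and slightly more complete rendering of the paper's argument, with the added simplicity step being the natural one (and indeed anticipated by the paper's remark that any limit inherits analogs of \tref{1C.LL} and \tref{kpoint}).
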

	\begin{proof}
		Tightness of the law of the finite dimensional distributions $(X(A_1),\ldots,X(A_n))$ for bounded Borel sets $A_1,\ldots,A_n$ (or for shifted versions of $X$) follows from \tref{1C.LL}. This implies weak convergence of the laws of the point processes (see \cite[Theorem 11.1.VII]{DV08}).
	\end{proof}
	\begin{remark}
		Any limit from \cref{point.process} will also enjoy analogs of \tref{1C.LL}, \tref{kpoint}, and \tref{fLL.over}.
	\end{remark}
	\subsection{Clustering and the $k$-point function.}\
	We have already seen in \sref{intro.model} that isotropic averaging can be applied to the description of the gas below the microscale, and we now state our full results. We are interested in pointwise bounds for the {\it $k$-point correlation function} $\rho_k$, defined by
	\begin{equation} \label{e.kpoint.def}
		\int_{A_1 \times A_2 \times \cdots \times A_k} \rho_k(y_1,y_2,\ldots,y_k) dy_1 \cdots dy_k = \frac{N!}{(N-k)!} \P^{W,U}_{N,\beta}\(\bigcap_{i=1}^k \{x_i \in A_i \}\)
	\end{equation}
	for measurable sets $A_1,A_2,\ldots,A_k \subset \R^d$.
	
	The functions $\rho_k$, and their truncated versions, are objects of intense interest. For example, in the physics literature, they are known to capture the charge screening behavior of the gas and satisfy sum rules and BBGKY equations \cite{GLM80,M88}. For $d=2$, spatial oscillations of $\rho_1$ are expected to occur for large enough $\beta$ \cite{CSA20, C06, CW03}. Starting at $\beta > 2$, the oscillations occur near the edge of the droplet, and as $\beta$ increases the oscillations penetrate the bulk of the droplet (numerically, it is present at $\beta = 200$) \cite{CSA20}. This phenomenon is part of a debated freezing or crystallization transition in the two-dimensional Coulomb gas \cite{KK16}.
	
	Many results on $\rho_k$ are known when integrated on the microscale or higher, though these results are often stated in terms of integration of the empirical measure $N^{-1}X$ against test functions. We will not comprehensively review previous results, but only mention that \cite{AS21} proves that $\int_{B_1(z)} \rho_1(y) dy$ is uniformly bounded in $N$ for $z$ sufficiently far inside the droplet.
	
	We will be interested in pointwise bounds on $\rho_k(y_1,\ldots,y_k)$, particularly when some of the $y_i$ within sub-microscopic distances of each other. One should see $\rho_k \to 0$ as $y_1 \to y_2$ due to the repulsion between particles. There are no previously existing rigorous results for pointwise values for general $\beta$; even boundedness of $\rho_1$ was until now unproved.
	
	\begin{theorem} \label{t.kpoint}
		We have that
		\begin{equation}
			\rho_1(y) \leq C
		\end{equation}
		for some constant $C$ independent of $N$ and $y$. We also have
		\begin{equation} 
			\rho_k(y_1,y_2,\ldots,y_k) \leq C \prod_{1 \leq i < j \leq k} (1 \wedge |y_i - y_j|^{\beta}) \quad \text{if } d=2
		\end{equation}
		and
		\begin{equation}
			\rho_k(y_1,y_2,\ldots, y_k) \leq C \exp\(-\beta \mcl H^0(y_1,\ldots,y_k) \) \quad \text{if } d\geq 3.
		\end{equation}
	\end{theorem}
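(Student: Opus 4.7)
My plan is to apply the model computation of \sref{intro.model} pointwise by considering the event $E_r = \bigcap_{i=1}^k \{x_i \in B_r(y_i)\}$ for small $r > 0$, and then pass to the limit $r \to 0$ to extract
\[
\rho_k(y_1, \ldots, y_k) = \lim_{r \to 0} \frac{N!/(N-k)!}{|B_r|^k} \P^{W,U}_{N,\beta}(E_r)
\]
(valid at Lebesgue-almost every point, which suffices by lower semicontinuity of the right-hand side in natural settings). For $\rho_1$, I take $\I = \{1\}$ and $\nu$ uniform on the unit annulus $\Ann_{[1/2, 1]}(0)$; since no pair lies inside $\I$, only the confining potential changes after averaging, giving $\Delta \geq -C$. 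The model computation yields $\P(x_1 \in B_r(y)) \leq C e^{C\beta} r^d \P(x_1 \in B_2(y))$, hence $\rho_1(y) \leq C e^{C\beta} \E^{W,U}_{N,\beta}[X(B_2(y))] \leq C$, the last inequality following by summing \tref{1C.LL} over integer $Q \geq 1$.

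For $k \geq 2$, I take $\I = \{1, \ldots, k\}$ with the same $\nu$ and focus on the lower bound for $\Delta = \inf_{X_N \in E_r} \bigl(\mcl H^{W,U}(X_N) - \Iso_{\I, \nu} \mcl H^{W,U}(X_N)\bigr)$. Pairs $\g(x_i - x_j)$ with neither index in $\I$ are unchanged; pairs with exactly one index in $\I$ decrease by the mean value inequality \eref{mvi}; and for pairs $i, j \in \I$, I use Newton's theorem to evaluate the two-fold annular average of $\g$ explicitly, obtaining $\Iso_{\{i,j\}, \nu} \g(x_i - x_j) \leq C$ whenever $|x_i - x_j| \leq 1$ and $\Iso_{\{i,j\}, \nu} \g(x_i - x_j) = \g(|x_i - x_j|)$ whenever $|x_i - x_j| \geq 2$. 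The potential $W$ contributes $\sum_{i \in \I}(\Iso W(x_i) - W(x_i)) \leq C k$ via $\Delta W \leq C$, and $U$ contributes non-negatively by its superharmonicity.

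Sending $r \to 0$, these estimates give, in $d = 2$, $\Delta \geq \sum_{i<j,\, |y_i - y_j| \leq 1}(-\log |y_i - y_j|) - C_k$, so that $e^{-\beta \Delta} \leq C_k \prod_{i<j}(1 \wedge |y_i - y_j|^\beta)$; in $d \geq 3$, using that $\g \geq 0$, $\Delta \geq \mcl H^0(y_1, \ldots, y_k) - C_k$ with $\mcl H^0 := \tfrac{1}{2} \sum_{i \neq j} \g(y_i - y_j)$, whence $e^{-\beta \Delta} \leq C_k e^{-\beta \mcl H^0(y_1, \ldots, y_k)}$. The other half of the model computation gives $\Iso_{\I, \nu}^* \1_{E_r}(X_N) \leq (C r^d)^k \prod_i \1_{B_2(y_i)}(x_i)$, so after multiplying by $N!/(N-k)!$, dividing by $|B_r|^k$, and sending $r \to 0$, it remains to verify the uniform bound $\int_{\prod_i B_2(y_i)} \rho_k(z)\, dz \leq \E\bigl[\prod_i X(B_2(y_i))\bigr] \leq C_k$, the last step using H\"older's inequality together with the uniform moment bound $\E[X(B_2(y))^k] \leq C_k$, itself a consequence of \tref{1C.LL}. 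The main obstacle is the sharp accounting of $\Delta$ in the transition regime $|y_i - y_j| \sim 1$: a careless estimate produces a factor $|y_i - y_j|^\beta$ that \emph{grows} with $|y_i - y_j|$; avoiding this requires invoking the exact Newton identity $\Iso_{\{i,j\}, \nu} \g = \g$ on disjoint annular supports, ensuring that pairs with $|y_i - y_j| > 2$ contribute no penalty and the desired caps $1 \wedge |y_i - y_j|^\beta$ (resp.\ the full $\mcl H^0$ in $d \geq 3$) emerge.
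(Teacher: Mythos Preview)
Your argument is correct and reaches the same conclusion, but it proceeds along a genuinely different route from the paper. The paper works \emph{iteratively}: it writes $\rho_k(y_1,\ldots,y_k) = \rho_{k-1}(y_1,\ldots,y_{k-1})\,\rho_1(y_k \mid y_1,\ldots,y_{k-1})$, observes that the conditional measure is again of the form $\P^{W,U'}_{N-k+1,\beta}$ with $U'$ superharmonic, and then applies isotropic averaging to the \emph{single} coordinate $x_k$ in this conditioned gas (\lref{kpoint.iso} and \pref{1point}). The repulsion from the frozen points $y_1,\ldots,y_{k-1}$ enters through the superharmonic piece $\sum_{i<k}\g(y_i-\cdot)$ inside $U'$, and the entropy bound $\P(\{x_k\in B_1(y_k)\})\leq C/(N-k+1)$ comes from applying \tref{1C.LL} to the conditioned gas. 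By contrast, you average all $k$ coordinates at once, read off the pair energy gain directly from Newton's theorem, and control the entropy factor via $\E\bigl[\prod_i X(B_2(y_i))\bigr]\leq C_k$ using H\"older and the moment bound from \tref{1C.LL}. Your approach is more direct and avoids the inductive bookkeeping; the paper's approach isolates the mechanism more cleanly as a single conditional one-point estimate, and makes explicit why the class $\P^{W,U}_{N,\beta}$ (closed under conditioning) is the natural setting. Both yield the same $e^{Ck^2}$ dependence of the constant and the same sharp exponents.
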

	
	The following bound on sub-microscopic particle clusters is easily derived by integrating \tref{kpoint}. We point out the enhanced $r^{2d-2}$ factor in \eref{cluster.centered.3.Q2}, which will be crucial for \tref{etak.tight}.
	
	\begin{theorem} \label{t.1C.cluster}
		Let $Q$ be a positive integer. There exists a constant $C$, dependent only on $\beta$, $Q$, and $\sup \Delta W$, such that for all $r > 0$ we have
		\begin{equation} \label{e.cluster.fixed.2}
			\P^{W,U}_{N,\beta}(\{X(B_r(z)) \geq Q \}) \leq C r^{2Q + \beta \binom{Q}{2}} \quad \text{if } d=2,
		\end{equation}
		and
		\begin{equation} \label{e.cluster.fixed.3}
			\P^{W,U}_{N,\beta}(\{X(B_r(z)) \geq Q \}) \leq Cr^{dQ} e^{-\frac{\beta}{2^{d-2}} \cdot \frac{1}{r^{d-2}}\binom{Q}{2}}\quad \text{if } d \geq 3.
		\end{equation}
		We also have for $Q \geq 2$ and $d = 2$ that
		\begin{equation} \label{e.cluster.centered.2}
			\P^{W,U}_{N,\beta}(\{X(B_r(x_1)) \geq Q \}) \leq C r^{2(Q-1) + \beta \binom{Q}{2}}
		\end{equation}
		and $d \geq 3$ that
		\begin{equation} \label{e.cluster.centered.3}
			\P^{W,U}_{N,\beta}(\{X(B_r(x_1)) \geq Q \})  \leq Cr^{d(Q-1)} e^{-\frac{\beta}{2^{d-2}} \cdot \frac{1}{r^{d-2}}\binom{Q-1}{2}} e^{-\beta \frac{1}{r^{d-2}} (Q-1)}.
		\end{equation}
		In the case of $Q=2$ and $d\geq 3$ this can be improved to
		\begin{equation} \label{e.cluster.centered.3.Q2}
			\P^{W,U}_{N,\beta}(\{X(B_r(x_1)) \geq 2 \})  \leq Cr^{2d-2} e^{-\frac{\beta}{r^{d-2}}}.
		\end{equation}
	\end{theorem}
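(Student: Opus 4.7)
The plan is to derive all five estimates by integrating \tref{kpoint}, combined with a union bound for fixed-center events and a Palm-type exchangeability identity for $x_1$-centered events. For the fixed-center estimates \eref{cluster.fixed.2}--\eref{cluster.fixed.3}, a standard union bound and the definition of $\rho_Q$ give
\[
	\P^{W,U}_{N,\beta}(X(B_r(z)) \geq Q) \leq \binom{N}{Q}\,\P^{W,U}_{N,\beta}\!\left(\bigcap_{i=1}^Q \{x_i \in B_r(z)\}\right) = \frac{1}{Q!}\int_{B_r(z)^Q} \rho_Q(y_1,\ldots,y_Q)\,dy.
\]
In $d=2$, I would apply \tref{kpoint} and rescale $y_i = z + rw_i$, $w_i \in B_1(0)$; for $r \leq 1/2$ each $1 \wedge |y_i - y_j|^\beta$ becomes $r^\beta |w_i - w_j|^\beta$, producing the prefactor $r^{2Q + \beta \binom{Q}{2}}$ with a Selberg-type residual integral that converges for all $\beta > 0$. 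In $d \geq 3$, all pairs in $B_r$ lie within distance $2r$, so $\mcl H^0 \geq 2^{-(d-2)} \binom{Q}{2} r^{-(d-2)}$; pulling this out of the exponential and combining with $|B_r|^Q = Cr^{dQ}$ gives \eref{cluster.fixed.3}.

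For the $x_1$-centered bounds \eref{cluster.centered.2}--\eref{cluster.centered.3}, the inequality $\1_{X(B_r(x_i)) \geq Q} \leq \binom{X(B_r(x_i))-1}{Q-1}$, expansion of the binomial into a sum over ordered distinct $(Q-1)$-tuples, and exchangeability, together with the factorial-moment identity for $\rho_Q$, yield
\[
	N \cdot \P^{W,U}_{N,\beta}(X(B_r(x_1)) \geq Q) \leq \frac{1}{(Q-1)!}\int_{\R^d} dy_1 \int_{B_r(y_1)^{Q-1}} \rho_Q(y_1,\ldots,y_Q)\,dy_2 \cdots dy_Q.
\]
I would then compute the inner integral exactly as in the fixed-center case with $y_1$ playing the role of $z$; in $d=2$ the identity $(Q-1)+\binom{Q-1}{2} = \binom{Q}{2}$ gives the prefactor $r^{d(Q-1) + \beta \binom{Q}{2}}$, while in $d \geq 3$ the $(Q-1)$ pairs touching $y_1$ contribute $e^{-\beta(Q-1)r^{-(d-2)}}$ and the remaining $\binom{Q-1}{2}$ pairs contribute $e^{-\beta 2^{-(d-2)}\binom{Q-1}{2}r^{-(d-2)}}$, times the volume $r^{d(Q-1)}$. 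The outer $y_1$-integration is made finite by appealing to the stronger pointwise bound $\rho_Q(y_1,\ldots,y_Q) \leq C\rho_1(y_1) \cdot (\text{product or exponential})$ implicit in the isotropic-averaging proof of \tref{kpoint} when $x_1$ is held fixed and only $x_2,\ldots,x_Q$ are averaged; then $\int_{\R^d} \rho_1(y_1)\,dy_1 = N$ cancels the leading factor of $1/N$.

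For the sharpened $Q=2$ estimate \eref{cluster.centered.3.Q2}, the crude bound $\mcl H^0 \geq r^{-(d-2)}$ wastes a factor of $r^{d-2}$ because $e^{-\beta|y_2-y_1|^{-(d-2)}}$ is concentrated in a thin shell near $|y_2 - y_1| = r$. Writing
\[
	\int_{B_r(0)} e^{-\beta|w|^{-(d-2)}}\,dw = Cr^d \int_0^1 e^{-\beta r^{-(d-2)} u^{-(d-2)}} u^{d-1}\,du,
\]
the exponent varies by $O(1)$ over a shell $|u-1| \sim r^{d-2}/\beta$, so a Laplace-type evaluation produces the extra $r^{d-2}$ and yields $Cr^{2d-2}e^{-\beta/r^{d-2}}$. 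The main technical obstacle is the refinement of \tref{kpoint} needed for the centered estimates: the stated uniform bound is not integrable over $y_1 \in \R^d$, so one must trace through the isotropic-averaging proof of \tref{kpoint} and factor out the one-point density of the distinguished variable, which should be immediate from the proof's structure.
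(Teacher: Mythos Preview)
Your proposal is correct and follows essentially the same route as the paper: integrate the $k$-point bound from \tref{kpoint} over $B_r(z)^Q$ for the fixed-center case, and for the $x_1$-centered case pass to the conditional gas $\P^{W,U_{x_1}}_{N-1,\beta}$ (your ``factor out $\rho_1(y_1)$'' is exactly the paper's use of the conditional $(Q-1)$-point function $\rho_{Q-1}(\cdot\,|\,y_1)$ via \eref{kpoint.iter}). The only minor simplifications in the paper are that it bounds $\rho_Q$ in $L^\infty$ rather than doing the full Selberg rescaling in $d=2$, it takes $\sup_{y_1}$ of the conditional probability directly rather than integrating $\rho_1$ over $\R^d$, and for \eref{cluster.centered.3.Q2} it replaces your Laplace heuristic by the exact bound $\int_0^1 s^{d-1}e^{-\alpha s^{-(d-2)}}\,ds \leq \int_0^1 s^{-(d-1)}e^{-\alpha s^{-(d-2)}}\,ds = e^{-\alpha}/((d-2)\alpha)$.
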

	
	We remark that the $k=1$ and $Q=1$ cases of \tref{kpoint} and \tref{1C.cluster}, respectively, are instances of {\it Wegner estimates}. In the context of $\beta$-ensembles on the line, Wegner estimates were proved in \cite{BMP21} and for Wigner matrices in \cite{ELS10}. The $Q=2$ cases of \tref{1C.cluster} are {\it particle repulsion estimates}. These estimates, as well as eigenvalue minimal gaps, have been considered for random matrices in many articles, e.g. \cite{NTV17,T13,TV11,EKYY12}.
	
	\begin{remark} We claim that our results in \tref{kpoint} are essentially optimal and that \tref{1C.cluster} is optimal if $d=2$ or $Q=2$. For $d,Q \geq 3$, one can improve \tref{1C.cluster} by more carefully integrating \tref{kpoint}, though an optimal, explicit solution for all $Q$ would be difficult. Our claim is evidenced by the tightness results we prove in \tref{etak.tight} and by computations for merging points with fixed $N$.
	\end{remark}
	\subsection{Minimal particle gaps.}\
	We will also study the law of the $k$th smallest particle gap $\eta_{k}$, i.e.\
	\begin{equation} \label{e.etak.def}
		\eta_k(X_N) = \text{the}\ k\text{th smallest element of}\ \{|x_i - x_j| \ : \ i,j \in \{1,\ldots,N\}, i \ne j\}.
	\end{equation}
	Note that the particle gaps $|x_i - x_j|$, $i \ne j$, are almost surely unique under $\P^{W,U}_{N,\beta}$.
	
	Previously, the order of $\eta_1$ was investigated dimension two in \cite{A18} and \cite{AR22}. The latter article proves that $\eta_1 \geq (N \log N)^{-\frac{1}{\beta}}$ with high probability as $N \to \infty$ for all $\beta > 1$. It is also proved that $\eta_1 > C^{-1}$ with high probability if $\beta = \beta_N$ grows like $\log N$. This suggests that the gas ``freezes", even at the level of the extremal statistic $\eta_1$, in this temperature regime.
	
	We remark that \tref{1C.cluster} already improves this bound.
	\begin{corollary} \label{c.eta1.bd.d2}
		Let $d=2$. We have
		\begin{equation} \label{e.eta1.bd.d2}
			\P^W_{N,\beta}(\{ \eta_1 \leq \gamma N^{-\frac{1}{2+\beta}} \}) \leq C \gamma^{2 + \beta} \quad \forall \gamma > 0
		\end{equation}
		for a constant $C$ independent of $N$.
	\end{corollary}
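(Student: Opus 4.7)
The plan is to reduce the event $\{\eta_1 \leq r\}$ to a union of events of the form $\{X(B_r(x_i)) \geq 2\}$ for each particle index $i$, then apply the centered clustering estimate \eref{cluster.centered.2} with $Q=2$ and exchangeability of the particles.

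More precisely, first I observe that by the definition of $\eta_1$,
\begin{equation*}
\{\eta_1 \leq r\} = \bigcup_{i=1}^N \{X(B_r(x_i)) \geq 2\},
\end{equation*}
since $\eta_1 \leq r$ if and only if some particle $x_i$ has another particle within distance $r$ of it. Applying a union bound and using the symmetry of $\P^W_{N,\beta}$ under permutations of the labels, I get
\begin{equation*}
\P^W_{N,\beta}(\{\eta_1 \leq r\}) \leq N \, \P^W_{N,\beta}(\{X(B_r(x_1)) \geq 2\}).
\end{equation*}

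Next, I invoke \eref{cluster.centered.2} in the case $Q=2$, $d=2$ (applied with $U \equiv 0$), which gives
\begin{equation*}
\P^W_{N,\beta}(\{X(B_r(x_1)) \geq 2\}) \leq C r^{2 + \beta}.
\end{equation*}
Combining this with the previous display yields $\P^W_{N,\beta}(\{\eta_1 \leq r\}) \leq C N r^{2+\beta}$. Finally, substituting $r = \gamma N^{-1/(2+\beta)}$ makes the $N$ factor cancel, leaving the bound $C \gamma^{2+\beta}$ as desired.

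Honestly, there is no real obstacle here: once \tref{1C.cluster} is in hand, this is a one-line union bound, and the scaling $N^{-1/(2+\beta)}$ is dictated by setting the right-hand side of $N r^{2+\beta}$ equal to a constant. The only mild subtlety is remembering that centering the ball at $x_1$ (rather than fixing a deterministic $z$) costs one factor of $r^2$ of phase space, which is exactly what turns the exponent $4+\beta$ in \eref{cluster.fixed.2} into the correct $2+\beta$ here and produces the stated scaling $N^{-1/(2+\beta)}$ rather than $N^{-1/(4+\beta)}$.
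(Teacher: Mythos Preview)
Your proof is correct and matches the paper's own argument exactly: the paper's proof simply cites \eref{cluster.centered.2} with $Q=2$ and $r = \gamma N^{-1/(2+\beta)}$ together with a union bound, which is precisely what you have spelled out. The only negligible technicality is that $\{\eta_1 \leq r\}$ should really be contained in the union over closed balls (since $B_r(x_i)$ is open), but this discrepancy has probability zero under the absolutely continuous law $\P^W_{N,\beta}$.
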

	\begin{proof}
		This follows from \tref{1C.cluster}, specifically \eref{cluster.centered.2} with $Q = 2$ and $r = \gamma N^{-\frac{1}{2+\beta}}$, and a union bound.
	\end{proof}
	Furthermore, an examination of the proof's dependence on $\beta$ shows we can let $\beta = \beta_N = c_0 \log N$ and take a constant $C$ equal to $e^{C\beta} = N^{C}$ on the RHS of \eref{eta1.bd.d2}. Letting $\gamma = N^{1/(2+\beta)} \gamma'$ for a small enough $\gamma' > 0$ in \cref{eta1.bd.d2} shows that $\eta_1$ is bounded below by a constant independent of $N$ with high probability, offering an alternative proof of the freezing regime identified in \cite{A18, AR22}. An identical idea using \eref{cluster.centered.3} in place of \eref{cluster.centered.2} proves that the gas freezes in dimension $d \geq 3$ in the $\log N$ inverse temperature regime as well.
	
	It is natural to wonder whether \cref{eta1.bd.d2} is sharp in some sense and whether there are versions for $\eta_k$ and $d \geq 3$. We will give an affirmative answer to these questions.
	
	For the eigenvalues of random matrices, the law of $\eta_k$ has been of great interest. While one dimensional (for the most studied cases), these models are particularly relevant to the $d=2$ case because the interaction between eigenvalues is also given by $\g = -\log$ for certain ensembles. In \cite{BAB13}, the authors prove for the CUE and GUE ensembles that $N^{4/3} \eta_1$ is tight and has limiting law with density proportional to $x^{3k-1} e^{-x^3}$. Note that the interstitial distance is order $N^{-1}$ for these ensembles, whereas for us it will be order $1$. The proof uses determinantal correlation kernel methods. Recently, the extremal statistics of generalized Hermitian Wigner matrices was studied by Bourgade in \cite{B22} with dynamical methods, proving that $\eta_1$ is also of order $N^{-4/3}$ and the rescaled limiting law is identical to the GUE case. For a symmetric Wigner matrix, \cite{B22} proves that the minimal gap is of order $N^{-3/2}$.
	
	The articles \cite{FW21, FTW19} prove even more detailed results on the minimal particle gaps for certain random matrix models. Specifically, for the GOE ensemble and the circular $\beta$ ensemble with positive integer $\beta$, the joint limiting law of the minimal particle gaps is determined. Actually, convergence of a point process containing the data of the minimal particle gaps and their location is proved, showing in particular that the gap locations are asymptotically Poissonian. The methods crucially rely on certain exact identities unavailable in our case.
	
	The only optimal $d=2$ result is in \cite{SJ12}. Here, the authors consider the $k$th smallest eigenvalue gap of certain normal random matrix ensembles. The Ginibre ensemble, after rescaling the eigenvalues by $N^{1/2}$, corresponds to $\P^{V_N}_{N,\beta}$ for a certain quadratic $V$ and $\beta = 2$. Using determinantal correlation kernel methods, \cite{SJ12} proves that, for the Ginibre ensemble, $N^{\frac{1}{4}} \eta_k$ is tight as $N \to \infty$ (with interstitial distance $O(1)$), and its limiting law on $\R$ has density proportional to $x^{4k-1} e^{-x^4} dx$. \tref{etak.tight} extends the tightness result to all $\beta$ and general potential $V$.
	
	For our main result, we will take $W = V_N$ and require some extra assumptions on $V$. These assumptions are only used to prove \lref{lonely}, which is used in proving lower bounds on $\eta_k$. In particular, our upper bounds on $\eta_k$ below hold for $\P^{W,U}_{N,\beta}$ in full generality. For \tref{etak.tight}, we require
	\begin{equation} \tag{A1} \label{e.A1}
		\lim_{|x| \to \infty} V(x) + \g(x) = +\infty \quad \text{and} \quad \int_{\R^d} e^{-\beta (V(x) - \log(1+|x|))} dx < \infty \quad \text{if }d=2,
	\end{equation}
	\begin{equation} \tag{A2} \label{e.A2}
		\exists \ep > 0 \quad \liminf_{|x| \to \infty} \frac{V(x)}{|x|^\ep} > 0 \quad \text{if } d \geq 3.
	\end{equation}

	\begin{theorem} \label{t.etak.tight}
		Let $W = V_N$ with $\Delta W \leq C$, in $d=2$ condition \eref{A1} satisfied, and in $d \geq 3$ condition \eref{A2} satisfied. Then, in $d=2$ the law of $N^{\frac{1}{2+\beta}} \eta_k$ is tight as $N \to \infty$. Moreover, we have
		$$
		\limsup_{N \to \infty} \P^{V_N}_{N,\beta}(\{N^{\frac{1}{2+\beta}} \eta_k \leq \gamma\}) \leq C \gamma^{k(2 + \beta)},
		$$
		$$
		\limsup_{N \to \infty} \P^{V_N}_{N,\beta}(\{N^{\frac{1}{2+\beta}} \eta_k \geq \gamma\}) \leq C\gamma^{-\frac{4+2\beta}{4+\beta}}
		$$
		for all $\gamma > 0$.
		
		In $d\geq 3$, let $Z_k$ be defined by
		$$
		\eta_k = \(\frac{\beta}{\log N} \)^{\frac{1}{d-2}} \( 1 + \frac{2d-2}{(d-2)^2}  \frac{ \log \log N}{\log N} + \frac{Z_k}{(d-2) \log N}  \).
		$$
		Then the law of $Z_k$ is tight as $N \to \infty$. We have
		$$
		\limsup_{N \to \infty} \P^{V_N}_{N,\beta}(\{Z_k \leq -\gamma\}) \leq C e^{-k \gamma},
		$$
		$$
		\limsup_{N \to \infty} \P^{V_N}_{N,\beta}(\{Z_k \geq \gamma\}) \leq C e^{- \frac12 \gamma}
		$$
		for $\gamma > 0$.
	\end{theorem}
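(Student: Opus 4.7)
The theorem gives quantitative upper bounds on both tails of the rescaled $k$-th minimal gap $\eta_k$, from which tightness is immediate. The proof is a two-sided analysis: the lower tail bound on $\P(\eta_k \leq \gamma N^{-1/(2+\beta)})$ follows from a union bound built on \tref{1C.cluster}, while the upper tail bound on $\P(\eta_k \geq \gamma N^{-1/(2+\beta)})$ requires an inverted model computation using the ``mimicry'' operator of \pref{createpairs} together with an auxiliary lower bound on the number of ``lonely'' particles (\lref{lonely}).

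For the lower tail, the event $\{\eta_k \leq r\}$ forces at least $k$ pairs $\{i,j\}$ to satisfy $|x_i - x_j| \leq r$. These $k$ pairs may share indices, but any shared-index configuration (a triangle or larger cluster) is penalized more severely because the $\rho_m$ bound from \tref{kpoint} (with $m$ the number of distinct indices) carries extra repulsion factors. The dominant case is therefore $k$ disjoint pairs. Summing \eref{cluster.centered.2} over the $\sim N^k/(2^k k!)$ disjoint $k$-tuples yields $\P(\eta_k \leq r) \leq C N^k r^{k(2+\beta)}$ in $d=2$, which at $r = \gamma N^{-1/(2+\beta)}$ gives the advertised $\gamma^{k(2+\beta)}$ tail. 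In $d \geq 3$ the analogous strategy with \eref{cluster.centered.3.Q2} produces $(N r^{2d-2} e^{-\beta/r^{d-2}})^k$; the three-term expansion defining $Z_k$ is chosen precisely so that $N r^{2d-2} e^{-\beta/r^{d-2}} \sim e^{-\gamma}$, yielding the $e^{-k\gamma}$ decay after raising to the $k$-th power.

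For the upper tail, the plan is to force the existence of at least $k$ small gaps. First, \lref{lonely} --- which uses the growth assumptions \eref{A1}, \eref{A2} to control mass at infinity and to guarantee a uniform particle population across a macroscopic region --- provides, with probability bounded below uniformly in $N$, many ``lonely'' particles each having a designated neighbor at a controllable intermediate scale $R$. Second, the mimicry operator of \pref{createpairs} shifts each such neighbor into a sub-microscopic ball of radius $r = \gamma N^{-1/(2+\beta)}$ around the lonely particle, creating $k$ close pairs. Running the model computation with this operator in place of $\Iso_{\I,\nu}$, Newton's theorem bounds the energy change per pair by $\Delta \approx \g(r) - \g(R)$; optimizing $R$ against the phase-space cost yields the exponent $(4+2\beta)/(4+\beta)$ in $d=2$ (where the optimization balances a logarithm against a polynomial) and $e^{-\gamma/2}$ in $d \geq 3$.

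The main obstacle is the sharp construction of the mimicry operator: its adjoint must be $L^\infty$-bounded so that the model computation closes cleanly, and the energy change $\Delta$ must be estimated without losing constants in order to produce the precise exponents of the theorem --- any slack propagates through the $R$-optimization. A secondary difficulty is the quantitative version of \lref{lonely}: enough lonely particles must be extractable at the correct intermediate scale uniformly in $N$, including near $\pa \Sigma_N$ where the equilibrium measure has non-trivial structure, which is why assumptions \eref{A1} and \eref{A2} are required. Once both tails are controlled with the right exponents, tightness of $N^{1/(2+\beta)}\eta_k$ (and of $Z_k$ in $d \geq 3$) follows immediately from the decay in $\gamma$.
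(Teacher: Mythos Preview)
Your architecture is right---lower tail via clustering, upper tail via mimicry plus \lref{lonely}---and matches the paper's decomposition into \pref{etak.isbig} and \pref{etak.issmall}. But two descriptions in the upper-tail half are inverted in a way that would derail the argument if written as stated.

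First, \lref{lonely} does not ``provide many lonely particles.'' It gives an \emph{upper} bound $\P(L_{1,R}) \leq CR^{-d} + CN^{-1/d}\sqrt{\log N}$ on the probability that $x_1$ has no neighbor within distance $R$. Its role is the opposite of what you wrote: it guarantees that with high probability $x_1$ \emph{does} have a nearest neighbor $x_{\phi(1)}$ inside $B_R(x_1)$, and it is this neighbor that the mimicry operator then pulls into $B_r(x_1)$. Second, mimicry does not ``create $k$ close pairs''---it creates exactly one. The event $\{\eta_k \geq r\}$ is $T_{k-1,r}$ (at most $k-1$ pairs closer than $r$); the argument in \pref{createpairs} transports $x_{\phi(1)}$ next to $x_1$, compares energies (cost $\leq \g(r/2) - \g(R) + O(r^2)$), and lands in the event $\{|x_1-x_2|<r\}\cap T'_{k-1,r}$. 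The crucial $1/N$ gain then comes from exchangeability: on the image side the newly created close pair is one of at most $O(k)$ such pairs, so the specific label ``$1$'' occurs with probability $O(k/N)$. There is no simultaneous creation of $k$ pairs, and the $k$-dependence in the upper-tail bound enters only through this bookkeeping term $\frac{2k-2}{N}$, not through a $k$-fold mimicry.

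On the lower tail your sketch is essentially correct, but ``summing \eref{cluster.centered.2} over $\sim N^k$ tuples'' is not a literal union bound over independent events. The paper first excises the event $S$ that some $B_r(x_i)$ contains three particles (controlled by \eref{3cluster.bd}), so that on $S^c$ the $k$ minimal pairs are index-disjoint; it then fixes labels by exchangeability and \emph{conditions iteratively}, applying the $Q=2$ clustering bound to the conditional gas $\P^{W,U_{k-1}}_{N-2k+2,\beta}$ at each step. Without this conditioning the product structure $(\cdot)^k$ does not follow.
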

	\begin{proof}
		The theorem follows easily from combining \pref{etak.isbig} and \pref{etak.issmall} from \sref{etak}.
	\end{proof}
	\begin{remark}
		The estimates for $\eta_k$ in \tref{etak.tight} can be understood by the following Poissonian ansatz. Consider $x_i$, $i=1,\ldots,N$, to be an i.i.d.\ family with $x_1$ having law with density proportional to $e^{-\beta \g(x)} dx$ on $B_1(0) \subset \R^d$. If we let $\eta_k$ be the $k$th smallest element of $\{|x_i| : i =1,\ldots,N\}$, then the order of $\eta_k$ agrees with \tref{etak.tight}.
	\end{remark}
	
	The proof of the upper bounds on $\eta_k$ in \tref{etak.tight} is an application of the ideas from the model computation in \sref{intro.model}, except with some more precision. The proof of the lower bounds can be understood via an idea related to isotropic averaging that we term ``mimicry." See the beginning of \sref{etak} for a high-level discussion of the proof.

	\subsection{Discrepancy and incompressibility bounds.} \label{s.intro_discrepancy}\
	Our previously mentioned results all concerned either sub-microscopic length scales or high particle densities, but we can in fact effectively use isotropic averaging on mesoscopic length scales and under only slight particle density excesses. We are interested in the discrepancy
	\begin{equation}
		\Disc(\Omega) = X(\Omega) - N\mueq(N^{-\frac{1}{d}} \Omega), \quad \Omega \subset \R^d
	\end{equation}
	which measures the deficit or excess of particles with reference to the equilibrium measure in a measurable set $\Omega$. It is also useful to define the {\it compression}
	\begin{equation}
		\Disc_{W}(\Omega) = X(\Omega) - \frac{1}{c_d} \int_{\Omega} \Delta W(x) dx
	\end{equation}
	where $c_d$ is such that $\Delta \g = -c_d \delta_0$. Note that $\Disc$ and $\Disc_W$ agree when $\Omega \subset \Sigma_N := N^{\frac1d}\supp \ \mueq$ and the equilibrium measure exists.
	
	In \cite{AS21}, it is proved that the discrepancy in $\Omega = B_R(z)$ is typically not more than $O(R^{d-1})$ in size whenever $z$ is sufficiently far in the interior of the droplet and $R$ is sufficiently large. The proof involves a multiscale argument inspired by stochastic homogenization \cite{AJM19}, as well as a technical screening procedure. More generally, \cite{AS21} gives local bounds on the electric energy, which provide a technical basis for central limit theorems \cite{S22} among other things. The upper bound of $O(R^{d-1})$ represents that the dominant error contribution within the argument comes from surface terms appearing in the multiscale argument. The JLM laws \cite{JLM93} predict that the discrepancy in $B_R(z)$ is actually typically of size $O(R^{(d-1)/2})$; in particular, the Coulomb gas is {\it hyperuniform}. This motivates the search for other methods to prove discrepancy bounds that, with enough refinement, may be able to overcome surface error terms.
	
	A motivation for studying the compression $\Disc_W$ comes from fractional quantum Hall effect (FQHE) physics (see \cite{R22,R22_2} and references therein for a more comprehensive discussion). In FQHE physics, one considers wave functions $\Psi_F : \mbb C^N \to \mbb C$ of the form (considering $x_i \in \mbb C \cong \R^2$)
	\begin{equation}
		\Psi_F(X_N) = F(X_N) \Psi_{\mathrm{Lau}}(X_N), \quad \Psi_{\mathrm{Lau}}(X_N) \propto \prod_{1 \leq i < j \leq N} (x_i - x_j)^\ell e^{-B\sum_{i=1}^N |x_i|^2/4 }
	\end{equation}
	for integers $\ell \geq 1$, symmetric, analytic functions $F : \mbb C^N \to \mbb C$, and magnetic field strengths $B > 0$. All wave functions are assumed to be $L^2(\mbb C^N)$ normalized. This means that $|\Psi_F(X_N)|^2 dX_N = \P^{W,U}_{N,\beta}(dX_N)$ for special values $W = W_{B,\ell}$, $\beta =  \ell$, and $$U(X_N) = -\beta^{-1} \log |F(X_N)|^2.$$ In particular, the function $U$ is superharmonic in each variable since $F$ is analytic. The connection between the {\it Laughlin wave function} $\Psi_{\mathrm{Lau}}$ and the Coulomb gas is termed the {\it plasma analogy}.
	
	When studying the robustness of FQHE under e.g.\ material impurities, a physically relevant variational problem is to find $F$ minimizing the functional
	\begin{equation}
	E_{\mcl O}(N,B,\ell) = \inf_{F} \E^{W,U}_{N,\beta} \left[ \sum_{i=1}^N \mcl O(x_i) \right] = \inf_F \int_{\R^2} \mcl O(x) \rho_{1,F}(x)
	\end{equation}
	for a certain $\mcl O : \R^2 \to \R$ giving a one-body energy associated to material impurities and trapping. Here $\P^{W,U}_{N,\beta}(dX_N) = |\Psi_F(X_N)|^2 dX_N$ as in the plasma analogy and $\rho_{1,F}$ is the $1$-point function of $\P^{W,U}_{N,\beta}$. It is expected that the infimum within $E_{\mcl O}(N,B,\ell)$ is approximately achieved as $N \to \infty$ with $F$ of the form $F(X_N) = \prod_{i=1}^N f(x_i)$ for an analytic function $f$. Such a factorization is important physically as it indicates the presence of uncorrelated ``quasi-holes" at the zeros of $f$, a remarkably simple system when compared to the those with nontrivial couplings between particles and quasi-holes.
	
	Lieb, Rougerie, and Yngvason, in the main result of \cite{LRY19}, proved $\rho_{1,F} \leq c_d^{-1} \Delta W (1+o(1))$ as $N \to \infty$ when integrated on scales of size $N^{1/4 + \ep}$ for $\ep > 0$. In other words, we have
	\begin{equation*}
		\Disc_W(B_R(z)) \leq o(1) \quad \text{as } N \to \infty
	\end{equation*}
	for $R \geq N^{1/4 + \ep}$ with high probability. Such a result is called an {\it incompressibility estimate}. A consequence is that if $\mcl O$ varies on a scale larger than $N^{1/4}$, then we have that the ``bathtub" energy
	\begin{equation} \label{e.bt.energy}
		E_{\mcl O}^{\mathrm{bt}}(N,B,\ell) = \inf \left \{ \int_{\R^2} \mcl O(x) \rho(x) \ : \ \int_{\R^2} \rho(x) dx = N, \ 0 \leq \rho \leq c_d^{-1} \Delta W \right \}
	\end{equation}
	is an approximate lower bound for $E_{\mcl O}(N,B,\ell)$. The infimum in \eref{bt.energy} is over $\rho$ that do not necessarily come as a $1$-point function for some $\P^{W,U}_{N,\beta}$. The restriction on the length scale of $\mcl O$ does not capture all physically realistic scenarios. To prove that $F(X_N) = \prod_{i=1}^N f(x_i)$ approximately saturates the infimum in $E_{\mcl O}(N,B,\ell)$, the remaining task is to show that a set of profiles $\rho$ saturating the infimum in the bathtub energy are well approximated by a $1$-point function $\rho_{1,F}$ with $F$ of the factorized form, a task that was considered in \cite{RY18, OR20}.
	
	We present a new method to prove incompressibility down to large microscopic scales, i.e.\ for $R \gg 1$, and also to give quantitative estimates for $o(1)$ terms. Our result is also interesting because it gives density upper bounds on balls $B_R(z)$ with weaker restrictions on the location of $z$ than in \cite{AS21}. For technical reasons appearing in \pref{energymin}, we will need to approximate the density of $\mu$ by a constant in $B_{2R}(z)$, and the resulting error begins to dominate for $R \geq N^{3/(5d)}$. We have therefore chosen to restrict to small enough mesoscales $R$. For FQHE applications, one has that $\mu$ is a multiple of Lebesgue measure and so this restriction is unnecessary.
	
	\begin{theorem}[Incompressibility] \label{t.fLL.over}
		Let $R \geq 1$ and $z \in \R^d$. Suppose either $\Delta W$ is constant in $B_{2R}(z)$ or both of the following: firstly that
		$$
		\| \nab \Delta W \|_{L^\infty(B_{2R}(z))} \leq CN^{-1/d},
		$$
		and secondly $R \leq N^{3/(5d)}$. Assume also $\Delta W(x) \geq C^{-1}$ for $x \in B_{2R}(z)$. Then we have
		\begin{equation} \label{e.fLL.over}
			\P^{W,U}_{N,\beta}(\{\Disc_W(B_R(z)) \geq T R^{d - 2/3}(1+\1_{d=2}\log R) \}) \leq e^{-c R^d T} + e^{-c R^{(d+2)/3} T^2} + e^{- R^{d+2}}
		\end{equation}
		for some $c > 0$ for all $T \geq 1$ large enough.
	\end{theorem}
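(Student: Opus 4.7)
The plan is to realize the model computation of \sref{intro.model} at the mesoscopic scale $R$, applied to the event $E = \{\Disc_W(B_R(z)) \geq Q\}$ with $Q = TR^{d-2/3}(1+\1_{d=2}\log R)$. I would first condition on the set $\I = \mbb X(B_R(z))$ and its size $n = |\I|$; on $E$ we have $n \geq c_d^{-1}\int_{B_R(z)} \Delta W + Q$, so by a union bound over $\I$ it suffices to control $\P^{W,U}_{N,\beta}(\{\I \subset \mbb X(B_R(z))\})$ for any admissible $\I$ and then sum.

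To each such event I would apply an isotropic averaging operator $\Iso_{\I,\nu}$ where $\nu$ is a rotationally symmetric probability measure of characteristic displacement radius $s$ to be optimized. Taking $s$ small makes the spreading cost (bounded above by $C|\I|s^2\sup \Delta W$ from the $W$-term) small but reduces the Newton-theorem energy gain $\Delta$; taking $s$ too large eventually pushes particles outside $B_{2R}(z)$, where we no longer control the background. The crucial quantitative input is \pref{energymin}, which compares the interaction energy of $n$ point charges in $B_R(z)$ against a reference fluid of density $c_d^{-1}\Delta W$ on $B_{2R}(z)$. Since \pref{energymin} requires a near-constant $\Delta W$, this is exactly what forces the hypothesis $\|\nab\Delta W\|_\infty \lesssim N^{-1/d}$ and the mesoscale bound $R \leq N^{3/(5d)}$. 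Invoking it allows $\Delta$ to be bounded below by a sum of two contributions: a collective, quadratic-in-$Q$ self-energy of the charge excess, and a sub-dominant per-particle spreading cost. The superharmonic $U$ contributes no increase, as in the model computation.

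After balancing $s$ against $Q$ and $n$, the phase-space bound $\|\Iso^\ast_{\I,\nu}\1_{E}\|_\infty \lesssim s^{-dn}$ combined with a union bound over admissible $\I$ (whose size fluctuations in $B_{R+s}(z)$ are controlled by \tref{1C.LL}) produces the three stated terms: $e^{-cR^{(d+2)/3}T^2}$ from the collective quadratic gain, $e^{-cR^d T}$ from the linear per-particle cost of moving a full bulk's worth of charges, and $e^{-R^{d+2}}$ absorbing the tail contribution from configurations with anomalously large $|\I|$ via \tref{1C.LL}. The factor of $(1+\1_{d=2}\log R)$ in the threshold $Q$ reflects the logarithmic Coulomb kernel in dimension two and appears naturally in the $d=2$ Newton computation for $\Delta$.

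The main obstacle is step (iii), the sharp lower bound on $\Delta$. One needs to separate cleanly the quadratic collective gain from the linear per-particle cost, without losing constants in a way that would spoil the announced $TR^{d-2/3}$ threshold. The invocation of \pref{energymin} must be made with a sufficiently accurate constant-density approximation of $\Delta W$ on $B_{2R}(z)$, and the choice of $s$ must be large enough that Newton's theorem controls the pairwise overlaps effectively, yet small enough that $|\I|s^2 \lesssim R^{d-2/3}$. Threading this balance is the technical heart of the proof and is what distinguishes the present isotropic-averaging route from the multiscale approach of \cite{AS21}, buying robustness and locality at the price of a weaker (but still non-trivial) discrepancy exponent.
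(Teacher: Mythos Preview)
Your outline has the right skeleton (isotropic averaging on $\mbb X(B_R(z))$, invoking \pref{energymin}, constant-$\mu$ approximation forcing the hypotheses) but two concrete steps are wrong or missing and would prevent the argument from closing.

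First, the phase-space estimate $\|\Iso^\ast_{\I,\nu}\1_E\|_\infty \lesssim s^{-dn}$ is incorrect in this regime. That bound holds in the sub-microscopic model computation because particles are confined to a ball much smaller than the smearing scale. Here the smearing scale $s$ is much smaller than $R$, so $\Iso^\ast_{\I,\nu}\1_{\{\I \subset \mbb X(B_R(z))\}}$ is simply bounded by $\1_{\{\I \subset \mbb X(B_{R+s}(z))\}}$, with no volume factor at all. Relatedly, your separate bound on the $W$-cost by $C|\I|s^2$ is far too crude: with $|\I|\sim R^d$ it would force $s \lesssim R^{-1/3}$, at which scale there is essentially no Newton-theorem gain. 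The paper avoids this by treating the interaction gain and the $W$-cost \emph{together} through the functional $\mcl E_r(\cdot;q)$, so that the background $\mu$ cancels the bulk of the smearing cost and only the excess density $\rho$ produces the net gain $\sim r^2\rho\, X(\Omega)$. This is what permits the choice $r = R^{1/3}$.

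Second, you are missing the boundary-layer conditioning, and this is where your attribution of the $e^{-cR^{(d+2)/3}T^2}$ term goes wrong. The energy comparison in \lref{twostep.energy.avg} produces error terms proportional to $X(\Omega\setminus\Omega_{-3r})$, which can swamp the main gain unless one first restricts to the event $\{X(\Omega_{5r}\setminus\Omega_{-5r}) \leq MR^{d-1}r\}$ for a suitable $M$. The paper splits accordingly (the event $E_{\rho,r,M}$ of \eref{Erho.def}): on this event \pref{fLL.Delta} delivers $\Delta \geq C^{-1}r^2\rho\,X(\Omega)$, yielding $e^{-cR^dT}$; the complementary boundary-layer event is handled by covering $\Omega_{5r}\setminus\Omega_{-5r}$ with balls of radius $r$ and applying \tref{1C.LL}, which produces the $e^{-cR^{(d+2)/3}T^2}$ term. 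It does not come from a quadratic self-energy gain. The $e^{-R^{d+2}}$ term indeed comes from \tref{1C.LL} applied to the very-high-density regime, as you say. A further technical point you omit is the two-step mollifier $\psi_r = \phi_r\ast\phi_1$, needed to renormalize the point self-energies before passing to the continuum functional $\mcl E_r$.
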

	
	\tref{fLL.over} affirmatively answers an important question posed in \cite{LRY19}, demonstrating the remarkable property that the Coulomb gas cannot be significantly compressed beyond density $c_d^{-1}\Delta W$ by {\it any} choice of superharmonic perturbation $U$. The importance of incompressibility of the Laughlin phase was first raised in \cite{RY15_1} and first progress was made in \cite{RY15_2}. Further progress and the previously best result appears in \cite{LRY19}. There, the authors transfer $\beta = \infty$ incompressibility estimates to the positive temperature system, whereas we work directly with the positive temperature Gibbs measure. We also avoid completely the use of potential theoretic subharmonic quadrature domains (also termed ``screening regions"). We expect our result will have significant applications toward proving the stability of the Laughlin phase, which appears in the study of 2D electron gases and rotating Bose gases \cite{RSY14}.

	\begin{remark}
		\tref{fLL.over} may, at first, seem to be in disagreement with numerical results showing oscillations in $\rho_1$ at the edge of the droplet \cite{CSA20, C06, CW03} in $d=2$, with $\sup \rho_1$ significantly larger than $(2\pi)^{-1} \Delta W$. The oscillation wavelength appears to be of order of the inter-particle distance $R=1$, whereas our theorem becomes effective for $R \gg 1$, resolving the apparent disagreement.
	\end{remark}
	
	
	While \tref{fLL.over} controls only positive discrepancies, when combined with an estimate on the fluctuations of smooth linear statistics, it can also give lower bounds. Indeed, when $B_R(z)$ has a large discrepancy, the physically realistic scenario is that positive charge excess builds up either on the inside or outside of $\pa B_R(z)$. We call a thin annulus with the positive charge buildup a ``screening region", the existence of which is implied by rigidity for smooth linear statistics. We use rigidity from \cite{S22}, which we note does not take ``heavy-lifting", e.g.\ it is independent of the multi-scale argument from \cite{AS21}. \pref{fLL.upbd} proves a stronger version of \tref{fLL.over} that applies to screening regions.
	
	We will need some additional assumptions to apply results from \cite{S22} and \cite{AS19}. We refer to the introduction of \cite{S22} for commentary on the conditions. While the conditions hold in significant generality, our main purpose is to demonstrate the usefulness of \tref{fLL.over} rather than optimize the conditions. Assume that $W = V_N$ where $V \in C^{7}(\R^d)$, the droplet $\Sigma = \supp \ \mueq$ has $C^{1}$ boundary, $\Delta V \geq C^{-1} > 0$ in a neighborhood of $\Sigma$, and $\sup \Delta V \leq C$. Assume further
	\begin{equation*}
		\begin{cases}
			&\int_{\R^d} e^{-\frac{\beta}{2} N(V(x) - \log(1+|x|))} dx + \int_{\R^d} e^{-\beta N (V(x) - \log(1+|x|))}(|x| \log(1+|x|))^2 dx < \infty \quad \text{if }d=2,\\
			&\int_{\R^d} e^{-\frac{\beta}{2}V(x)} dx < \infty \quad \text{if }d=3, \\
			&\lim_{|x| \to \infty} V(x) + \g(x) = +\infty.
		\end{cases}
	\end{equation*}
	Finally, assume there exists a constant $K$ such that
	$$
	\g \ast \mueq(x) + V(x) - K \geq C^{-1} \min(\dist(x,\Sigma)^2,1) \quad \forall x \in \R^d.
	$$
	
	\begin{theorem} \label{t.fLL.improved}
		Let $R \in [1,N^{\frac{5}{7d}}]$ and $z \in \R^d$ be such that $B_{2R}(z) \subset \{x \in \Sigma_N \ : \ \dist(x,\pa\Sigma_N) \geq C_0 N^{1/(d+2)}\}$ for a large enough constant $C_0$. Assume that $W = V_N$ with $V$ satisfying the above conditions.
		Then we have
		$$
		\P^{V_N}_{N,\beta}(\{|\Disc(B_R(z))| \geq T R^{d-4/5}(1+\1_{d=2}\log R)\}) \leq e^{-cR^{d-10/15}T} + e^{-cR^{2/5 + d/15}T^2} + e^{-cR^{(d+2)/5}}.
		$$
		for large enough $T \gg 1$ and some $c > 0$.
	\end{theorem}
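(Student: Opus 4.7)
The plan is to combine the positive-discrepancy bound from \tref{fLL.over}---or more precisely its sharper variant \pref{fLL.upbd} for annular ``screening regions'' that the introduction flags---with the rigidity for smooth linear statistics proved in \cite{S22}. The hypotheses on $V$, $\Sigma$, and the mesoscopic interior condition $\dist(B_{2R}(z),\pa\Sigma_N) \geq C_0 N^{1/(d+2)}$ are exactly what is needed to invoke \cite{S22} in a neighborhood of $z$, and they also place $B_{R+O(R)}(z)$ well inside the droplet, so that $\Disc = \Disc_W$ on the relevant sets.

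Fix a mesoscopic smoothing scale $1 \leq \ell \leq R$, to be optimized at the end, and let $\varphi$ be a smooth cutoff with $0 \leq \varphi \leq 1$, $\varphi \equiv 1$ on $B_R(z)$, $\supp \varphi \subset B_{R+\ell}(z)$, and $\|\nab\varphi\|_\infty = O(\ell^{-1})$. Write
\begin{equation*}
	\Disc(B_R(z)) \ = \ \int \varphi \, d\fluct \ - \ \int_{B_{R+\ell}(z) \setminus B_R(z)} \varphi \, d\fluct.
\end{equation*}
The first (smooth-linear-statistic) term is controlled by the rigidity from \cite{S22} at the Sobolev scale $\|\nab\varphi\|_{L^2}^2 \sim R^{d-1}/\ell$, producing a sub-Gaussian concentration inequality. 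The second (shell) term is sandwiched using $0\leq\varphi\leq 1$ between $-\mueq^N(B_{R+\ell}(z)\setminus B_R(z))$ and $X(B_{R+\ell}(z)\setminus B_R(z))$. The former is deterministic of order $R^{d-1}\ell$, while the latter exceeds its mean by $\Disc_W(B_{R+\ell}(z)\setminus B_R(z))$, whose positive part is controlled by \pref{fLL.upbd}. Running the symmetric argument with a cutoff supported in $B_R(z)$ and identically $1$ on $B_{R-\ell}(z)$ handles the opposite sign.

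The final step is to optimize $\ell$: the three competing contributions---the rigidity fluctuation, the geometric $R^{d-1}\ell$ term, and the screening bound from \pref{fLL.upbd}---should balance at the $\ell$ for which their sum is of order $TR^{d-4/5}(1+\1_{d=2}\log R)$. The range restriction $R \leq N^{5/(7d)}$ is exactly what keeps the optimized $\ell$ within the scales on which both \cite{S22} and \pref{fLL.upbd} remain applicable (and it also ensures $\ell$ stays above the microscopic threshold). The three tail exponents $e^{-cR^{d-2/3}T}$, $e^{-cR^{2/5+d/15}T^2}$, and $e^{-cR^{(d+2)/5}}$ in the statement should then correspond, after carrying through the optimization, to the linear and $T^2$-tails of \pref{fLL.upbd} and the sub-Gaussian rigidity tail from \cite{S22}, with the deterministic failure term $e^{-R^{d+2}}$ of \tref{fLL.over} evaluated at the optimized shell scale contributing the third tail.

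The main obstacle I anticipate is matching the three tail exponents exactly; this requires \pref{fLL.upbd} to deliver an annular screening estimate with a clean $|A|^{(d-2/3)/d}$ scaling, or equivalently a careful use of \tref{fLL.over} on nested balls of radii $R$ and $R+\ell$ coupled with a lower bound. A secondary technical point is that \cite{S22}'s rigidity is normally phrased for sufficiently smooth test functions, while my $\varphi$ is only Lipschitz; I would pre-mollify on a sub-$\ell$ scale and absorb the mollification error into the shell term, where it contributes only a lower-order correction.
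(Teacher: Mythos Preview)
Your overall strategy---combine rigidity for a smooth cutoff with the annular overcrowding bound \pref{fLL.upbd}---is exactly the paper's, but your implementation has a genuine gap: you tie the smoothing scale and the shell width to the \emph{same} parameter $\ell$, and those two roles impose incompatible constraints. Concretely, with $\varphi$ varying on scale $\ell$ (so $\alpha=\ell/R$ in \tref{serfaty}), the leading rigidity term is $R^{d-2}/\alpha^{3}=R^{d+1}/\ell^{3}$, which forces $\ell\gtrsim R^{3/5}$ if you want $|\Fluct(\varphi)|\lesssim TR^{d-4/5}$. On the other hand, your ``geometric'' contribution $\mueq^N(\text{shell})\sim R^{d-1}\ell$ forces $\ell\lesssim TR^{1/5}$. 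These cannot both hold for large $R$; optimizing your three competing terms balances at $\ell\sim R^{1/2}$ and yields only a threshold $R^{d-1/2}$, not $R^{d-4/5}$. The $R^{d-1}\ell$ term is not a nuisance you can absorb: it is exactly the obstruction.

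The missing idea, which the paper supplies in \pref{disc.screen.region}, is to \emph{decouple} the two scales. One takes the cutoff width at the rigidity-dictated scale $\alpha R\sim R^{3/5}$, and then, rather than asserting the full shell has large discrepancy (it need not---its equilibrium mass already swamps $TR^{d-4/5}$), one uses a mean-value argument on $s\mapsto\Disc(B_{s}(z))$ against the derivative of the cutoff to locate some radius $s\in(R-\alpha R,R)$ at which $\Disc(B_{s})\approx\Fluct(\xi)$ is small, so that the \emph{thin} annulus $\Ann_{[s,R]}$ carries essentially all of $\Disc(B_R)$. Discretizing $s$ on a grid of spacing $\sim R^{1/5}$ (fine enough that the equilibrium mass of one cell is $\lesssim TR^{d-4/5}$) and taking a union bound over the $O(\alpha R^{4/5})$ grid points then lets one apply \pref{fLL.upbd} to each candidate screening annulus. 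The worst case is the thinnest one, $\alpha'\sim R^{-4/5}$, which produces precisely the three exponents $R^{d-2/3}$, $R^{2/5+d/15}$, $R^{(d+2)/5}$ in the statement.
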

	We note that by applying isotropic averaging to a screening region, not only do we obtain bounds on the absolute discrepancy, we also give a sharper bound on the positive part than \tref{fLL.over}, albeit with some extra restrictions on $R$ and $z$.
	\subsection{Notation.}\ \label{s.notation}
	We now introduce some notation and conventions used throughout the paper. First, we recall the point process $X$ and ``index" process $\mbb X$ introduced at the end of \sref{intro1}. All point processes will be simple.
	
	Implicit constants $C$ will change from line to line and may depend on $\sup \Delta W$ and $d$ without further comment. In all sections except \sref{LL}, we will also allow $C$ to depend on continuously $\beta$ and $\beta^{-1}$. A numbered constant like $C_0, C_1$ will be fixed, but may be needed to be taken large depending on various parameters. For positive quantities $a,b$, we write $a \gg b$ to mean that $a \geq C b$ for a large enough constant $C > 0$, and $a \ll b$ for $a \leq C^{-1} b$ for large enough $C > 0$.
	
	For brevity we will sometimes write $\P$ for $\P^{W,U}_{N,\beta}$ and $\E$ for $\E^{W,U}_{N,\beta}$. This will only be done in proofs or sections where the probability measure is fixed throughout. We will write $\g(s)$ to mean $-\log s$ in dimension $2$ or $|s|^{-d+2}$ in $d \geq 3$ when $s > 0$. For a measure with a Lebesgue density, we often denote the density with the same symbol as the measure, e.g.\ $\nu(x)$ as the density of $\nu(dx)$.
	
	When it exists, we let $\mueq$ be the equilibrium measure associated to $V$, and let $\Sigma = \supp \ \mueq$ be the droplet. Note that $\mueq$ is a probability measure and $\Sigma$ has length scale $1$. We let $\mueq^N$ be the blown-up equilibrium measure with $\mueq^N(A) = N \mueq(N^{-1/d}A)$ for Borel sets $A$ and $\Sigma_N = \supp \ \mueq^N = N^{1/d} \supp \mueq$ be the blown-up droplet. Finally, we define
	\begin{equation}
		\mu(dx) = \frac{1}{c_d} \Delta W(x) dx
	\end{equation}
	where $c_d$ is such that $\Delta \g = -c_d \delta_0$. When we take $W = V_N$, we have that $\mu = \mueq^N$ on the blown-up droplet $\Sigma_N$.
	
	We let $B_R(z) \subset \R^d$ be the open Euclidean ball of radius $R$ centered at $z$, and for a nonempty interval $I \subset \R^{\geq 0}$, we let $\Ann_I(z)$ be the annulus containing points $x$ with $|x-z| \in I$. We use $|\cdot|$ to denote both Lebesgue measure for subsets of $\R^d$ and cardinality for finite sets. It will be clear via context which is meant.
	
	\begin{remark} \label{r.A1}
		For most applications, the requirement $\Delta W \leq C$ can be loosened to boundedness on the macroscopic scale. This is because the points $X_N$ are typically confined to a vicinity of the droplet and our arguments are localized to macroscopic neighborhoods of the regions to which they are applied. In $d=2$, for example, Ameur \cite{A21} has proved strong confinement estimates. Furthermore, one can apply our arguments to other Coulomb-type systems like finite volume jelliums provided our arguments do not ``run into" domain boundaries. The limiting factor is generally in iterating \pref{1C.LL.iterate} as in the proof of \tref{1C.LL}.
	\end{remark}
	
	\subsubsection*{Acknowledgments.}\ The author would like to thank his advisor, Sylvia Serfaty, for her encouragement and comments. He also thanks Nicolas Rougerie for helpful comments on the incompressibility estimates, especially in connection to the fractional quantum Hall effect. The author was partially supported by NSF grant DMS-2000205.
	
	\section{High Density Estimates} \label{s.LL}
	In this section, we prove \tref{1C.LL}. We will write $\P$ for $\P_{N,\beta}^{W,U}$, and implicit constants $C$ will depend only on $d$ and $\sup \Delta W$; they are independent of $\beta$.
	
	The idea behind the proof is as follows. We will consider two scales $r,R$ with $1 \leq r  = \lambda^{-1}R$ for $\lambda \geq 10$  and two concentric balls $B_r(z)$ and $B_R(z)$. In the event that $X(B_r(z)) \gg r^d$, there is a large pairwise Coulomb energy benefit upon replacing each point charge within $B_r(z)$ by annuli of scale $R$. In particular, this energy benefit dominates any  loss from the potential term $\sum_i W(x_i)$ in $\mcl H^{W,U}$. We must however consider entropy factors: after applying isotropic averaging, the particles originally confined to $B_r(z)$ become indistinguishable from the particles within $B_R(z)$. If $X(B_R(z)) \leq \lambda^d X(B_r(z))$, i.e.\ the density of particles in $B_R(z)$ is not larger than that of $B_r(z)$, the entropy costs are manageable. If, on the other hand, $B_R(z)$ has an extremely high particle density, we may iterate our estimate to larger scales $R$ and $\lambda R$. The iteration terminates once $R^d \gg N$ and the considered overcrowding event becomes impossible.
	
	\pref{1C.LL.iso} will compute the energy change upon isotropic averaging and estimate the adjoint isotropic averaging operator on the relevant event, which \pref{1C.LL.iterate} uses as in the model computation to obtain the iteration step. \tref{1C.LL} will then follow shortly.  Note that the below computation leading up to \eref{iso.W} and \eref{iso.subharmonic} will be used often in slightly different contexts.
	\begin{proposition} \label{p.1C.LL.iso}
		Consider $0 < r < \frac{1}{10} R$ and $\nu_R$ the uniform probability measure on the annulus $\Ann_{[\frac12 R, R-2r]}(0)$. We have
		\begin{equation} \label{e.LL.iso.energy}
			\Iso_{\mbb X(B_r(z)), \nu_R} \mcl H^{W,U}(X_N) \leq \mcl H^{W,U}(X_N) + CR^2X(B_r(z)) + \binom{X(B_r(z))}{2} \(\g\(\frac R2\)-\g(2r)\),
		\end{equation}
		and for any index sets $\N, \M \subset \{1,\ldots,N\}$ we have
		\begin{equation} \label{e.LL.adjoint}
			\Iso_{\N,\nu_R}^\ast \1_{\{\mbb X(B_r(z)) = \N\} \cap \{\mbb X(B_R(z)) = \M \}} \leq e^{C|\N|}\(\frac{r}{R}\)^{d|\N|} \1_{\{ \mbb X(B_R(z)) = \M \}}.
		\end{equation}
	\end{proposition}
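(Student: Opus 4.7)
The plan is to prove both parts by direct computation, exploiting the rotational symmetry and support structure of $\nu_R$.

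For the energy bound \eref{LL.iso.energy}, I would split the change in $\mcl H^{W,U}$ under the averaging into three pieces: (i) pair interactions with both indices in $\N := \mbb X(B_r(z))$, (ii) pair interactions with only one index in $\N$, and (iii) the one-body potential $W$ together with the superharmonic term $U$. For (i), a pair $i,j \in \N$ has original interaction $\g(x_i - x_j) \geq \g(2r)$, and after averaging it becomes the double convolution $(\g \ast \nu_R \ast \nu_R)(x_i - x_j)$; since $\nu_R$ is supported where $|y| \geq R/2$ and $\g$ is radially decreasing, the single convolution satisfies $\g \ast \nu_R(0) = \int \g(|y|) d\nu_R(y) \leq \g(R/2)$, and a short radial-ODE argument (using that $r^{d-1} \partial_r(\g \ast \nu_R)$ is non-increasing and vanishes at the origin) gives $\g \ast \nu_R \leq \g(R/2)$ everywhere, which transfers to the double convolution by integrating once more against a probability measure. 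Multiplying by $\binom{|\N|}{2}$ yields the third term in the bound. For (ii), the mean value inequality \eref{mvi} gives no increase. For $W$, integration of the spherical-mean identity together with $\Delta W \leq C$ yields $\nu_R \ast W(x) - W(x) \leq CR^2$, contributing $CR^2 X(B_r(z))$ after summing over $\N$, and $U$ does not increase since it is superharmonic in each coordinate.

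For the adjoint bound \eref{LL.adjoint}, I would first observe that $\nu_R$ is rotationally symmetric, so $\Iso_{\N,\nu_R}^{\ast} = \Iso_{\N,\nu_R}$, and then compute directly
$$
\Iso_{\N,\nu_R}^{\ast} \1_E(X_N) = \int \1_E\bigl(X_N + (y_i\1_\N(i))_{i=1}^N\bigr) \prod_{i \in \N} \nu_R(dy_i),
$$
where $E = \{\mbb X(B_r(z)) = \N\} \cap \{\mbb X(B_R(z)) = \M\}$. Reading off the constraints forced by the integrand being nonzero: for each $i \in \N$, the shift must land in $B_r(z)$, i.e.\ $y_i \in B_r(z - x_i) \cap \supp \nu_R$; combined with $|y_i| \geq R/2$ this forces $x_i \in \Ann_{[R/2 - r, R - r]}(z) \subset B_R(z) \setminus B_r(z)$. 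For $i \notin \N$ the coordinate is unchanged, so the event requires $x_i \notin B_r(z)$ and $x_i \in B_R(z) \iff i \in \M$. Together these imply $\N \subseteq \mbb X(B_R(z))$, and combining with the $\M$-constraints forces $\mbb X(B_R(z)) = \M$. On that event the integral factors as $\prod_{i \in \N} \nu_R(B_r(z - x_i))$, with each factor bounded by $\|\nu_R\|_{L^\infty} |B_r| \leq C(r/R)^d$.

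The main subtlety to watch is in Part 2: one must check that the implicit constraints from the shifted event upgrade the obvious inclusion $\M \setminus \N \subseteq \mbb X(B_R(z))$ to the equality $\mbb X(B_R(z)) = \M$ appearing in the target indicator. This upgrade is precisely what the inner radius $R/2$ of the annulus supporting $\nu_R$ buys: the condition $|y_i| \geq R/2 > r$ forces $x_i \in B_R(z)$ originally for each $i \in \N$. In Part 1 the only slightly delicate step is the clean bound $\g \ast \nu_R \leq \g(R/2)$; the argument based on the support condition $|y| \geq R/2$ and radial superharmonicity avoids any explicit convolution computation.
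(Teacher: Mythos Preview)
Your proposal is correct and follows essentially the same route as the paper: the three-way split of $\mcl H^{W,U}$ for \eref{LL.iso.energy} and the convolution-support analysis for \eref{LL.adjoint} match the paper's argument (the paper phrases the bound $\g \ast \nu_R \leq \g(R/2)$ via Newton's theorem rather than your radial-ODE argument, and uses Young's inequality rather than your explicit factorization for the pointwise bound, but these are equivalent).

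One slip to correct in your closing paragraph: you attribute the constraint $x_i \in B_R(z)$ for $i \in \N$ to the \emph{inner} radius $R/2$ of $\supp\nu_R$. In fact the inner radius only gives $|x_i - z| > R/2 - r$; it is the \emph{outer} radius $R-2r$ that, via $|x_i - z| \leq |y_i| + |x_i + y_i - z| < (R-2r) + r$, forces $x_i \in B_R(z)$. Your earlier statement $x_i \in \Ann_{[R/2-r,\,R-r]}(z)$ is correct and already uses both radii, so the argument itself stands---only the attribution in the final sentence needs fixing. (The paper's proof likewise singles out the bound $|y| \leq R-2r$ as the reason the support of $\Iso^\ast_{\N,\nu_R}\1_E$ is contained in $\{\mbb X(B_R(z)) = \M\}$.)
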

	\begin{proof}
		We first prove \eref{LL.iso.energy} by considering the effect of the isotropic averaging operator on $\g(x_i - x_j)$ and $W(x_i)$. Let $\sigma$ denote $(d-1)$-dimensional Hausdorff measure. First, since
		$
		-\Delta \g = c_d \delta_0
		$
		for $c_d = \sigma(\pa B_1)(d-2 + \1_{d=2})$, we have for any $s > 0$ and $y \in \R^d$ by Green's third identity that
		\begin{align*}
			\lefteqn{ \frac{1}{c_d} \int_{B_s(y)} \g(x-y) \Delta W(x) dx + W(y) } \quad & \\ &= \frac{1}{c_d s} \int_{\pa B_s(y)}\(  \g(x-y) \nab W(x) \cdot (x-y) - W(x) \nab \g(x-y) \cdot (x-y) \) \sigma(dx).
		\end{align*}
		Using the divergence theorem, the RHS can be simplified further to
		$$
		\frac{\g(s)}{c_d} \int_{B_s(y)} \Delta W(x) dx + \frac{d-2 + \1_{d=2}}{c_d s^{d-1}} \int_{\pa B_s(y)}W(x) \sigma(dx).
		$$
		After a rearrangement, we see
		$$
		\frac{1}{\sigma(\pa B_s)} \int_{\pa B_s(y)}W(x) \sigma(dx) = W(y) + \frac{1}{c_d} \int_{B_s(y)} (\g(x-y) - \g(s)) \Delta W(x) dx,
		$$
		and we can then integrate against $\sigma(\pa B_s) ds$ with $y = x_1$ to see
		\begin{align*}
		\lefteqn{ \Iso_{\{1\},\nu_R} W(x_1) } \quad & \\ &= W(x_1) + \frac{1}{c_d |\Ann_{[\frac12 R, R-2r]}(0)|} \int_{\frac12 R}^{R-2r} \sigma(\pa B_s) \int_{B_s(x_1)} (\g(x-x_1) - \g(s)) \Delta W(x) dx ds.
		\end{align*}
		If we bound $\Delta W \leq C$, one can check by explicit integration that
		\begin{equation} \label{e.iso.W}
			\Iso_{\{1\},\nu_R} W(x_1) \leq W(x_1) + CR^2.
		\end{equation}
		A similar computation, this time using $\Delta \g \leq 0$ or superharmonicity of $U$ in each variable, shows that
		\begin{equation} \label{e.iso.subharmonic}
			\Iso_{\mbb X(B_r(z)), \nu_R} \g(x_i - x_j) \leq \g(x_i - x_j), \quad \Iso_{\mbb X(B_r(z)), \nu_R} U(X_N) \leq U(X_N)\quad \forall i,j \in \{1,\ldots,N\}.
		\end{equation}
		Finally, by Newton's theorem, the Coulomb interaction between a sphere of unit charge and radius $s$ and a point charge is bounded above by $\g(s)$. It follows from superposition that
		$$
		\Iso_{\mbb X(B_r(z)), \nu_R} \g(x_i - x_j) \leq \g\(\frac{R}{2}\)
		$$
		whenever $i \in \mbb X(B_r(z))$, particularly whenever $i,j \in \mbb X(B_r(z))$ in which case $\g(x_i - x_j) \geq \g(2r)$. Putting the above results together proves \eref{LL.iso.energy}.
		
		We now consider \eref{LL.adjoint}. Using that $\Iso_{\N,\nu_R} = \Iso_{\N,\nu_R}^\ast$ is a convolution on $(\R^d)^{\N}$ and Young's inequality, we have
		\begin{align} \label{e.isoptwise}
			\lefteqn{ \Iso_{\N,\nu_R}^\ast \1_{\{\mbb X(B_r(z)) = \N\} \cap \{\mbb X(B_R(z)) = \M \}} } \quad & \\ \notag &\leq \| \nu_R \|_{L^\infty(\R^d)}^{|\N|}  \|\1_{\{\mbb X(B_r(z)) = \N\} \cap \{\mbb X(B_R(z)) = \M \}} \|_{L^1((\R^d)^{\N})} \leq e^{C|\N|}\(\frac{r}{R}\)^{d|\N|}.
		\end{align}
		For any configuration $X_N$ with $\mbb X(B_R(z)) \ne \M$, we claim that $$\Iso_{\N,\nu_R}^\ast \1_{\{\mbb X(B_r(z)) = \N\} \cap \{\mbb X(B_R(z)) = \M \}}(X_N) = 0.$$ First, our claim follows if $\mbb X(B_R(z)) \setminus \N \ne \M\setminus \N$ simply because our isotropic averaging operator leaves coordinates with labels in $\mcl N^c$ fixed. So we may instead assume there exists $i \in \N$ with $x_i \not \in B_R(z)$. Then convolution with $\nu_R$ in the $x_i$ coordinate considers translates $x_i + y$ with $|y| \leq R-2r$, none of which can be found in $B_r(z)$. Our claim follows, and together with our pointwise bound \eref{isoptwise} this establishes \eref{LL.adjoint}.
	\end{proof}
	
	We are ready to prove the main iterative estimate that establishes \tref{1C.LL}.
	\begin{proposition} \label{p.1C.LL.iterate}
		Let $0 < r < R$ be such that $\lambda := \frac{R}{r} \geq 10$. Then we have that
		$$
		\P(\{X(B_r(z)) \geq Q \}) \leq 	\P(\{X(B_R(z)) \geq \lambda^d Q \}) + e^{C(1+\beta \lambda^2 r^2) Q - \beta \binom{Q}{2}(\g(2r) - \g(\lambda r/2))}
		$$
		for all $z \in \R^d$ and integers $Q$ with
		\begin{equation}
			Q \geq \begin{cases}
				\frac{C \lambda^2 r^2 + C\beta^{-1} }{\log(\frac14 \lambda)} \quad &\text{if } d=2,\\
				C \lambda^2 r^{d} +C \beta^{-1} r^{d-2} \quad &\text{if } d \geq 3.
			\end{cases}
		\end{equation}
	\end{proposition}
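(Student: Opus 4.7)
The plan is to extend the model computation of Section 1.2 from $Q=2$, $\I = \{1,2\}$ to general $Q$ with a random index set $\N$, splitting according to whether the ambient density in the larger ball $B_R(z)$ is already extreme. I would start from the tautological decomposition
\begin{equation*}
\P(\{X(B_r(z)) \geq Q\}) \leq \P(\{X(B_R(z)) \geq \lambda^d Q\}) + \P(\{X(B_r(z)) \geq Q,\ X(B_R(z)) < \lambda^d Q\}),
\end{equation*}
so the first term supplies the first summand in the claim and only the second requires work. Partitioning by the realized index sets $\N = \mbb X(B_r(z))$, $\M = \mbb X(B_R(z))$, the second probability becomes
\begin{equation*}
\sum_{\N \subseteq \M,\ |\N| \geq Q,\ |\M| < \lambda^d Q} \P(\{\mbb X(B_r(z)) = \N,\ \mbb X(B_R(z)) = \M\}).
\end{equation*}

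For each summand I would apply the model computation using $\Iso_{\N,\nu_R}$ with the annular measure $\nu_R$ from \pref{1C.LL.iso}. The energy inequality \eref{LL.iso.energy} gives the pointwise lower bound $\mcl H^{W,U}(X_N) - \Iso_{\N,\nu_R}\mcl H^{W,U}(X_N) \geq -CR^2|\N| + \binom{|\N|}{2}(\g(2r) - \g(R/2))$, and combining Jensen's inequality with the adjoint estimate \eref{LL.adjoint} yields
\begin{equation*}
\P(\{\mbb X(B_r(z)) = \N,\ \mbb X(B_R(z)) = \M\}) \leq e^{C|\N|(1+\beta R^2) - \beta \binom{|\N|}{2}(\g(2r)-\g(R/2))} \left(\tfrac{r}{R}\right)^{d|\N|} \P(\{\mbb X(B_R(z)) = \M\}).
\end{equation*}
For fixed $\M$, summing over $\N \subseteq \M$ of cardinality $q \geq Q$ costs an entropy factor $\binom{|\M|}{q} \leq (e \lambda^d Q / q)^q$, which multiplied by the volume ratio $(r/R)^{dq} = \lambda^{-dq}$ simplifies to $(eQ/q)^q \leq e^q$ and is absorbed into the implicit constant. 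Summing the result over $\M$ with $|\M| < \lambda^d Q$ contributes at most $\P(\{X(B_R(z)) < \lambda^d Q\}) \leq 1$.

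What remains is the scalar sum $\sum_{q \geq Q} e^{Cq(1+\beta R^2) - \beta \binom{q}{2}(\g(2r)-\g(R/2))}$. Its increment from $q$ to $q+1$ equals $C(1+\beta R^2) - \beta q(\g(2r) - \g(R/2))$. Using $\g(2r) - \g(R/2) = \log(\lambda/4)$ when $d=2$, and $\g(2r) - \g(R/2) \geq c\, r^{-(d-2)}$ when $d\geq 3$ (after absorbing the factor of $2$ for $\lambda \geq 10$), the hypothesis on $Q$ is exactly calibrated so that this increment is $\leq -\log 2$ for every $q \geq Q$. The series is therefore geometric with ratio at most $1/2$, bounded by $O(1)$ times the $q = Q$ term, which gives the claimed $e^{C(1+\beta\lambda^2 r^2)Q - \beta \binom{Q}{2}(\g(2r) - \g(\lambda r/2))}$ since $R = \lambda r$.

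The main obstacle is the tight bookkeeping in the combinatorial sum: the volume gain $(r/R)^{dq}$ from the adjoint operator must exactly cancel the entropy $\binom{\lambda^d Q}{q}$ from the union bound to leave an $O(1)^q$ factor, and the dimension-dependent lower bound on $Q$ must be sharp enough that the energy gain $\beta \binom{q}{2}(\g(2r)-\g(R/2))$ dominates the potential-and-overhead cost $Cq(1+\beta R^2)$ at the scale $q = Q$. Tracking the two different forms of $\g(2r)-\g(R/2)$ in $d=2$ versus $d\geq 3$ is what produces the two distinct conditions on $Q$ in the statement.
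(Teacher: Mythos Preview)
Your proposal is correct and follows essentially the same approach as the paper. The only cosmetic difference is that the paper first uses exchangeability to pass from index sets $\N,\M$ to cardinalities $n,m$ and then bounds $\binom{m}{n}$ via Stirling, whereas you stay with index sets and use the cruder $\binom{|\M|}{q}\leq (e|\M|/q)^q$; both routes produce the same $e^{Cq}$ overhead after multiplying by the volume ratio $\lambda^{-dq}$, and the final geometric-series step is identical.
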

	\begin{proof}
		For simplicity, we consider $\lambda$ an integer. Let $\N \subset \M$ be index sets of size $n$ and $m$, respectively. We apply the model computation detailed in \sref{intro.model} and \pref{1C.LL.iso} to see
		\begin{align*}
		\lefteqn{ \P(\{\mbb X(B_r(z)) = \N\} \cap \{\mbb X(B_R(z)) = \M \}) } \quad & \\ &\leq e^{C(1+\beta R^2)n - \beta \binom{n}{2}(\g(2r) - \g(R/2))} \lambda^{-d n} \P(\{\mbb X(B_R(z)) = \M\}). 
		\end{align*}
		By particle exchangeability, we have
		$$
		\P(\{X(B_r(z)) = n\} \cap \{X(B_R(z) = m)\}) = \binom{N}{m} \binom{m}{n}\P(\{\mbb X(B_r(z)) = \N\} \cap \{\mbb X(B_R(z)) = \M \}),
		$$
		$$
		\P(\{X(B_R(z) = m)\}) = \binom{N}{m}\P(\{ \mbb X(B_R(z)) = \M\}),
		$$
		whence
		\begin{align*}
		\lefteqn{ \P(\{X(B_r(z)) = n\} \cap \{X(B_R(z) = m)\}) }\quad & \\ &\leq e^{C(1+\beta R^2)n - \beta \binom{n}{2}(\g(2r) - \g(R/2))} \binom{m}{n} \lambda^{-dn}\P(\{ X(B_R(z)) = m\}).
		\end{align*}
		By Stirling's approximation, we can estimate for $1 \leq n \leq m$ that
		$$
		\binom{m}{n} \leq 2\sqrt{\frac{m}{n(m-n)}}e^{n \log \frac{m}{n} + (m-n) \log \frac{m}{m-n}}\leq e^{Cn} e^{n \log \frac{m}{n}}.
		$$
		The RHS is bounded by $e^{Cn} \lambda^{dn}$ in the case that $m \leq \lambda^d n$, which we consider.
		We thus have
		\begin{align*}
			\lefteqn{ \P(\{X(B_r(z)) \geq Q \} \cap \{X(B_R(z)) \leq Q\lambda^d \})} \quad & \\ &\leq \sum_{n = Q}^{Q \lambda^d} e^{C(1+\beta R^2)n - \beta \binom{n}{2}(\g(2r) - \g(R/2))} \sum_{m = n}^{Q \lambda^d} \P(\{ X(B_R(z)) = m\}) \\
			&\leq \sum_{n = Q}^{Q \lambda^d} e^{C(1+\beta R^2)n - \beta \binom{n}{2}(\g(2r) - \g(R/2))}.
		\end{align*}
		We can bound the ratio between successive terms in the last sum above as
		$$
		e^{C(1+\beta R^2) - \beta n(\g(2r) - \g(R/2))} \leq \frac12
		$$
		if $Q \log(\lambda/4) \geq C\beta^{-1} + C r^2 \lambda^2$ in $d=2$ and if $Q \geq C \beta^{-1}r^{d-2} + C\lambda^2 r^d$ in $d \geq 3$. We conclude
		$$
		\P(\{X(B_r(z)) \geq Q \} \cap \{X(B_R(z)) \leq Q\lambda^d \}) \leq e^{C(1+\beta \lambda^2 r^2) Q - \beta \binom{Q}{2}(\g(2r) - \g(\lambda r/2))},
		$$
		and the proposition follows.
	\end{proof}
	
	We conclude the section with a proof of the high density JLM laws.
	\begin{proof}[Proof of \tref{1C.LL}]
		In $d=2$, we let $\lambda \geq 10$ be a free parameter, and in $d \geq 3$ we fix $\lambda$ large enough dependent on $d$. We apply \pref{1C.LL.iterate} iteratively to a series of radii $r_k = \lambda r_{k-1}$, $k \geq 1$, with $r_0 = R$ to achieve
		$$
		\P(\{X(B_R(z)) \geq Q \} ) \leq \limsup_{k \to \infty} \P(\{X(B_{r_k}(z)) \geq \lambda^{dk} Q \} ) + \sum_{k=0}^\infty e^{a_k}
		$$
		for
		$$
		a_k = {C(1+\beta \lambda^{2k+2} R^2) \lambda^{dk}Q - \beta { \binom{\lambda^{dk} Q}{2}}(\g(2\lambda^{k}R) - \g(\lambda^{k+1} R/2))}.
		$$
		If $d=2$, we compute
		$$
		a_{k+1} - a_k \leq -\beta \frac{\lambda^{4(k+1)} \log(\lambda/4) Q^2}{4} + C(1 + \beta \lambda^{2k+4}R^2)\lambda^{2(k+1)}Q.
		$$
		We can use $Q \log(\lambda/4) \geq C \lambda^2 R^2 + C \beta^{-1}$ for a large enough $C$ to see
		$$
		a_{k+1} - a_k \leq -\beta \frac{\lambda^{4(k+1)} \log(\lambda/4) Q^2}{8} \leq - \log 2.
		$$
		In $d \geq 3$, we have
		\begin{align*}
		a_{k+1} &\leq C(1+\beta \lambda^{2k+2} R^2) \lambda^{dk}Q-\beta \binom{\lambda^{dk}Q}{2} \frac{1}{2^{d-2}R^{d-2}\lambda^{(k+1)(d-2)}} \(1-\frac{2^{2d-4}}{\lambda^{d-2}}\) \\
		&\leq C(1+\beta \lambda^{2k+2} R^2) \lambda^{dk}Q-   \frac{\beta\lambda^{dk + 2k + 2- d} Q^2}{2^{d}R^{d-2}}.
		\end{align*}
		By using $Q \geq CR^d + C\beta^{-1}R^{d-2}$ (and $\lambda$ fixed), we find
		$$
		a_{k+1} \leq - \beta \frac{\lambda^{(d+2)k - d + 2} Q^2}{2^{d+1}R^{d-2}}.
		$$
		One can also compute
		$
		a_k \geq -\beta R^{-d+2}\lambda^{2dk - (d+2)k}Q^2
		$, which is dominated by $a_{k+1}$ if $\lambda$ is large enough. It follows that $a_{k+1} - a_k \leq -\log 2$.
		
		We obtain
		$$
		\P(\{X(B_R(z)) \geq Q \} ) \leq 2e^{a_0} \leq e^{- \beta {  \binom{Q}{2}}(\g(2R) - \g(\lambda R/2)) + C(1+\beta \lambda^2 R^2) Q}.
		$$
		The desired result for balls centered at a fixed $z$ follows from some routine simplifications.
		
		Finally, it will be useful to have versions of the overcrowding estimates for $z = x_1$. For this, note that conditioning $\P^{W,U}_{N,\beta}$ on $x_1$ gives a new measure $\P^{W,U+\sum_{i=2}^N \g(x_1 - x_i)}_{N-1,\beta}$ on $(x_2,\ldots,x_N)$. We can then apply our results to this $(N-1)$-particle Coulomb gas with modified potential. Actually, we could even extract an extra beneficial term in \eref{LL.iso.energy} from strict superharmonicity of $\g(x_1-\cdot)$, but it is mostly inconsequential for the large $Q$ results. We omit the details.
	\end{proof}

	\section{Clustering Estimates}
	The goal of this section is to prove \tref{kpoint} and \tref{1C.cluster}. Our idea is similar to that of the previous section, except we will work with submicroscopic scales and transport particles distances of order $1$. We will precisely compute energy and volume gains associated to the transport and control entropy costs using \tref{1C.LL} with $R=1$. Here, it will be important that we work with measures $\P^{W,U}_{N,\beta}$, since changing $U$ will effectively allow us to condition the gas without deteriorating the estimates. In this section, we will allow implicit constants $C$ to depend continuously on $\beta$, $\beta^{-1}$, and $\sup \Delta W$.
	
	\subsection{$k$-point function bounds}\
	We will first bound the $k$-point function $\rho_k(y_1,\ldots,y_k)$, $y_1,\ldots,y_k \in \R^d$, which was defined in \eref{kpoint.def}. In particular, we will prove \tref{kpoint}.
	
	Note that
	\begin{equation} \label{e.kpoint.iter}
		\rho_k(y_1,\ldots,y_k) = \rho_{k-1}(y_1,\ldots,y_{k-1}) \rho_1(y_k | y_1,\ldots,y_{k-1})
	\end{equation}
	where $\rho_1(\cdot|y_1,\ldots,y_{k-1})$ is the $1$-point function of the gas $\P^{W,U_{y_1,\ldots,y_{k-1}}}_{N-k+1,\beta}$ with
	$$
	U_{y_1,\ldots,y_{k-1}}(x_k,x_{k+1},\ldots,x_N) = U(y_1,\ldots,y_{k-1},x_k,\ldots,x_N) + \sum_{i=1}^{k-1} \sum_{j=k}^N \g(y_i - x_j).
	$$
	We let $X_{N,k} = (x_k,\ldots,x_N)$ and the representation
	\begin{equation} \label{e.1point.rep}
		\rho_1(y_k | y_1,\ldots,y_{k-1}) = \lim_{r \to 0^+} \frac{N-k+1}{|B_1(0)| r^d} \P^{W,U_{y_1,\ldots,y_{k-1}}}_{N-k+1,\beta}(\{x_k \in B_r(y_k)\})
	\end{equation}
	and bound the probability in the RHS using isotropic averaging. \tref{kpoint} will follow easily from iterating our one-point function bound. Let $\nu$ be the uniform probability measure on the annulus $\Ann_{[\frac12+2r,1-2r]}(0)$. We will replace $x_k$ by a charge shaped like $\nu$, and the below lemma gives estimates on the energy change and on the isotropic averaging operator.
	
	\begin{lemma} \label{l.kpoint.iso}
		With the definitions above and $0 < r < \frac{1}{100}$, we have
		\begin{align} \label{e.kpoint.delta}
			&\inf_{\substack{X_{N,k} \in \R^{d(N-k+1)}\\ x_1 \in B_r(y_k)}} \mcl H^{W, U_{y_1,\ldots,y_{k-1}}}(X_{N,k}) - \Iso_{\{k\},\nu} \mcl H^{W, U_{y_1,\ldots,y_{k-1}}}(X_{N,k})  \\\notag &\quad \geq -Ck + \sum_{i=1}^{k-1} \max(\g(|y_k - y_i| + r),0).
		\end{align}
		We also have
		\begin{equation} \label{e.kpoint.adj}
			\Iso^\ast_{\{k\},\nu} \1_{B_r(y_k)}(x_k) \leq Cr^d \1_{B_1(y_k)}(x_k).
		\end{equation}
	\end{lemma}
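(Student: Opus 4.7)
The plan is to decompose
\[
\mcl H^{W, U_{y_1,\ldots,y_{k-1}}}(X_{N,k}) - \Iso_{\{k\},\nu} \mcl H^{W, U_{y_1,\ldots,y_{k-1}}}(X_{N,k})
\]
into contributions from the four kinds of $x_k$-dependent terms: (I) the free pairwise interactions $\sum_{j=k+1}^N \g(x_k-x_j)$, (II) the confinement $W(x_k)$, (III) the superharmonic perturbation $U(y_1,\ldots,y_{k-1},x_k,\ldots,x_N)$, and (IV) the fixed-source interactions $\sum_{i=1}^{k-1}\g(y_i-x_k)$ coming from the definition of $U_{y_1,\ldots,y_{k-1}}$. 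Rotational symmetry of $\nu$ and superharmonicity of $\g$ and of $U$ in each variable give, via the mean value inequality \eref{mvi}, that the contributions of (I) and (III) are $\geq 0$. For (II), rerunning the Green's-identity calculation that produced \eref{iso.W} with outer radius of order one gives $\Iso_{\{k\},\nu} W(x_k)-W(x_k)\leq C$ from $\Delta W\leq C$, so its contribution is $\geq -C$.

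The substantive step is (IV). For each $i<k$ the inequality $|y_i-x_k|\leq |y_i-y_k|+r$ and monotonicity of $\g$ give $\g(y_i-x_k)\geq \g(|y_i-y_k|+r)$. To bound $\Iso_{\{k\},\nu}\g(y_i-x_k)$ from above, I would decompose $\nu$ as a superposition of uniform surface measures on $\pa B_s(0)$, $s\in[\tfrac12+2r,1-2r]$; Newton's theorem bounds the potential at $y_i$ of each unit-charge sphere centered at $x_k$ by $\g(s)$, and integrating yields $\Iso_{\{k\},\nu}\g(y_i-x_k)\leq \g(\tfrac12+2r)\leq C$. This already produces the claim in $d\geq 3$ since $\g\geq 0$ there. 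To upgrade the bound to $\max(\g(|y_i-y_k|+r),0)-C$ in $d=2$, I would split off the regime $|y_i-y_k|+r\geq 1$: there $|y_i-x_k|\geq |y_i-y_k|-r\geq 1-2r$, so $y_i$ lies outside the shifted support of $\nu$, and the superharmonic mean value inequality applied directly to $\g(y_i-\cdot)$ gives $\Iso_{\{k\},\nu}\g(y_i-x_k)\leq \g(y_i-x_k)$, making that summand already $\geq 0$. Assembling (I)--(IV) produces \eref{kpoint.delta}.

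For \eref{kpoint.adj}, symmetry of $\nu$ makes $\Iso_{\{k\},\nu}^\ast=\Iso_{\{k\},\nu}$, so $\Iso_{\{k\},\nu}^\ast \1_{B_r(y_k)}(x_k)=\nu(B_r(y_k-x_k))$; the $L^\infty$ bound $\|\nu\|_\infty\leq C$ provides the $Cr^d$ factor, and this quantity vanishes unless $B_r(y_k-x_k)$ meets the annular support of $\nu$, which forces $|y_k-x_k|\leq 1-r$ and hence $x_k\in B_1(y_k)$. The only delicate point in the whole argument is the $d=2$ case split in (IV) needed to turn $\g(|y_i-y_k|+r)$ into $\max(\cdot,0)$; everything else is a direct application of superharmonicity, Newton's theorem, and the $\Delta W\leq C$ computation already carried out in Section 2.
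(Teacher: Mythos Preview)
Your proposal is correct and follows essentially the same approach as the paper: decompose the Hamiltonian difference into the pair interactions, the confinement $W$, the superharmonic $U$, and the fixed-source terms $\sum_{i<k}\g(y_i-x_k)$, handling the first three by the mean value inequality and \eref{iso.W}, and the last by Newton's theorem plus a case split. Your threshold $|y_i-y_k|+r\geq 1$ for the split is slightly cleaner than the paper's $|y_i-y_k|\geq 1$ (it matches exactly where the $\max$ switches in $d=2$), and your remark that $y_i$ lies outside the shifted support in that case is unnecessary since the superharmonic inequality you invoke holds regardless; otherwise the arguments coincide, including the treatment of \eref{kpoint.adj}.
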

	\begin{proof}
		We begin by noting that (see \pref{1C.LL.iso} for a similar computation)
		\begin{equation*}
			\Iso_{\{k\},\nu} \g(x_i - x_j) \leq \g(x_i - x_j), \quad 	\Iso_{\{k\},\nu} W(x_k) \leq W(x_k) + C.
		\end{equation*}
		It follows that
		\begin{equation*}
			\Iso_{\{k\},\nu} \mcl H^{W,0}(X_{N,k}) \leq \mcl H^{W,0}(X_{N,k}) + C,
		\end{equation*}
		and it remains to consider the $U_{y_1,\ldots,y_{k-1}}$ term. For this term, we compute
		$$
		\Iso_{\{k\},\nu} \g(y_i - x_k) \leq C, \quad \Iso_{\{k\},\nu} \g(y_i - x_k) = \g(y_i - x_k) \text{ if } |y_i - y_k| \geq 1
		$$
		using Newton's theorem. We also have $\g(y_i - x_k) \geq \g(|y_k - y_i| + r)$ since $x_k \in B_r(y_k)$. Thus we have
		\begin{align*}
			\Iso_{\{k\},\nu} U_{y_1,\ldots,y_{k-1}}(X_{N,k}) &\leq U_{y_1,\ldots,y_{k-1}}(X_{N,k}) + \sum_{i=1}^{k-1} \1_{|y_i - y_k| \leq 1}(C - \g(y_i - x_k)) \\ &\leq U_{y_1,\ldots,y_{k-1}}(X_{N,k}) + Ck - \sum_{i=1}^{k-1} \max(\g(|y_k - y_i| + r,0)).
		\end{align*}
		This finishes the proof of \eref{kpoint.delta}. The proof of \eref{kpoint.adj} is straightforward using that $\Iso^\ast_{\{k\},\nu}$ is convolution by $\nu$.
	\end{proof}
	\begin{proposition} \label{p.1point}
		We have
		\begin{equation}
			\rho_1(y_k | y_1,\ldots,y_{k-1}) \leq e^{Ck -\beta \sum_{i=1}^{k-1} \max(\g(y_i - y_k),0)}
		\end{equation}
		for the one-point function of $\P^{W,U_{y_1,\ldots,y_{k-1}}}_{N-k+1,\beta}$.
	\end{proposition}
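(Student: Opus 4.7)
The plan is to apply the model computation from Section \ref{s.intro.model} directly to the probability $\P^{W,U_{y_1,\ldots,y_{k-1}}}_{N-k+1,\beta}(\{x_k \in B_r(y_k)\})$ using the isotropic averaging operator $\Iso_{\{k\},\nu}$ with $\nu$ the uniform measure on $\Ann_{[\frac12+2r,1-2r]}(0)$, then pass to the limit $r \to 0^+$ in the representation \eref{1point.rep}.

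Concretely, mimicking \eref{model1}--\eref{model2} with $E_{\{k\}} = \{x_k \in B_r(y_k)\}$, we obtain
\begin{equation*}
\P^{W,U_{y_1,\ldots,y_{k-1}}}_{N-k+1,\beta}(\{x_k \in B_r(y_k)\}) \leq e^{-\beta \Delta} \E^{W,U_{y_1,\ldots,y_{k-1}}}_{N-k+1,\beta}\bigl[\Iso^\ast_{\{k\},\nu} \1_{B_r(y_k)}(x_k)\bigr],
\end{equation*}
where $\Delta$ is the energy change infimum. By \lref{kpoint.iso}, the first factor is bounded by $e^{Ck - \beta \sum_{i=1}^{k-1} \max(\g(|y_k - y_i|+r),0)}$ and the integrand in the expectation is pointwise bounded by $Cr^d \1_{B_1(y_k)}(x_k)$. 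Substituting both bounds yields
\begin{equation*}
\P^{W,U_{y_1,\ldots,y_{k-1}}}_{N-k+1,\beta}(\{x_k \in B_r(y_k)\}) \leq Cr^d e^{Ck - \beta \sum_{i=1}^{k-1} \max(\g(|y_k - y_i|+r),0)} \P^{W,U_{y_1,\ldots,y_{k-1}}}_{N-k+1,\beta}(\{x_k \in B_1(y_k)\}).
\end{equation*}

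The next step is to absorb the factor $\frac{N-k+1}{|B_1(0)|r^d}$ from \eref{1point.rep}. The $r^d$ cancels against the volume prefactor in the bound above, and by exchangeability
\begin{equation*}
(N-k+1)\,\P^{W,U_{y_1,\ldots,y_{k-1}}}_{N-k+1,\beta}(\{x_k \in B_1(y_k)\}) = \E^{W,U_{y_1,\ldots,y_{k-1}}}_{N-k+1,\beta}\bigl[X(B_1(y_k))\bigr].
\end{equation*}
This is exactly a first moment of a ball count, which is controlled uniformly in $N$ by the microscopic local law at $R=1$ from \tref{1C.LL}: integrating the tail bound gives $\E[X(B_1(y_k))] \leq C$. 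Since the potential $U_{y_1,\ldots,y_{k-1}}$ is superharmonic in each variable and $\Delta W \leq C$ is preserved, \tref{1C.LL} applies to the modified gas without change.

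Finally I would pass to $r \to 0^+$. The only $r$-dependence remaining is in $\max(\g(|y_k-y_i|+r),0)$, which decreases monotonically to $\max(\g(|y_k-y_i|),0)$ as $r \to 0$ (by monotonicity of $\g$), giving the stated bound with the constant absorbed into $e^{Ck}$. No real obstacle is expected; the only subtlety is ensuring that the constant from the $L^\infty$ norm of $\nu$ and from $\E[X(B_1(y_k))] \leq C$ combine into the $e^{Ck}$ exponential, which is immediate since both are bounded independently of $k$ (indeed, independently of $N$ and $y_1,\ldots,y_{k-1}$).
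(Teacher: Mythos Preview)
Your proof is correct and follows essentially the same approach as the paper: apply the model computation with $\Iso_{\{k\},\nu}$ and \lref{kpoint.iso}, use exchangeability together with \tref{1C.LL} to bound $(N-k+1)\,\P(\{x_k \in B_1(y_k)\}) = \E[X(B_1(y_k))] \leq C$, and pass to the limit $r \to 0^+$ in \eref{1point.rep}. One small slip: as $r \to 0^+$ the quantity $\max(\g(|y_k - y_i| + r),0)$ \emph{increases} (since $\g$ is decreasing), not decreases; but this only makes the bound tighter in the limit, so the conclusion is unaffected.
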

	\begin{proof}
		An application of the model computation with the operator $\Iso_{\{k\},\nu}$ and \lref{kpoint.iso} shows
		\begin{equation} \label{e.1point.apply}
			\P^{W,U_{y_1,\ldots,y_{k-1}}}_{N-k+1,\beta}(\{x_k \in B_r(y_k)\}) \leq r^d e^{Ck - \beta \sum_{i=1}^{k-1} \max(\g(y_i - y_k + r),0) } \P^{W,U_{y_1,\ldots,y_{k-1}}}_{N-k+1,\beta}(\{x_k \in B_1(y_k)\})
		\end{equation}
		for all $r$ small enough. Let $Y = \sum_{i=k}^N \delta_{x_i}$. Since
		$$
		\sum_{i=k}^N \1_{\{x_i \in B_1(y_k)\}} \leq \sum_{n=1}^{N-k+1} n \1_{\{Y(B_1(y_k)) = n\}}
		$$
		and by \tref{1C.LL} we have
		$$
		P^{W,U_{y_1,\ldots,y_{k-1}}}_{N-k+1,\beta}(\{Y(B_1(y_k)) = n\}) \leq Ce^{-c n^2}
		$$
		for some $c > 0$, we find by exchangeability that
		$$
		\P^{W,U_{y_1,\ldots,y_{k-1}}}_{N-k+1,\beta}(\{x_k \in B_1(y_k)\}) \leq \frac{C}{N-k+1} \sum_{n=1}^\infty n e^{-cn^2} \leq \frac{C}{N-k+1}.
		$$
		Combining this with \eref{1point.apply} and taking $r \to 0^+$ in \eref{1point.rep} proves the proposition.
	\end{proof}
	
	\begin{proof}[Proof of \tref{kpoint}.]
		The theorem follows easily by applying \pref{1point} iteratively to the representation \eref{kpoint.iter}
	\end{proof}
	
	\subsection{Bounds on particle clusters.}\
	We apply our $k$-point function estimates to prove the clustering estimates of \tref{1C.cluster}.
	
	\begin{proof}[Proof of \tref{1C.cluster}.]
		We will need slightly different arguments based on $d=2$ or $d \geq 3$. We integrate the results of \tref{kpoint}. We have
		\begin{align*}
			\P^{W,U}_{N,\beta}(\{X(B_r(z)) \geq Q\}) &\leq \frac{1}{Q!} \sum_{i_1,\ldots,i_Q \ \text{distinct}} \P^{W,U}_{N,\beta}\(\bigcap_{\ell=1}^Q \{ x_{i_\ell} \in B_r(z) \} \) \\ &= \frac{1}{Q!} \int_{(B_r(z))^{Q}} \rho_Q(y_1,\ldots,y_Q) dy_1 \cdots dy_Q.
		\end{align*}
		Thus, for a $Q$ dependent constant $C$, we have
		$$
		\P^{W,U}_{N,\beta}(\{X(B_r(z)) \geq Q\}) \leq C r^{Qd} \| \rho_Q \|_{L^\infty((B_r(z))^Q)}.
		$$
		In $d=2$ one estimates
		$$
		\| \rho_Q \|_{L^\infty((B_r(z))^Q)} \leq C r^{\beta {\binom{Q}{2}}}
		$$
		which establishes \eref{cluster.fixed.2}. In $d \geq 3$, we can estimate
		$$
		\| \rho_Q \|_{L^\infty((B_r(z))^Q)} \leq C e^{-\beta {\binom{Q}{2}}\inf_{x,y \in B_r(z)} \g(x-y)} \leq C e^{-\frac{\beta}{2^{d-2}r^{d-2}} {\binom{Q}{2}}},
		$$
		establishing \eref{cluster.fixed.3}. To prove the estimates with $z = x_1$, we consider $(x_2,\ldots,x_N)$ drawn from the Coulomb gas $\P^{W,U_{x_1}}_{N-1,\beta}$ where $U_{x_1}(x_2,\ldots,x_N) = U(X_N) + \sum_{i=2}^N \g(x_i - x_1)$. The argument of \tref{kpoint} applies to this gas, except we have an extra term when applying our isotropic averaging operator to $U_{x_1}$ coming from particle repulsion generated by $x_1$. It is a straightforward modification to include this term in \lref{kpoint.iso} and the proof of \tref{kpoint}. We can prove
		\begin{equation}
			\rho_{Q-1}(y_2,\ldots,y_Q | y_1) \leq \begin{cases}
				C\prod_{1 \leq i < j \leq Q} \min(1, |y_i - y_j|^\beta) \quad &\text{if}\ d=2, \\
				C e^{-\beta \mcl H^0(y_1,\ldots,y_Q)} \quad &\text{if}\ d \geq 3,
			\end{cases}
		\end{equation}
		where $\rho_{Q-1}(y_2,\ldots,y_Q | y_1)$ is the $(Q-1)$-point function of $\P^{W,U_{x_1}}_{N-1,\beta}$ with $x_1 = y_1$. Then since
		$$
		\P^{W,U}_{N,\beta}(\{X(B_r(x_1)) \geq Q\}) \leq C r^{(Q-1)d} \sup_{y_1 \in \R^d} \| \rho_{Q-1}(\cdot | y_1) \|_{L^\infty((B_r(y_1))^{Q-1})},
		$$
		we obtain
		$$
		\P^{W,U}_{N,\beta}(\{X(B_r(x_1)) \geq Q\}) \leq Cr^{(Q-1)d + \beta {\binom{Q}{2}}} \quad \text{if}\ d= 2,
		$$
		and
		\begin{equation}\label{e.cluster.centered.naive.3}
			\P^{W,U}_{N,\beta}(\{X(B_r(x_1)) \geq Q\}) \leq C r^{(Q-1)d} e^{-\beta \frac{Q-1}{r^{d-2}} - \beta \frac{1}{2^{d-2} r^{d-2}} {\binom{Q-1}{2}}} \quad \text{if}\ d\geq 3.
		\end{equation}
		In the latter estimate, we used that $|y_i - y_1| \leq r$ in our $L^\infty(B_r(y_1)^{Q-1})$ bound on $\rho_{Q-1}(\cdot|y_1)$.
		
		We expect one can obtain significant improvements to the $d \geq 3$ bound by more accurately estimating the minimum value of $\mcl H^0(y_1,\ldots,y_Q)$ and the relative volume within $(B_r(z))^Q$ of near-minimizers. We will now do so for the case of $Q=2$ and $z = x_1$ since it is relevant to the minimal separation problem.
		
		Let $d \geq 3$. We find
		\begin{align}
			\P^{W,U}_{N,\beta}(\{X(B_r(x_1)) \geq 2\}) \leq  C \sup_{y_1 \in \R^d} \int_{B_r(y_1)} e^{-\frac{\beta}{|y_1 - y_2|^{d-2}}} dy_2 &\leq C r^d \int_{B_1(0)} e^{-\frac{\beta}{r^{d-2} |y|^{d-2}}} dy \\ \notag &\leq Cr^d \int_0^1 s^{d-1} e^{-\frac{\beta}{r^{d-2}} \frac{1}{s^{d-2}}} ds.
		\end{align}
		For any $\alpha > 0$, we can compute
		$$
		\int_0^1 s^{d-1} e^{-\frac{\alpha}{s^{d-2}}} ds \leq \int_0^1 \frac{1}{s^{d-1}} e^{-\frac{\alpha}{s^{d-2}}} ds = \frac{e^{-\alpha}}{\alpha}.
		$$
		Applying this at $\alpha = r^{-d+2} \beta$ allows us to conclude
		$$
		\P^{W,U}_{N,\beta}(\{X(B_r(x_1)) \geq 2\}) \leq Cr^{2d-2} e^{-\frac{\beta}{r^{d-2}}}.
		$$
		Notice that this is significantly stronger than \eref{cluster.centered.naive.3} with $Q=2$. The ``volume factor" has been reduced from $r^d$ to $r^{2d-2}$.
	\end{proof}
	
	\section{Minimal Separation} \label{s.etak}
	In this section, we prove the minimal separation theorem \tref{etak.tight} via \pref{etak.isbig} and \pref{etak.issmall}. The proof that $\eta_k$ is not smaller than expected is a relatively straightforward application of the clustering bounds of \tref{1C.cluster}. The proof the $\eta_k$ is not too large takes a new idea that we term ``mimicry", for which we now give some intuition.
	
	Consider a configuration $X_N$ with $\eta_1 > r$, let $R = O(1)$ be a large microscopic scale, and let $r < R$. Consider two particles, say $x_1$ and $x_2$, with $|x_1 - x_2| \in (r,R)$. We can move the particles closer together by ``replacing" $x_2$ with a charged annulus of inner radius $r/2$ and outer radius $r$ centered $x_1$, and applying our model computation to associate a family of new point configurations to $X_N$. A key difference from isotropic averaging is that the annulus is centered at $x_1$ instead of $x_2$.
	
	By Newton's theorem, the interaction of the annulus with the particles $x_3,\ldots,x_N$ is the same or more mild than that of $x_1$ and the potential term is less than $W(x_1) + Cr^2$. The interaction between $x_1$ and $x_2$ has increased, in the worst case, from $\g(R)$ to $\g(r/2)$. The particle $x_2$ is ``mimicking" $x_1$. We can also have $x_1$ mimic $x_2$; one of the two mimicking operations is favorable energy-wise up to $O(r^2) + \g(r/2) - \g(R)$. There is also an entropy cost associated to mimicry. The particle $x_2$ originally occupies a $O(R^d)$ volume region around $x_1$ and afterwards is restricted to $O(r^d)$ volume. We can apply this argument to any of the $O(N)$ particle pairs within distance $R$, creating new configurations from $X_N$ with either (1) all particles separated by $r$ except a single pair or (2) with a cluster of three or more particles within a ball of radius $r$. Situation (2) can be proved unlikely using our clustering estimates from \tref{1C.cluster}. Assuming situation (1), we can never create the same configuration by applying a mimicry argument to two distinct index pairs, and so we achieve a volume benefit factor of $O(N)$. We find that $\eta_1 > r$ is unlikely as soon as
	$$
	\frac{Nr^d}{R^d} e^{-\beta (\g(r/2) - \g(R))} \gg 1.
	$$
	In $d=2$, this happens as soon as $r \ll N^{1/(2+\beta)}$, matching our desired result \tref{etak.tight}. In $d \geq 3$, we must be more careful and consider thinner annuli for mimicry, but a similar intuition holds. This argument is carried out in \pref{etak.issmall}, but before then we must take some care to provide an appropriate parameter $R$, which is done in \lref{lonely} using results from \cite{CHM18}. The main idea is that most of the particles are contained within some volume of size $O(N)$ with high probability.
	
	\begin{proposition} \label{p.etak.isbig}
		There is an absolute constant $C_0 > 0$ such that in $d=2$ we have
		\begin{equation} \label{e.etak.isbig.2}
			\P^{W,U}_{N,\beta}(\{\eta_k \leq \gamma N^{-\frac{1}{2+\beta}} \}) \leq C \gamma^{(2+\beta)k} + CN^{-\frac{2+2\beta}{2+\beta}} \gamma^{4+3\beta} + CN^{-7k}\gamma^{C_0(1+\beta)k} \quad \forall \gamma > 0,
		\end{equation}
		and in $d=3$ we have
		\begin{equation} \label{e.etak.isbig.3}
			\P^{W,U}_{N,\beta}\left (\left \{\eta_k \leq \(\frac{\beta}{\log N - \frac{2d-2}{d-2} \log \log N +\gamma}\)^{\frac{1}{d-2}}\right\} \right) \leq \frac{C}{\sqrt{N}} + Ce^{-k\gamma} \quad \forall \gamma > 0.
		\end{equation}
		The constant $C$ depends on $\beta$, $\sup \Delta W$, and $k$.
	\end{proposition}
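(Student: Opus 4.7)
The strategy is to decompose the event $\{\eta_k \leq r\}$ combinatorially. If $\eta_k \leq r$ then at least $k$ unordered pairs $\{i,j\}$ satisfy $|x_i - x_j| \leq r$; viewing these pairs as edges of a graph on $\{1,\ldots,N\}$, either the graph admits a matching of size $k$, i.e.\ there exist $k$ vertex-disjoint close pairs (call this event $A_r^{(k)}$), or some vertex has degree at least two, so that three particles lie in a ball of radius $r$ centered at some $x_i$ (call this event $B_r$). Hence $\{\eta_k \leq r\} \subseteq A_r^{(k)} \cup B_r$, and I would bound the two pieces separately.

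The event $B_r$ is handled by \tref{1C.cluster}. In $d=2$, \eref{cluster.centered.2} with $Q=3$ gives $\P(X(B_r(x_1)) \geq 3) \leq Cr^{4+3\beta}$; exchangeability/union bound over the $N$ particles together with $r = \gamma N^{-1/(2+\beta)}$ produces the middle term $CN^{-(2+2\beta)/(2+\beta)}\gamma^{4+3\beta}$ of \eref{etak.isbig.2}. In $d\geq 3$, \eref{cluster.centered.3} with $Q=3$ at the stated scale for $r$ yields a contribution of order $Nr^{2d}\exp(-c\beta/r^{d-2})$, which is an inverse polynomial in $N$ easily absorbed into the $C/\sqrt{N}$ term of \eref{etak.isbig.3}.

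For $\P(A_r^{(k)})$, I would pass to the $(2k)$-point correlation function. By exchangeability and \eref{kpoint.def},
\[
\P(A_r^{(k)}) \leq \frac{1}{2^k k!} \int_{(\R^d)^{2k}} \rho_{2k}(y_1,\ldots,y_{2k}) \prod_{\ell=1}^{k} \1_{|y_{2\ell-1}-y_{2\ell}| \leq r}\, dy_1 \cdots dy_{2k}.
\]
In $d=2$, retain only the $k$ pair factors in \tref{kpoint}'s bound $\rho_{2k} \leq C\prod_{i<j}(1\wedge|y_i-y_j|^\beta)$, so the integral factorizes across pairs. Changing variables $z_\ell = y_{2\ell-1}-y_{2\ell}$ gives $\int_{|z|\leq r} |z|^\beta dz = O(r^{2+\beta})$ per pair, while each center $y_{2\ell}$ must be integrated over a macroscopic neighborhood of the droplet, of volume $O(N)$, using \lref{lonely}. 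This yields the leading contribution $C(Nr^{2+\beta})^k = C\gamma^{(2+\beta)k}$. The $d\geq 3$ analogue uses $\rho_{2k} \leq Ce^{-\beta \mcl H^0}$ and retains only the $k$ pair Coulomb interactions in the exponent (the dropped terms are nonnegative in $d \geq 3$, so dropping them upper bounds $e^{-\beta \mcl H^0}$), applying the integral $\int_{|z|\leq r} e^{-\beta/|z|^{d-2}} dz \leq Cr^{2d-2}e^{-\beta/r^{d-2}}$ already computed at the end of the proof of \tref{1C.cluster}. At the chosen scale, each pair contributes $\approx e^{-\gamma}/N$, so $\P(A_r^{(k)}) \lesssim e^{-k\gamma}$.

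The main obstacle is that the bare \tref{kpoint} bounds carry no confinement information, so the center integrals over all of $\R^d$ cannot be bounded by $O(N)$ without additional input. The key tool to close this gap is \lref{lonely} (from the preceding discussion), which localizes most particles to a volume-$O(N)$ region with an exponentially small tail. The trade-off between its main term and this exponential tail is precisely what produces the third error term $CN^{-7k}\gamma^{C_0(1+\beta)k}$ in \eref{etak.isbig.2}, whose specific exponents are dictated by the parameters of \lref{lonely}. Otherwise the bookkeeping is straightforward combinatorics.
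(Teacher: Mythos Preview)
Your decomposition $\{\eta_k \leq r\} \subset A_r^{(k)} \cup B_r$ and your treatment of $B_r$ via \tref{1C.cluster} match the paper exactly. The gap is in your handling of $A_r^{(k)}$.

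The pointwise bound on $\rho_{2k}$ from \tref{kpoint} carries no decay at infinity, so integrating the ``centers'' $y_{2\ell}$ over $\R^d$ diverges. You propose to close this with \lref{lonely}, but that lemma is stated only for $W = V_N$ under growth assumptions \eref{A1}/\eref{A2}, whereas \pref{etak.isbig} is for the general measure $\P^{W,U}_{N,\beta}$ with only $\sup \Delta W < \infty$. So invoking \lref{lonely} here would strictly shrink the class of potentials to which the result applies. Relatedly, your explanation of the third term $CN^{-7k}\gamma^{C_0(1+\beta)k}$ is incorrect: it does not arise from any localization tail.

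The paper avoids localization entirely by conditioning iteratively. Writing the conditional law of $(x_{2k-1},\ldots,x_N)$ given $x_1,\ldots,x_{2k-2}$ as $\P^{W,U_{k-1}}_{N-2k+2,\beta}$ (still of the required superharmonic-perturbation form), one bounds $\P^{W,U_{k-1}}_{N-2k+2,\beta}(|x_{2k}-x_{2k-1}|\leq r)$ by exchangeability: among the $N-2k+1$ particles other than $x_{2k}$, the chance that the specific one $x_{2k-1}$ lands in $B_r(x_{2k})$ is at most $\frac{C_n}{N}\P(Y(B_r(x_{2k}))\geq 2) + \P(Y(B_r(x_{2k}))\geq n)$, and both probabilities are controlled by \tref{1C.cluster} applied to the conditioned gas. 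The factor $1/N$ per pair thus comes from exchangeability plus the local law, not from knowing where the droplet is. Iterating $k$ times and multiplying by $\binom{N}{2}^k$ gives the first term $C\gamma^{(2+\beta)k}$ from the $\frac{1}{N}r^{2+\beta}$ piece, while the third term $CN^{-7k}\gamma^{C_0(1+\beta)k}$ comes from the residual $n$-cluster piece $r^{2(n-1)+\beta\binom{n}{2}}$ with $n=10$.
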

	\begin{proof}		
		Let $r \in (0,1)$. Let $S$ be the event that there exists $i \in \{1,\ldots,N\}$ such that $B_{r}(x_i)$ contains $3$ or more particles $x_j$. We have
		\begin{equation} \label{e.3cluster.bd}
			\P^{W,U}_{N,\beta}(S) \leq CN\P^{W,U}_{N,\beta}(\{X(B_r(x_1)) \geq 3\}) \leq \begin{cases} CNr^{4+3\beta} \quad &\text{if}\ d=2, \\
				C N r^{2d} e^{-\frac{2\beta}{r^{d-2}}} \quad &\text{if}\ d\geq 3, \end{cases}
		\end{equation}
		by a union bound and \tref{1C.cluster}.
		
		Let $E_{k,r}$ be the event that $\eta_k \leq r$, and let $a_j< b_j$ be random indices such that $\eta_j = |x_{a_j} - x_{b_j}|$ for $j=1,\ldots,k$, which are well-defined almost surely. On the event $S^c \cap E_{k,r}$, we have that $\{a_j, b_j\} \cap \{a_{\ell}, b_\ell\} = \varnothing$ for $\ell \ne j$ almost surely. Let $F_j = \bigcap_{\ell=1}^j \{a_\ell = 2\ell - 1 \} \cap \{b_{\ell} = 2\ell\}$. By exchangeability, we have
		$$
		\P^{W,U}_{N,\beta}(E_{k,r} \cap S^c) \leq {\binom{N}{2}}^{k} \P^{W,U}_{N,\beta}(E_{k,r} \cap F_k) \leq {\binom{N}{2}}^{k} \P^{W,U}_{N,\beta}\(\bigcap_{\ell=1}^k \{|x_{2\ell} - x_{2 \ell-1}| \leq  r\}\) .
		$$
		Let $Y = X - \sum_{i=1}^{2k-2} \delta_{x_i}$, i.e.\ the point process with the first $2k-2$ points deleted. We estimate
		\begin{align} \label{e.etak.cond}
			\lefteqn{ \P^{W,U}_{N,\beta}\(\bigcap_{\ell=1}^k \{|x_{2\ell} - x_{2 \ell-1}| \leq  r \}\) } \quad & \\ \notag &= \P^{W,U}_{N,\beta}\(\bigcap_{\ell=1}^{k-1} \{|x_{2\ell} - x_{2 \ell-1}| \leq  r \}\) \P^{W,U}_{N,\beta}\(\{|x_{2k} - x_{2k-1}| \leq r\}\ \bigg | \ \bigcap_{\ell=1}^{k-1} \{|x_{2\ell} - x_{2 \ell-1}| \leq  r \} \),
		\end{align}
		and furthermore
		\begin{equation} \label{e.etak.cond2}
		\P^{W,U}_{N,\beta}\(\{|x_{2k} - x_{2k-1}| \leq r\} \ \bigg | \ \bigcap_{\ell=1}^{k-1} \{|x_{2\ell} - x_{2 \ell-1}| \leq  r \}\)  = \E \left[ \P^{W,U_{k-1}}_{N-2k+2,\beta}(\{ |x_{2k} - x_{2 k-1}| \leq  r \} ) \right].
		\end{equation}
		for the random potential 
		$
		U_{k-1}(x_{2k-1},x_{2k},\ldots,x_N) = U(X_N) + \sum_{j=2k-1}^N \sum_{\ell = 1}^{2k-2} \g(x_{\ell} - x_j).
		$ The expectation $\E$ is over the law of the points $x_{\ell}$, $\ell \in \{1,\ldots,2k-2\}$, conditioned to be pairwise close as above, and $\P^{W,U_{k-1}}_{N-2k+2,\beta}$ is a measure on the particles $(x_{2k-1},x_{2k},\ldots,x_N)$.
		
		For any positive integer $n$, we have by exchangeability that
		\begin{align*}
		\lefteqn{ \P^{W,U_{k-1}}_{N-2k+2,\beta}(\{|x_{2k} - x_{2 k-1}| \leq  r \}) } \quad & \\ &\leq \frac{C_n}{N} \P^{W,U_{k-1}}_{N-2k+2,\beta}(\{ Y(B_r(x_{2k})) \geq 2 \}) + \P^{W,U_{k-1}}_{N-2k+2,\beta}(\{ Y(B_r(x_{2k})) \geq n \}).
		\end{align*}
		We apply \tref{1C.cluster} to bound each piece above. Collecting the estimates, \eref{etak.cond2}, and \eref{etak.cond}, we have proved
		\begin{align*}
			\lefteqn{ \P^{W,U}_{N,\beta}\(\bigcap_{\ell=1}^k \{|x_{2\ell} - x_{2 \ell-1}| \leq  r \}\) } \quad & \\ &\leq \begin{cases} C \(\frac{1}{N} r^{2+\beta} + r^{2(n-1) + \beta {\binom{n}{2}}}\)\P^{W,U}_{N,\beta}\(\bigcap_{\ell=1}^{k-1} \{|x_{2\ell} - x_{2 \ell-1}| \leq  r \}\) \quad \text{if } d=2, \\
				C \(\frac{1}{N} r^{2d-2} e^{-\frac{\beta}{r^{d-2}}} + r^{(n-1)d} e^{-\frac{c\beta n^2}{r^{d-2}}} \)\P^{W,U}_{N,\beta}\(\bigcap_{\ell=1}^{k-1} \{|x_{2\ell} - x_{2 \ell-1}| \leq  r \}\) \quad \text{if } d\geq 3.
			\end{cases}
		\end{align*}
		for a dimensional constant $c > 0$.
		We can iterate this estimate to see
		\begin{equation} \label{e.etak.isbig.iter}
			\P^{W,U}_{N,\beta}\(\bigcap_{\ell=1}^k \{|x_{2\ell} - x_{2 \ell-1}| \leq  r \}\)  \leq  \begin{cases} C\(\frac{1}{N} r^{2+\beta} + r^{2(n-1) + \beta {\binom{n}{2}}}\)^k \quad &\text{if}\ d=2, \\
				C \(\frac{1}{N} r^{2d-2} e^{-\frac{\beta}{r^{d-2}}} + r^{(n-1)d} e^{-\frac{c\beta n^2}{r^{d-2}}} \)^k \quad &\text{if}\ d\geq 3. \end{cases}
		\end{equation}
		We conclude the general argument by writing
		\begin{equation} \label{e.etak.isbig.conclude}
			\P^{W,U}_{N,\beta}(E_{k,r}) \leq \P^{W,U}_{N,\beta}(S) + \P^{W,U}_{N,\beta}(E_{k,r} \cap S^c) \leq \P^{W,U}_{N,\beta}(S) + \binom{N}{2}^k \P^{W,U}_{N,\beta}\(\bigcap_{\ell=1}^k \{|x_{2\ell} - x_{2 \ell-1}| \leq  r \}\).
		\end{equation}
		The probability of $S$ is bounded in \eref{3cluster.bd}.
		
		Next, we choose specific $r$ and $n$ to conclude the proposition. We must consider $d=2$ and $d\geq 3$ separately. For $d=2$, we choose $r = \gamma N^{-\frac{1}{2+\beta}}$ and $n = 10$ to see
		\begin{align*}
		&\P^{W,U}_{N,\beta}(S) \leq CN^{-\frac{2+2\beta}{2+\beta}} \gamma^{4+3\beta}, \\ &\P^{W,U}_{N,\beta}\(\bigcap_{\ell=1}^k \{|x_{2\ell} - x_{2 \ell-1}| \leq  r \}\) \leq CN^{-2k} \gamma^{(2+\beta)k} + C N^{-9k} \gamma^{C_0(1+\beta)k}
		\end{align*}
		for some $C_0 > 0$, and \eref{etak.isbig.2} follows.
		
		For $d \geq 3$, we will assume $r^{d-2} \leq  \frac{4\beta}{3\log N}$, which with our choice below will happen if $N \geq C$, so that the first summand in the RHS of \eref{etak.isbig.iter} dominates for $n$ large enough. It follows that
		$$
		\P^{W,U}_{N,\beta}\(\bigcap_{\ell=1}^k \{|x_{2\ell} - x_{2 \ell-1}| \leq  r \}\) \leq C\frac{r^{(2d-2)k}}{N^k} e^{-\frac{\beta k}{r^{d-2}}},
		$$
		if $n$ is chosen large enough, and
		$$
		\P^{W,U}_{N,\beta}(S) \leq CN \exp\(-2\beta\(\frac{3\log N}{4\beta}\)\) \leq CN^{-\frac12}.
		$$
		We can then let $r = \(\frac{\beta}{\log N  - \frac{2d-2}{d-2}\log \log N + \gamma}\)^{\frac{1}{d-2}}$, and after a short computation using \eref{etak.isbig.conclude}, we find \eref{etak.isbig.3} holds.
	\end{proof}
	
	Our next goal is to prove that $\eta_k$, properly normalized, is tight as $N \to \infty$. To do this, we must create close particle pairs using the model computation, except with a new class of operators, and precisely estimate energy and entropy costs.
	
	For $i \in \{1,\ldots,N\}$, let $L_{i,R}$ be the event that $\min_{j \ne i} |x_j - x_i| \geq R$, i.e.\ that the point $x_i$ is ``lonely". Clearly, if $R$ is much larger than the interstitial distance, the event $L_{i,R}$ is rare. We use \cite{CHM18} to quantify this fact. The below result is their Theorem 1.5, written in our blown-up coordinates and corresponding to their inverse temperature chosen as $N^{-1+2/d} \beta$ in terms of our $\beta$.
	\begin{theorem}[\cite{CHM18}, Theorem 1.5]
		Let $W = V_N$ with $V$ satisfying \eref{A1} in $d=2$ and \eref{A2} in $d \geq 3$. Recall the blown-up equilibrium measure $\mueq^N$ from \sref{notation}. We have
		\begin{equation} \label{t.CHM}
			\P^{V_N}_{N,\beta}(\{d_{\mathrm{BL},N}(X,\mueq^N) \geq N\sqrt{\log N} r\}) \leq e^{-c N \log N r^2}
		\end{equation}
		for any $r \geq c^{-1}$ for some $\beta$-dependent $c > 0$, where, for nonnegative measures $\nu_1$ and $\nu_2$ of mass $N$, we define
		\begin{equation*}
			d_{\mathrm{BL},N}(\nu_1,\nu_2) := \sup_{\substack{ f \in C_N^{0,1}(\R^d) \\ \| f \|_{C_N^{0,1}(\R^d)} \leq 1}} \int_{\R^d} f(x)(\nu_1 - \nu_2)(dx),
		\end{equation*}
		and $C_N^{0,1}(\R^d)$ is the space of bounded Lipschitz functions with norm
		$$
		\| f \|_{C^{0,1}(\R^d)} = \sup_{x \in \R^d} N^{-1/d}|f(x)| + \sup_{x \ne y \in \R^d} \frac{|f(x) - f(y)|}{|x-y|}.
		$$
	\end{theorem}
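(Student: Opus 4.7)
The plan is to prove this theorem (which is imported from \cite{CHM18} rather than native to the present paper) via the \emph{modulated energy} / splitting method, now standard for Coulomb gas concentration. Write the empirical measure $X = \sum_i \delta_{x_i}$. Expanding the double integral $\tfrac12 \iint \g\, d(X - \mueq^N)^{\otimes 2}$ off the diagonal and using the Euler--Lagrange relation $\g \ast \mueq + V \equiv c_V$ on $\supp \mueq$ to combine the cross term with the confinement $\sum_i V_N(x_i)$, one obtains a splitting formula of the form
\begin{equation*}
\mcl H^{V_N}(X_N) = \mcl F_N + \tfrac12 \iint_{\{x \ne y\}} \g(x-y)\, d(X - \mueq^N)(x)\, d(X - \mueq^N)(y) + O(N),
\end{equation*}
where $\mcl F_N$ is an explicit reference energy depending only on $V$. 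The second term on the right is the renormalized modulated energy.

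The key step is a coercivity estimate bounding the modulated energy below by a multiple of $d_{\mathrm{BL},N}(X,\mueq^N)^2$. Smoothing each point mass $\delta_{x_i}$ to the uniform measure $\sigma_i^\eta$ on $\pa B_\eta(x_i)$ costs at most $CN \g(\eta)$ by Newton's theorem, while the smoothed modulated energy is exactly $c_d^{-1} \| \nab h^\eta \|_{L^2(\R^d)}^2$ for the potential $h^\eta = \g \ast (X^\eta - \mueq^N \ast \sigma^\eta)$, i.e.\ a negative Sobolev norm of the fluctuation. Interpolating this $\dot H^{-1}$-type norm against the bounded-Lipschitz norm (whose test functions are $1$-Lipschitz with $L^\infty$ norm at most $N^{1/d}$) and optimizing $\eta$ yields a bound of the form
\begin{equation*}
\iint_{\{x \ne y\}} \g\, d(X - \mueq^N)^{\otimes 2} \geq c N^{-1} (\log N)^{-1} d_{\mathrm{BL},N}(X,\mueq^N)^2 - CN.
\end{equation*}

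The conclusion is by exponential Markov. A matching lower bound $\mcl Z \geq e^{-\beta \mcl F_N - C\beta N}$ on the partition function is obtained by restricting the integration to configurations drawn from a mollified equilibrium measure with a hard-core condition controlling the double sum. Dividing and applying Markov to $e^{-\beta(\mcl H^{V_N} - \mcl F_N)}$ together with the coercivity bound then gives
\begin{equation*}
\P^{V_N}_{N,\beta}(\{d_{\mathrm{BL},N}(X,\mueq^N) \geq N\sqrt{\log N}\, r\}) \leq e^{-cN \log N\, r^2}
\end{equation*}
for $r$ above a fixed threshold, which is the stated estimate.

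The main obstacle is the coercivity step. The Coulomb singularity forces the introduction of a smoothing scale $\eta$ whose choice must balance the smoothing error $N\g(\eta)$ against the $\dot H^{-1}$-to-BL interpolation loss; a naive interpolation yields only a weak polynomial tail in $r$, and recovering the sharp Gaussian-type exponent $cN \log N\, r^2$ requires tracking the optimal $\eta$ carefully and exploiting the specific logarithmic structure of the kernel, which in $d=2$ is the source of the $\log N$ factor visible in the statement (with an analogous, simpler argument in $d \geq 3$).
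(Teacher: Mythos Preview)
The paper does not give its own proof of this statement: it is quoted verbatim as \cite{CHM18}, Theorem~1.5, and used as a black box in the proof of \lref{lonely}. So there is nothing in the present paper to compare your proposal against.

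That said, your sketch is a faithful outline of the argument in \cite{CHM18}: splitting formula for $\mcl H^{V_N}$, coercivity of the off-diagonal modulated energy over the bounded-Lipschitz distance obtained by regularizing point masses at a scale $\eta$ and using the $\dot H^{-1}$ representation, and finally an exponential Markov inequality combined with a matching lower bound on the partition function. One small caution on your displayed coercivity inequality: as written, it yields only an $e^{-cNr^2}$ tail after Markov, not the $e^{-cN\log N\,r^2}$ claimed. In \cite{CHM18} the $\log N$ in the exponent comes from the fact that the additive error on the energy side (from the smoothing and the partition-function bound) is of order $N$ while the coercivity constant is of order $(N\log N)^{-1}$ in these blown-up coordinates; the threshold $r \geq c^{-1}$ is exactly what is needed so that $cN\log N\,r^2$ dominates the $O(N\log N)$ additive error. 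You have the mechanism right but the bookkeeping in your displayed inequality should be adjusted to reflect this.
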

	
	\begin{lemma} \label{l.lonely}
		Let $W = V_N$ with $V$ satisfying \eref{A1} in $d=2$ and \eref{A2} in $d \geq 3$. We have
		$$
		\P^{V_N}_{N,\beta}(L_{1,R}) \leq C R^{-d} + CN^{-\frac1d} \sqrt{\log N}
		$$
		for all $R \geq 1$.
	\end{lemma}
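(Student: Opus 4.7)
The approach is to combine a deterministic packing argument with a macroscopic confinement bound extracted from the CHM estimate \eref{CHM}. Define the macroscopic thickening of the blown-up droplet
$$
\Omega_N := \Sigma_N + B_{N^{1/d}}(0) = N^{1/d}(\Sigma + B_1(0)).
$$
Since $\Sigma$ is compact, $|\Omega_N| \leq C N$, and more generally $|\Omega_N + B_R(0)| \leq C N$ for all $R \leq N^{1/d}$, while $|\Omega_N + B_R(0)| \leq C R^d$ for $R \geq N^{1/d}$.

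\textbf{Step 1 (packing for confined particles).} By exchangeability,
$$
\P^{V_N}_{N,\beta}(L_{1,R}) = \frac{1}{N} \E^{V_N}_{N,\beta}\Big[ \sum_{i=1}^N \mathbf{1}_{L_{i,R}}\Big] \leq \frac{1}{N}\E^{V_N}_{N,\beta}\Big[ \sum_{i=1}^N \mathbf{1}_{L_{i,R} \cap \{x_i \in \Omega_N\}}\Big] + \P^{V_N}_{N,\beta}(x_1 \notin \Omega_N).
$$
On $L_{i,R}$ the open balls $B_{R/2}(x_i)$ (indexed by lonely $i$) are pairwise disjoint, and when additionally $x_i \in \Omega_N$ they all lie in $\Omega_N + B_{R/2}(0)$. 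Hence the number of such lonely indices is at most $|\Omega_N + B_{R/2}(0)|/|B_{R/2}|$, which is $\leq C N R^{-d}$ for $R \leq N^{1/d}$ and $\leq C$ for $R > N^{1/d}$. In both regimes this contribution is bounded by $C R^{-d}$.

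\textbf{Step 2 (confinement via CHM).} To bound $\P^{V_N}_{N,\beta}(x_1 \notin \Omega_N) = N^{-1}\E^{V_N}_{N,\beta}[X(\R^d \setminus \Omega_N)]$, apply \eref{CHM} with the $1$-Lipschitz test function $f(x) = \min(\dist(x,\Sigma_N), N^{1/d})$. Then $\|f\|_{C_N^{0,1}} = N^{-1/d}\|f\|_\infty + \mathrm{Lip}(f) \leq 2$, so taking $r = c^{-1}$ in \eref{CHM} gives, on an event of probability at least $1 - e^{-c N \log N}$,
$$
\Big| \int_{\R^d} f \, d(X - \mueq^N)\Big| \leq C N \sqrt{\log N}.
$$
Since $\mueq^N$ is supported in $\Sigma_N$ (where $f = 0$) and $f(x_i) = N^{1/d}$ for every $x_i \notin \Omega_N$, the left-hand side is at least $N^{1/d} X(\R^d \setminus \Omega_N)$, yielding $X(\R^d \setminus \Omega_N) \leq C N^{1-1/d}\sqrt{\log N}$ on this event. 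Combining with the trivial bound $X(\R^d \setminus \Omega_N) \leq N$ on the exceptional event (probability $\leq e^{-c N \log N}$) produces $\E^{V_N}_{N,\beta}[X(\R^d \setminus \Omega_N)] \leq C N^{1-1/d}\sqrt{\log N}$, and exchangeability converts this into $\P^{V_N}_{N,\beta}(x_1 \notin \Omega_N) \leq C N^{-1/d}\sqrt{\log N}$.

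\textbf{Conclusion and main obstacle.} Adding the two contributions yields the desired bound $CR^{-d} + CN^{-1/d}\sqrt{\log N}$, noting that for values of $N$ and $R$ where this quantity exceeds $1$ the bound is automatically valid. The conceptually delicate point is the choice of scale in the test function $f$: CHM's bound on $d_{\mathrm{BL},N}$ only gives a fairly loose discrepancy estimate of order $N\sqrt{\log N}$, and one must test against a function of height $N^{1/d}$ (the maximum allowed by the $C_N^{0,1}$ norm) in order to amplify the signal enough that the number of particles at macroscopic distance from $\Sigma_N$ gets bounded by $N^{1-1/d}\sqrt{\log N}$ rather than by $N\sqrt{\log N}$; this $N^{1/d}$ improvement is precisely what produces the sharp $N^{-1/d}\sqrt{\log N}$ single-particle bound after exchangeability.
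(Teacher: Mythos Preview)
Your proof is correct and follows essentially the same strategy as the paper: a packing argument for particles inside a macroscopic neighborhood of the droplet, combined with the CHM bounded-Lipschitz estimate to control particles outside it. The only cosmetic difference is the choice of test function: the paper uses $\varphi(x)=\max(0,1-N^{-1/d}\dist(x,\Sigma_N))$ (height $1$, $C_N^{0,1}$-norm $O(N^{-1/d})$), whereas you use the dual choice $f(x)=\min(\dist(x,\Sigma_N),N^{1/d})$ (height $N^{1/d}$, norm $O(1)$); both extract the same $N^{1-1/d}\sqrt{\log N}$ bound on the expected number of stray particles. One trivial slip in Step~1: for $R>N^{1/d}$ the packing contribution is $C/N$, which is \emph{not} $\leq CR^{-d}$ in that regime, but since $C/N\leq CN^{-1/d}\sqrt{\log N}$ it is still absorbed into the second term of the final bound (the paper sidesteps this by declaring $R\leq N^{1/d}$ without loss of generality).
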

	\begin{proof}
		We will mostly omit $V_N, N,\beta$ from the notation. 
		
		Defining $\varphi(x) = \max(0,1-N^{-1/d}\dist(x,\supp \ \mueq^N))$, one can easily check
		$$
		\| \varphi(x) \|_{C^{0,1}_N(\R^d)} \leq CN^{-1/d}.
		$$
		Therefore
		$$
		\left | \int_{\R^d} \varphi(x) (X - \mueq^N)(dx) \right | \leq \min\(CN^{-\frac1d} d_{\mathrm{BL},N}(X,\mueq^N), N\),
		$$
		and so for $r = c^{-1}$ large enough we have by \tref{CHM}
		\begin{align} \label{e.CHM}
			\E\left[ \left | \int_{\R^d} \varphi(x) (X - \mueq^N)(dx) \right | \right] &\leq N \P(\{d_{\mathrm{BL},N}(X,\mueq^N) \geq N\sqrt{\log N} r\}) + CN^{1-\frac1d} \sqrt{\log N} r \\ \notag
			&\leq Ne^{-cN \log N r^2} +  CN^{1-\frac1d}\sqrt{\log N} r \leq CN^{1-\frac1d}\sqrt{\log N}.
		\end{align}
		
		We will use \eref{CHM} to show that most points are typically within $\supp \ \varphi$. Indeed, we have
		$$
		\E\left[\int_{\R^d} \varphi(x)(X - \mueq^N)(dx) \right] \leq 	\E\left[-X((\supp \ \varphi)^c) \right] = - \sum_{i=1}^N \P(\{x_i \not \in \supp \ \varphi\}),
		$$
		and so $\P(\{x_1 \not \in \supp \ \varphi\})  \leq CN^{-\frac1d}\sqrt{\log N}$.
		
		Let $\xi_i = \min(R,\min_{j \ne i} |x_i - x_j|)$, so the event $L_{i,R}$ is equivalent to $\xi_r = R$. Since the balls $B_{\xi_i/2}(x_i)$ are disjoint, we must have
		$$
		|\{i : x_i \in \supp \ \varphi, \xi_i = R\}| \cdot R^d \leq C|\{x : \dist(x,\supp \ \varphi) \leq R\}| \leq CN,
		$$
		where we assumed WLOG $R \leq N^{\frac1d}$ and used $\supp \ \mueq$ compact in the last inequality. Thus
		$$
		\P(L_{1,R} \cap \{x_1 \in \supp \ \varphi\}) \leq CR^{-d},
		$$
		and the lemma follows.
	\end{proof}
	
	For a configuration $X_N$, let $\phi(i) = \phi_{X_N}(i) \in \{1,\ldots,N\}$ be the index of the closest particle to $i$, i.e.\ $|x_i - x_{\phi(i)}| = \min_{i \ne j} |x_i - x_j|$. This is almost surely well-defined. Also, let $T_{k,r}$ be the event that the cardinality of $\{\{i,j\} \ : \ |x_i - x_j| < r, i \ne j\}$ is at most $k$. Define the random indices $a_\ell < b_\ell$ such that $\eta_\ell = |x_{a_\ell} - x_{b_{\ell}}|$. The next proposition uses a mimicry operator to move $x_i$ and $x_{\phi(i)}$ closer together.
	
	\begin{proposition} \label{p.createpairs}
		Suppose $U(X_N) = \sum_{i=1}^N U_1(x_i)$ for a superharmonic function $U_1 : \R^d \to \R$. Let $r \in (0,1)$ and let $\nu$ be a rotationally symmetric probability measure supported in $\ov{B_r(0)}$ with a Lebesgue density, also denoted $\nu$. Abbreviate $\P = \P^{W,U}_{N,\beta}$. For any $R \geq 1$, integer $n \geq 3$, and $k \in \{1,\ldots,N\}$, we have that $\P(\{\eta_k \geq r\})$ is bounded by
		\begin{align*}
			&{C e^{C \beta r^2 + \beta \Delta_{\nu}}} \| \nu \|_{L^\infty(\R^d)} M_{r,R}\(\frac{k+n}{N} + \P(\{X(B_r(x_1)) \geq 3\}) + N\P(\{X(B_r(x_1)) \geq n\})\) \\ &\quad + \frac{2k-2}{N} + \P(L_{1,R}),
		\end{align*}
		where $\Delta_\nu = \int_{\R^d} \g(y) \nu(dy)$ and $M_{r,R} = \int_{\Ann_{[r,R]}(0)} e^{-\beta \g(y)} dy$. The constant $C$ depends on $\sup \Delta W$ and $n$.
	\end{proposition}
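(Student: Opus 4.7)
The strategy uses the model computation of Section 1.2 with a novel ``mimicry'' operator in place of isotropic averaging: given a pair $(1,j)$ with $j = \phi(1)$ the nearest neighbor of $x_1$, we replace $x_j$ by $x_1 + y$ for $y \sim \nu$, drawing it into the prescribed distribution around $x_1$. The first step is a reduction via exchangeability. On $\{\eta_k \geq r\}$, at most $2(k-1)$ indices belong to short pairs (pairs at distance $<r$), so by exchangeability the event $\{|x_1 - x_{\phi(1)}| < r\}$ has probability at most $2(k-1)/N$ on this event, while $\{|x_1 - x_{\phi(1)}| > R\} = L_{1,R}$. This yields
$$
\P(\eta_k \geq r) \leq \P(\{\eta_k \geq r\} \cap \{|x_1 - x_{\phi(1)}| \in [r,R]\}) + \frac{2(k-1)}{N} + \P(L_{1,R}).
$$

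For the main term, split by $\phi(1) = j$ and use exchangeability over $j \in \{2,\ldots,N\}$ to reduce to $j=2$ at the price of a factor $N-1$; call the resulting event $G$. Now apply the model computation with the mimicry operator $\mathrm{Mim}_\nu$ that averages over replacing $x_2$ by $x_1 + y$, $y \sim \nu$. Using Newton's theorem for the Coulomb interaction of the new radial charge with the far particles, superharmonicity of $U_1$ in each variable, and the bound $\Delta W \leq C$ (as in the computation leading to \eref{iso.W} in \pref{1C.LL.iso}), one obtains the key energy estimate
$$
\mathrm{Mim}_\nu \mcl H^{W,U}(X_N) - \mcl H^{W,U}(X_N) \leq \Delta_\nu + C r^2 - \g(x_1 - x_2),
$$
the last term accounting for the change in the pairwise $(x_1,x_2)$ interaction from $\g(x_1-x_2)$ to $\Delta_\nu$. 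Jensen's inequality then gives
$$
\P(G) \leq e^{\beta \Delta_\nu + C\beta r^2}\, \E\!\left[\mathrm{Mim}_\nu^\ast\!\left(\mathbf{1}_G\, e^{-\beta \g(x_1-x_2)}\right)\right].
$$

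The adjoint acts by $\mathrm{Mim}_\nu^\ast h(X_N) = \nu(x_2 - x_1)\int h(x_1,x_2',x_3,\ldots,x_N)\, dx_2'$. The constraint $\phi(1) = 2$ in the pre-mimicry coordinates forces $|x_1 - x_2'| \in [r,R]$ and $|x_1 - x_j| \geq |x_1 - x_2'| \geq r$ for $j \geq 3$; integrating $e^{-\beta \g(x_1-x_2')}$ over the annulus produces exactly $M_{r,R}$, the prefactor $\nu(x_2-x_1) \leq \|\nu\|_{L^\infty}$ forces $x_2 \in \ov{B_r(x_1)}$, and the structural constraint forces $x_j \notin B_r(x_1)$ for $j \geq 3$ in the post-mimicry configuration. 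Multiplying back by $N-1$ from the exchangeability step, one reduces the main term to $C e^{\beta \Delta_\nu + C\beta r^2}\|\nu\|_{L^\infty} M_{r,R}$ times the probability that $X_N$ has precisely the structure ``$x_2 \in B_r(x_1)$ and $x_j \notin B_r(x_1)$ for $j \geq 3$''. This residual probability is then decomposed: the bulk contribution is bounded by $(k+n)/N$ via exchangeability among the labels of particles in $B_r(x_1)$ together with a counting bound (using the $\eta_k \geq r$ constraint to limit the number of short pairs and the cutoff $n$ to tame the pigeonhole), the triple-cluster error $\P(X(B_r(x_1)) \geq 3)$ handles configurations where mimicry would produce an unintended third particle in $B_r(x_1)$, and $N\P(X(B_r(x_1)) \geq n)$ absorbs the tail where too many particles cluster near $x_1$. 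The latter two are controlled by \tref{1C.cluster}.

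The main obstacle is the precise energy-change bookkeeping in the Newton/superharmonicity step: one must verify that, uniformly in $X_N \in G$, the only ``bad'' interaction is the $x_1$--$x_2$ pair (captured by the $-\g(x_1-x_2)$ term), while all other pairs and the $W,U$ contributions yield at most an additive $Cr^2$. A secondary delicate point is translating the random index $\phi(1)$ through the adjoint: the $\phi(1)=2$ constraint in the pre-mimicry coordinates becomes a geometric constraint on the post-mimicry $X_N$ (no particles of index $\geq 3$ in $B_r(x_1)$), and this constraint must be carefully combined with exchangeability to produce the stated $(k+n)/N$ bulk factor without double-counting.
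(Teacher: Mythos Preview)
Your overall architecture matches the paper's, but the energy estimate you claim for the mimicry operator is incorrect, and this is a genuine gap rather than bookkeeping. When $\mathrm{Mim}_\nu$ replaces $x_2$ by $x_1+y$ with $y\sim\nu$, Newton's theorem gives $\mathrm{Mim}_\nu\,\g(x_2-x_j)\le\g(x_1-x_j)$ for $j\ge3$, not $\le\g(x_2-x_j)$. Likewise $\mathrm{Mim}_\nu W(x_2)\le W(x_1)+Cr^2$ and $\mathrm{Mim}_\nu U_1(x_2)\le U_1(x_1)$. So
\[
\mathrm{Mim}_\nu\,\mcl H^{W,U}(X_N)-\mcl H^{W,U}(X_N)\le \Delta_\nu-\g(x_1-x_2)+Cr^2+\Big(\sum_{j\ge3}\g(x_1-x_j)+W(x_1)+U_1(x_1)\Big)-\Big(\sum_{j\ge3}\g(x_2-x_j)+W(x_2)+U_1(x_2)\Big),
\]
and the bracketed difference has no sign control uniformly on $G$: the ``other pairs and $W,U$ contributions'' do \emph{not} yield only $Cr^2$. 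Your claimed inequality $\mathrm{Mim}_\nu\,\mcl H-\mcl H\le\Delta_\nu+Cr^2-\g(x_1-x_2)$ is therefore false in general.

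The paper's fix, which you are missing, is to introduce \emph{two} mimicry operators, $\mathrm{Mim}_{1,2,\nu}$ (your $\mathrm{Mim}_\nu$) and the symmetric $\mathrm{Mim}_{2,1,\nu}$ that moves $x_1$ to mimic $x_2$. The bracketed difference above changes sign between the two, so at least one of the two mimicry directions satisfies your desired energy bound. Concretely, the paper proves
\[
e^{-\beta\,\mathrm{Mim}_{1,2,\nu}\mcl H}+e^{-\beta\,\mathrm{Mim}_{2,1,\nu}\mcl H}\ge e^{-C\beta r^2}e^{\beta(\g(x_1-x_2)-\Delta_\nu)}e^{-\beta\mcl H},
\]
and then runs the adjoint computation for both operators, obtaining two terms $A_1,A_2$ each bounded by $\|\nu\|_{L^\infty}M_{r,R}\,\P(\{|x_1-x_2|<r\}\mid X_{N,3})$. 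The factor of $2$ is absorbed into $C$. The rest of your outline---conditioning on $X_{N,3}$, restoring $\phi(1)=2$, and extracting the $(k+n)/N$ factor via $T'_{k-1,r}(\phi(1))\subset T_{k-1+2n,r}$ on the event $\{\max_i X(B_r(x_i))<n\}$---is essentially what the paper does, though your description of that last step is vague and would need the explicit passage through $T_{k-1+2n,r}$ to be made rigorous.
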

	\begin{proof}
		We will abbreviate $\P = \P^{W,U}_{N,\beta}$ and $\E = \E^{W,U}_{N,\beta}$ throughout the proof.
		
		First note that $T_{k-1,r} = \{\eta_k \geq r\}$. Furthermore, on $T_{k-1,r}$ we have $|x_1 - x_{\phi(1)}| \geq r$ unless $a_\ell = 1$ or $b_\ell = 1$ for some $\ell \in \{1,\ldots,k-1\}$. We will furthermore want to fix the label $\phi(1)$ and ensure $x_1$ is not $R$-lonely, which inspires the bound
		\begin{align} \label{e.Tkr.breakdown}
			\P(T_{k-1,r})  &\leq (N-1)\P(\{|x_1 - x_2| \geq r\} \cap \{\phi(1) = 2\} \cap L_{1,R}^c \cap T_{k-1,r}) + \P(L_{1,R}) \\ \notag &\quad + \P(\{\exists \ell \in \{1,\ldots,k-1\} \ a_\ell = 1 \ \text{or}\ b_\ell = 1 \}).
		\end{align}
		The $N-1$ factor comes from the fact that $\phi(1)$ is equally likely to be each of $\{2,\ldots,N\}$. 	The first probability on the RHS of \eref{Tkr.breakdown} is suitable to apply a transport procedure to move $x_2$ closer to $x_1$, but first we bound the last term. By exchangeability, we have
		\begin{equation} \label{e.albl.bd}
			\P(\{\exists \ell \in \{1,\ldots,k-1\} \ a_\ell = 1 \ \text{or}\ b_\ell = 1 \}) \leq \frac{2k-2}{N}
		\end{equation}
		since $\{a_\ell : \ell=1,\ldots,k-1\} \cup \{a_\ell : \ell=1,\ldots,k-1\}$ is random subset of $\{1,\ldots,N\}$ size at most $2k-2$.
		
		We condition on $X_{N,3} = (x_3,x_4,\ldots,x_N)$ to rewrite
		\begin{align} \label{e.etak.issmall.cond}
			\lefteqn{ \P(\{|x_1 - x_2| \geq r\} \cap \{\phi(1) = 2\} \cap L_{1,R}^c \cap T_{k-1,r}) } \quad & \\ \notag &\leq \P(T'_{k-1,r} \cap \{|x_1 - x_2| \in [r,R]\} \cap \{\phi(1) = 2\}) \\ \notag
			&= \E\left[ \1_{T'_{k-1,r}} \P(\{|x_1 - x_2| \in [r,R]\} \cap \{\phi(1) = 2\} \ | \ X_{N,3}) \right],
		\end{align}
		where $T'_{k-1,r}$ is the event that the cardinality of $\{\{i,j\} \subset \{3,\ldots,N\} \ : \ |x_i - x_j| < r, i \ne j \}$ is at most $k-1$.
		
		We next define a new type of isotropic averaging operator to apply  to the conditional probability above. For a rotationally symmetric probability measure $\nu$ on $\R^d$, define the mimicry operators
		\begin{align*}
			\mathrm{Mim}_{1,2,\nu} F(x_1,x_2) &= \int_{\R^d} F(x_1,x_1 + y)\nu(dy), \\
			\mathrm{Mim}_{2,1,\nu} F(x_1,x_2) &= \int_{\R^d} F(x_2 + y,x_2)\nu(dy).
		\end{align*}
		Note that
		$$
		\mathrm{Mim}_{1,2,\nu} \sum_{j=3}^N \g(x_2 - x_j) \leq \sum_{j=3}^N \g(x_1 - x_j)
		$$
		and
		$$
		\mathrm{Mim}_{1,2,\nu} U_1(x_2) \leq U_1(x_1), \quad
		\mathrm{Mim}_{1,2,\nu} W(x_2) \leq W(x_1) + C r^2
		$$
		since $U_1$ is superharmonic and $\supp \ \nu \subset \ov{B_r(0)}$. Define
		$$
		\Delta_{\nu} = \mathrm{Mim}_{1,2,\nu} \g(x_1 - x_2) = \mathrm{Mim}_{2,1,\nu} \g(x_1 - x_2)
		$$
		which depends only on $\nu$. Analogous results as above hold for $\mathrm{Mim}_{2,1,\nu}$. It follows that
		$$
		e^{-\beta \mathrm{Mim}_{1,2,\nu}  \mcl H^{W, U}(X_N) } + e^{-\beta \mathrm{Mim}_{2,1,\nu}  \mcl H^{W,U}(X_N) } \geq e^{-C\beta r^2} e^{\beta (\g(x_1 - x_2) - \Delta_\nu) } e^{-\beta \mcl H^{W,U}(X_N)}.
		$$
		Indeed, the first summand on the LHS dominates the RHS in the event that
		$$
		\sum_{j=3}^N \g(x_1 - x_j) + U_1(x_1) + W(x_1) \leq \sum_{j=3}^N \g(x_2 - x_j) + U_1(x_2) + W(x_2)
		$$
		since $x_2$ interacts with $x_3,\ldots,x_N$ similar to how $x_1$ does after applying the operator $\mathrm{Mim}_{1,2,\nu}$. When the reverse inequality is true, the second summand on the LHS is dominating. The adjoints are easily computed as
		\begin{align*}
			\mathrm{Mim}_{1,2,\nu}^\ast F(x_1,x_2) &= \nu(x_2 - x_1) \int_{\R^d} F(x_1,y)dy, \\
			\mathrm{Mim}_{1,2,\nu}^\ast F(x_1,x_2) &= \nu(x_1 - x_2) \int_{\R^d} F(y,x_2)dy.
		\end{align*}
		
		We will now apply the model computation to $\P(\cdot \ | \ X_{N,3})$. For a normalizing factor $\mcl Z(X_{N,3})$, we have a.s.\
		\begin{align*}
			\lefteqn{ \P(\{|x_1 - x_2| \in [r,R]\} \cap \{\phi(1) = 2\} \ | \ X_{N,3}) } \quad & \\ &= \frac{1}{\mcl Z(X_{N,3})} \iint_{\R^d \times \R^d} \1_{\{|x_1-x_2| \in [r,R]\} \cap \{\phi(1) = 2\}}(x_1,x_2)e^{-\beta \mcl H^{W,U}(X_N)} dx_1 dx_2 \\ \notag
			&\leq \frac{e^{C\beta r^2 + \beta \Delta_\nu}}{\mcl Z(X_{N,3})} (A_1 + A_2),
		\end{align*}
		where (with a slight abuse of notation)
		$$
		A_1 = \iint_{\R^d \times \R^d} \mathrm{Mim}_{1,2,\nu}^\ast \( e^{-\beta \g(x_1 - x_2)}  \1_{\{|x_1-x_2| \in [r,R]\} \cap \{\phi(1) = 2\}} \)(x_1,x_2) e^{-\beta \mcl H^{W,U}(X_N)} dx_1 dx_2
		$$
		and $A_2$ identical but with $\mathrm{Mim}_{2,1,\nu}^\ast$ in the place of $\mathrm{Mim}_{1,2,\nu}^\ast$. Note that $\mathrm{Mim}_{1,2,\nu}^\ast$ is monotonic and
		$$
		\1_{\{|x_1-x_2| \in [r,R]\} \cap \{\phi(1) = 2\}} \leq \1_{\{|x_1-x_2| \in [r,R]\}},
		$$
		whence
		$$
		\mathrm{Mim}_{1,2,\nu}^\ast \( e^{-\beta \g(x_1 - x_2)}  \1_{\{|x_1-x_2| \in [r,R]\} \cap \{\phi(1) = 2\}} \)(x_1,x_2) \leq \nu(x_2 - x_1)  \int_{\Ann_{[r,R]}(x_1)} e^{-\beta \g(x_1 - y)}  dy.
		$$
		Define $M_{r,R} = \int_{\Ann_{[r,R]}(0)} e^{-\beta \g(y)}dy$. Since $\nu$ is supported in $\ov{B_r(0)}$, the RHS above is a.s.\ bounded by
		$$
		\| \nu \|_{L^\infty(\R^d)} M_{r,R} \1_{B_r(0)}(x_1 - x_2),
		$$
		and we find
		$$
		\frac{A_1}{\mcl Z(X_{N,3})} \leq 	M_{r,R} \| \nu \|_{L^\infty(\R^d)} \P(\{ |x_1 - x_2| < r \} \ | \ X_{N,3}).
		$$
		We can prove an identical bound for $A_2$ with the same argument, and using the bounds in \eref{etak.issmall.cond} shows
		\begin{equation} \label{e.mimic.bd}
			\P(\{|x_1 - x_2| \geq r\} \cap \{\phi(1) = 2\} \cap L_{1,R}^c \cap T_{k-1,r}) \leq \kappa \P( \{|x_1 - x_2| < r\} \cap {T'_{k-1,r}}).
		\end{equation}
		where
		$$
		\kappa := 2e^{C \beta r^2 +  \beta \Delta_\nu}M_{r,R} \| \nu \|_{L^\infty(\R^d)}.
		$$
		We wish to partially recover the information $\phi(1) = 2$ after the transport, i.e.\ on the RHS of \eref{mimic.bd}. To do so, we use the bound
		$$
		\P( \{|x_1 - x_2| < r\} \cap \{\phi(1) \ne 2\}) \leq \frac{C_n}{N} \P(\{X(B_r(x_1)) \geq 3\}) + \P(\{X(B_r(x_1)) \geq n\}),
		$$
		finding (starting from \eref{mimic.bd})
		\begin{align} \label{e.etak.isbig.mimic.bd2}
			\lefteqn{ \P(\{|x_1 - x_2| \geq r\} \cap \{\phi(1) = 2\} \cap L_{1,R}^c \cap T_{k-1,r}) } \quad & \\ \notag &\leq \kappa \bigg (\P( \{|x_1 - x_2| < r\} \cap \{\phi(1) = 2\}\cap {T'_{k-1,r}}) \\ \notag &\quad + \frac{C_n}{N} \P(\{X(B_r(x_1)) \geq 3\}) + \P(\{X(B_r(x_1)) \geq n\}) \bigg).
		\end{align}
		
		Let $T'_{k-1,r}(j)$ be the event that $\{x_i : i \in \{2,\ldots,N\}, i \ne j\}$ contains at most $k-1$ pairs of points within distance $r$. For example, $T'_{k-1,r}(2) = T'_{k-1,r}$. Since the events $\{\phi(i) = j\}$, $j=2,\ldots,N$ are disjoint up to probability $0$, we see
		\begin{align} \label{e.etak.isbig.gain1}
			\lefteqn{ \P( \{|x_1 - x_2| < r\} \cap \{\phi(1) = 2\}\cap {T'_{k-1,r}})} \quad & \\ \notag &= \frac{1}{N-1} \sum_{j=2}^N \P( \{|x_1 - x_j| < r\} \cap \{\phi(1) = j\}\cap {T'_{k-1,r}(j)}) \\ \notag &\leq \frac{1}{N-1}\P(\{|x_1 - x_{\phi(1)}| < r\} \cap T'_{k-1,r}(\phi(1))\}).
		\end{align}
		On the event $\{|x_1 - x_{\phi(1)}| < r\} \cap T'_{k-1,r}(\phi(1))\}$, the index $1$ is an exceptional index since there likely are very few particles within distance $r$ of each other, so we expect the probability of the event is of order $O(1/N)$. To be precise, if $X(B_r(x_i)) \leq n$ for all $i \in \{1,\ldots,N\}$, then $T'_{k-1,j}(\phi(1))$ occurring implies $T_{k-1+2n,r}$. Thus
		\begin{align} \label{e.etak.isbig.gain2}
			\lefteqn{ \P (  {\{|x_1 - x_{\phi(1)}| < r \}} \cap {T'_{k-1,j}(\phi(1))}) } \quad & \\ \notag &\leq \P(\{\exists i \ X(B_r(x_i)) \geq n\}) + \P (  {\{|x_1 - x_{\phi(1)}| < r \}} \cap T_{k-1+2n,r}) \\ \notag &\leq N \P(\{X(B_r(x_1)) \geq n\}) +  \P (  {\{|x_1 - x_{\phi(1)}| < r \}} \cap T_{k-1+2n,r}).
		\end{align}
		Since, by definition of $T_{k-1+2n,r}$, we have pointwise a.s.\
		$$
		\sum_{i=1}^N \1_{\{|x_i - x_{\phi(i)}| < r \}} \1_{T_{k-1+2n, r}}  \leq 2k-2+4n,
		$$
		we can apply exhchangeability to see
		$$
		\P (  {\{|x_1 - x_{\phi(1)}| < r \}} \cap T_{k-1+2n,r}) \leq \frac{2k-2+4n}{N}.
		$$
		Collecting this estimate, \eref{etak.isbig.gain2}, \eref{etak.isbig.gain1}, and \eref{etak.isbig.mimic.bd2}, we have
		\begin{align*}
		\lefteqn{ \P(\{|x_1 - x_2| \geq r\} \cap \{\phi(1) = 2\} \cap L_{1,R}^c \cap T_{k-1,r}) } \quad & \\ &\leq \kappa\(\frac{2k-2+4n}{N(N-1)} + \frac{C_n}{N} \P(\{X(B_r(x_1)) \geq 3\}) + 2\P(\{X(B_r(x_1)) \geq n\})\).
		\end{align*}
		Finally, plugging this bound into \eref{Tkr.breakdown} along with \eref{albl.bd} proves the proposition.
	\end{proof}
	
	\begin{proposition} \label{p.etak.issmall}
		Consider $W = V_N$ with $V$ satisfying \eref{A1} in $d=2$ and \eref{A2} in $d \geq 3$. In $d=2$, the law of $N^{\frac{1}{2+\beta}}\eta_k$ is tight as $N \to \infty$ and $\limsup_{N \to \infty} \P^{V_N}_{N,\beta}(\{N^{\frac{1}{2+\beta}} \eta_k \geq \gamma\}) \leq C\gamma^{-\frac{4+2\beta}{4+\beta}}$ for $\gamma > 0$. For $d \geq 3$, let $Z_k$ be defined by
		$$
		\eta_k = \(\frac{\beta}{\log N} \)^{\frac{1}{d-2}} \( 1 + \frac{2d-2}{(d-2)^2}  \frac{ \log \log N}{\log N} + \frac{Z_k}{(d-2) \log N}  \).
		$$
		Then we have $\limsup_{N \to \infty} \P^{V_N}_{N,\beta}(\{ Z_k \geq \gamma \}) \leq Ce^{-\gamma/2}$. 
	\end{proposition}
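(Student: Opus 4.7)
The plan is to apply \pref{createpairs} with carefully chosen parameters $r$, $R$, and mimicry measure $\nu$, then control each resulting term using \tref{1C.cluster} and \lref{lonely}. In both dimensions I would take $\nu$ to be the uniform probability measure on a thin annulus $\Ann_{[r(1-\delta_0),r]}(0)$, yielding $\Delta_\nu = \g(r)(1+O(\delta_0))$ and $\|\nu\|_{L^\infty} \leq C/(r^d \delta_0)$; a direct integration gives $M_{r,R} \leq CR^{2+\beta}$ in $d=2$ and $M_{r,R} \leq CR^d$ in $d \geq 3$, provided $R \geq 2r$. The higher-cluster probabilities $\P(X(B_r(x_1)) \geq 3)$ and $\P(X(B_r(x_1)) \geq n)$ are handled by \eref{cluster.centered.2} and \eref{cluster.centered.3} (with $n$ chosen large), and the lonely-point probability $\P(L_{1,R})$ by \lref{lonely}. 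In both regimes, the optimal bound comes from balancing the ``mimicry'' contribution $e^{\beta\Delta_\nu}\|\nu\|_{L^\infty} M_{r,R}(k+n)/N$ against $\P(L_{1,R})$.

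For $d=2$, I would set $r = \gamma N^{-1/(2+\beta)}$ and $\delta_0 = 1/2$, so that $e^{\beta \Delta_\nu}\|\nu\|_{L^\infty} M_{r,R} \leq C (R/r)^{2+\beta}$ and the first term of \pref{createpairs} reduces to $C R^{2+\beta}(k+n)/\gamma^{2+\beta}$. Since $\P(L_{1,R}) \leq CR^{-2} + o(1)$, balancing $R^{2+\beta}/\gamma^{2+\beta}$ against $R^{-2}$ dictates $R = \gamma^{(2+\beta)/(4+\beta)}$, giving combined bound of order $\gamma^{-(4+2\beta)/(4+\beta)}$, as claimed. For $d \geq 3$, I would take $r$ equal to the threshold value making $Z_k = \gamma$, which after a short expansion gives $\beta/r^{d-2} = \log N - \tfrac{2d-2}{d-2}\log \log N - \gamma + o(1)$, and I would choose $\delta_0 = c/\log N$ so that $\beta \Delta_\nu = \beta/r^{d-2} + O(1)$. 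Then $e^{\beta \Delta_\nu} \leq C N e^{-\gamma}(\log N)^{-(2d-2)/(d-2)}$ and $\|\nu\|_{L^\infty} \leq C (\log N)/r^d$; using $r^{-d} = ((\log N)/\beta)^{d/(d-2)}(1+o(1))$, the polylogarithmic factors cancel exactly and the first term of \pref{createpairs} reduces to $CR^d e^{-\gamma}\beta^{-d/(d-2)}(k+n)$. Combining with $\P(L_{1,R}) \leq CR^{-d}$ and balancing gives $R = e^{\gamma/(2d)}$ and total bound $Ce^{-\gamma/2}$.

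The main technical obstacle is the $d \geq 3$ case, where one must propagate the sub-leading logarithmic corrections in $\beta/r^{d-2}$ with enough precision through the cancellation against the polylogarithmic factor from $\|\nu\|_{L^\infty}$ to preserve the sharp $e^{-\gamma}$ decay; the choice $\delta_0 \sim 1/\log N$ is dictated by balancing the $O(\delta_0/r^{d-2})$ correction in $\Delta_\nu$ against the $1/\delta_0$ growth in $\|\nu\|_{L^\infty}$. The clustering terms from \tref{1C.cluster} carry additional super-polynomially small factors (a negative power of $N$ in $d=2$ by \eref{cluster.centered.2}, and an $N^{-c}$ factor in $d \geq 3$ arising from the exponential $e^{-c\beta/r^{d-2}}$ in \eref{cluster.centered.3}), which easily absorb the prefactor growth and render those contributions negligible. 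Once the tail bound is established, the tightness in $d=2$ is immediate by combining it with the matching lower tail of \pref{etak.isbig}.
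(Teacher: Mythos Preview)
Your proposal is correct and follows essentially the same approach as the paper: you apply \pref{createpairs} with $\nu$ uniform on a thin annulus, choose $r$ at the natural threshold scale, bound clustering and lonely-point probabilities via \tref{1C.cluster} and \lref{lonely}, and optimize in $R$. The specific choices (annulus thickness $\delta_0=1/2$ in $d=2$ and $\delta_0\sim 1/\log N$ in $d\geq 3$, with the balancing $R=\gamma^{(2+\beta)/(4+\beta)}$ and $R=e^{\gamma/(2d)}$ respectively) coincide with the paper's, and your identification of the polylogarithmic cancellation in $d\geq 3$ is exactly the point of the computation there.
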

	\begin{proof}
		Both results are consequences of \pref{createpairs}, \lref{lonely}, and our clustering result \tref{1C.cluster}. We adopt the notation from \pref{createpairs}. For the $d=2$ result, choose $r = \gamma N^{-\frac{1}{2+\beta}}$ for $\gamma \geq 1$, $n = 5$ (say), and let $\nu$ be the uniform probability measure on $\Ann_{[r/2,r]}(0)$. Without loss of generality, we assume $r \leq 1$. We compute
		\begin{align*}
			&M_{r,R} \leq CR^{2+\beta}, \\
			&\Delta_{\nu} \leq -\log r + C, \\
			&\| \nu \|_{L^\infty(\R^2)} \leq Cr^{-2}.
		\end{align*}
		Applying \pref{createpairs}, \lref{lonely}, and \tref{1C.cluster}, we see
		\begin{equation}
			\P^{V_N}_{N,\beta}(\{\eta_k \geq r\}) \leq CN^{-\frac1d}\sqrt{\log N} + CR^{-2} + C R^{2+\beta} \gamma^{-2-\beta}e^{C\beta N^{-\frac{2}{2+\beta}}\gamma^2}\(1 + N^{1-\frac{4+3\beta}{2+2\beta}} \gamma^{c_\beta}\)
		\end{equation}
		for a constant $C$ depending on $k$ and some constant $c_\beta > 0$. Taking $\limsup_{N \to \infty}$ of both sides and optimizing in $R$ proves that 
		$$
		\limsup_{N \to \infty} \P^{V_N}_{N,\beta}(\{N^{\frac{1}{2+\beta}} \eta_k \geq \gamma\}) \leq CR^{-d} + CR^{2+\beta} \gamma^{-2-\beta} \leq C\gamma^{-\frac{4+2\beta}{4+\beta}}.
		$$
		
		For the $d \geq 3$ result, choose $r = \(\frac{\beta}{\log N - \frac{2d-2}{d-2} \log \log N - \gamma}\)^{\frac{1}{d-2}}$ for $\gamma > 0$ and WLOG assume $r \leq 1$. Let $\nu$ be the uniform measure on the annulus $\Ann_{[(1 -(\log N)^{-1})r, r]}(0)$, and compute
		\begin{align*}
			&M_{r,R} \leq CR^d, \\
			&\Delta_\nu \leq \frac{1}{(1-(\log N)^{-1})^{d-2} r^{d-2}} \leq \frac{1}{r^{d-2}} \(1 + \frac{d-2}{\log N} + C(\log N)^{-2}\), \\
			&\| \nu \|_{L^\infty(\R^d)} \leq \frac{C\log N}{r^d}.
		\end{align*}
		We can then estimate
		\begin{align*}
			\frac{e^{C \beta r^2 + \beta \Delta_{\nu}}}{N} M_{r,R} \| \nu \|_{L^\infty(\R^d)} &\leq CR^d \exp\( \frac{\beta}{r^{d-2}} \(1 + \frac{d-2}{\log N} \) + \log \log N - d \log r - \log N\) \\
			&\leq C R^d \exp\(- \gamma + d-2 - \frac{2d-2}{d-2} \frac{\log \log N}{\log N} \) \leq CR^d e^{-\gamma}.
		\end{align*}
		Thus we have
		\begin{align*}
			\P^{V_N}_{N,\beta}(\{\eta_k \geq r\}) &\leq CN^{-\frac1d} \sqrt{\log N} + CR^{-d} + CR^d e^{-\gamma}(1+N\P^{V_N}_{N,\beta}(\{X(B_r(x_1)) \geq 3\}) \\ &\quad + N^2\P^{V_N}_{N,\beta}(\{X(B_r(x_1)) \geq n\})) \\
			&\leq CN^{-\frac1d} \sqrt{\log N} + CR^{-d} + CR^d e^{-\gamma}(1+Ne^{-\frac{2\beta}{r^{d-2}}} + N^2 e^{-\frac{(n-1) \beta}{r^{d-2}}}) \\
			&\leq CN^{-\frac1d} \sqrt{\log N} + CR^{-d} + CR^d e^{-\gamma}(1+e^{C\gamma - \log N + C\log \log N}),
		\end{align*}
		where we chose $n=4$. We conclude
		$$
		\limsup_{N \to \infty} \P^{V_N}_{N,\beta}(\{\eta_k \geq r\}) \leq CR^{-d} + CR^d e^{-\gamma} \leq Ce^{-\gamma/2}.
		$$
		To conclude the result on $Z_k$, note that
		$$
		r = \(\frac{\beta}{\log N}\)^{\frac{1}{d-2}} \(1 + \frac{2d-2}{(d-2)^2} \frac{\log \log N}{\log N} + \frac{\gamma}{(d-2) \log N}\) + O((\log N)^{-\frac{1}{d-2} - 2 + \ep})
		$$
		for any $\ep > 0$.
	\end{proof}
	
	\section{Discrepancy Bounds} \label{s.discrepancy}
	In this section, we prove \tref{fLL.over} and \tref{fLL.improved}. All implicit constants $C$ may depend on $\beta$ and on various characteristics of $W$ or $V$. We let $\mu(dx) = \frac{1}{c_d} \Delta W(x)dx$ throughout.
	
	It will be necessary to consider more general domains $\Omega$ for our discrepancy bounds. We will work with a domain $\Omega \subset \R^d$ that is an {\it $\alpha$-thin annulus} for fixed $\alpha \in (0,1]$. To be precise, we will take either $\Omega$ to be a ball of radius $R$ if $\alpha = 1$ or $\Omega = \Ann_{[R-\alpha R,R]}(z)$ for $\alpha \in (0,1)$. This means there exists some $C > 0$ such that
	\begin{equation} \label{e.thin.annular}
		C^{-1} \alpha R^{d} \leq |\Omega| \leq C \alpha R^{d} \quad \text{where}\ 2R = \diam(\Omega).
	\end{equation}
	We also define thickened and thinned versions of $\Omega$. For $s \geq 0$, define $\Omega_s = \Omega \cup \{x \in \R^d : \dist(x,\pa \Omega) \leq s\}$, and for $s < 0$, define $\Omega_s = \{x \in \Omega : \dist(x,\pa \Omega) \geq |s|\}$. We will use $2R$ for $\diam(\Omega)$ throughout the first two subsections. We assume $\alpha R \geq C$ dependent on $\beta$, $\sup \Delta W$, and $\inf_{\Omega_R} \Delta W$.

	We will consider the event $E_{\rho,r,M}$ for parameters $\rho,r,M>0$. It is defined by
	\begin{equation} \label{e.Erho.def}
		E_{\rho,r,M} = \{X(\Omega) \geq \mu(\Omega) + \rho |\Omega|\} \cap \{X(\Omega_{5r} \setminus \Omega_{-5r}) \leq M R^{d-1}r \}.
	\end{equation}
	Let $\phi : \R^d \to \R$ be a smooth, nonnegative, radial function with $\int_{\R^d} \phi(x) dx = 1$ and support within $B_1(0)$. For $s > 0$, define $\phi_s = s^{-d} \phi(x/s)$, and $\psi_s = \phi_s \ast \phi_1$. For a measure (or function) $\lambda$, we will write $\Phi_s \lambda$ for $\phi_s \ast \lambda$ when it makes sense. We will apply the isotropic averaging argument with operator $\Iso_{\mbb X(\Omega), \psi_r}$ (identifying $\psi_r$ with the measure $\psi_r dx$). The analysis mainly hinges on a precise lower bound for
	\begin{equation} \label{e.LL.Delta}
		\Delta_{\rho,r,M} := \inf_{X_N \in E_{\rho,r,M}} \mcl H^{W,U}(X_N) - \Iso_{\mbb X(\Omega), \psi_r} \mcl H^{W,U}(X_N).
	\end{equation}
	We want to be able to take $\rho$ to be very small and still achieve $\Delta_{\rho,r,M} \gg 1$ for appropriate choices of $r$ and $M$.
	
	\subsection{Proof idea.}\ \label{s.fLL.idea} The main idea of the proof is to use a continuum approximation for the energy change upon isotropic averaging to find a lower bound for $\Delta_{\rho,r,M}$. To a continuous charge density, we associate an isotropic averaging procedure adapted to $\phi_r$ in which every ``infinitesimal point charge" constituting the continuum is replaced by a infinitesimal charge shaped like $\phi_r$. 
	
	Crucially, the Coulomb energy of a continuous charge distribution contains the self-energy or ``the diagonal". That is, to a bounded charge distribution $\nu$, we associate the Coulomb energy $\frac12\iint \g(x-y) \nu(dx)\nu(dy)$. After isotropic averaging, i.e.\ replacing $\nu$ by $\Phi_r \nu$, the Coulomb energy is 
	$$
	\frac12\iint_{\R^d \times \R^d} \g(x-y) \Phi_r \nu(dx) \Phi_r \nu(dy) = \frac12\iint_{\R^d \times \R^d} (\Phi_r^2 \g)(x-y) \nu(dx)\nu(dy).
	$$
	The continuum Coulomb energy change is therefore
	\begin{equation} \label{e.Erf.def}
		-\mcl E_r(\nu) \quad \text{where}\quad \mcl E_r(\nu) := \frac12 \iint_{\R^d \times \R^d} (\g - \Phi_r^2 \g)(x-y) \nu(dx)\nu(dy).
	\end{equation}
	Two facts form the crux of the method: (1) $\mcl E_r(\nu)$ is convex and (2) $\g - \Phi_r^2 \g$ is compactly supported (on scale $r$).
	
	Considering the energy term $\int W(x) \nu(dx)$ associated to the potential $W$, which one can think of as the Coulomb interaction between $\nu$ and a signed background charge density of $-\mu$, we can precisely compare \eref{Erf.def} to the change in this term upon smearing $\nu$ by $\phi_r$. The energy change is $\iint_{\R^d \times \R^d} (\g - \Phi_r \g)(x-y) \nu(dx)\mu(dy)$. Note that the kernel is different from that of \eref{Erf.def}, but also the $\frac12$ factor is absent in comparison. In the case that $\nu(\Omega) \geq \mu(\Omega) + \rho|\Omega|$ and $\nu$ is supported on $\Omega$, the net energy change is favorable for the choice $r = (\alpha R)^{1/3}$ under some conditions on $\rho > 0$, provided we apply some appropriate modifications near the boundary of $\Omega$ to overcome any boundary layer effects and approximate $\mu$ by a constant measure near $\Omega$. The modifications create only boundary errors since $\g - \Phi_r^2 \g$ is supported on a scale $r$ much smaller than $R$. Our argument is capturing the fact that overcrowding is unfavorable for the interaction on length scale $r$ and below.
	
	The remainder of the proof involves relating our continuum approximation above to the true change in energy of a point configuration upon isotropic averaging. The first step, and the reason for isotropic averaging using $\psi_r$ instead of $\phi_r$, is that we must {\it renormalize} by replacing our point charges by microscopic continuous charges shaped like $\phi_1$. This allows us to make sense of the self energy of the charges (whereas the self-energy of a point charge is infinite) and so directly relate to the continuum problem. As alluded to above, we must also deal with boundary effects, which are well controlled by the parameter $M$ in the event $E_{\rho,r,M}$ and our local law \tref{1C.LL}. Finally, there are some lower order entropy factors to consider, after which our model computation applies to conclude \tref{fLL.over}.
	
	Regarding \tref{fLL.improved}, we use rigidity for the fluctuation of smooth linear statistics from \cite{S22} to find a screening region whenever the absolute discrepancy in a ball $B_R(z)$ is large. A screening region takes the form of an $\alpha$-thin annulus $\Omega$ just inside or outside $\pa B_R(z)$ that has an excess of positive charge. We can then apply our incompressibility estimate to conclude.
	\subsection{Discrepancy upper bound.}\
	In this subsection, we prove \tref{fLL.over}. We will actually generalize the result to $\alpha$-thin annuli $\Omega$ under some conditions.
	
	We will first control $\Delta_{\rho,r,M}$ from \eref{LL.Delta} in terms of an ``energy" functional $\mcl E_r(\nu)$ defined in \eref{Erf.def}. We will only need to consider measures $\nu$ with bounded densities, so for simplicity we restrict to this case from the start.
	
	We let $q \in \R$ denote a positive constant to be chosen later and introduce $\Leb(dx)$ to denote Lebesgue measure on $\R^d$. We let $\nu_{|A}$ denote the restriction of $\nu$ to a set $A$ for any measure $\nu$. Define
	\begin{equation} \label{e.Ernuq.def}
		\mcl E_r(\nu;q) = \mcl E_r(\nu + q\Leb_{|\Omega_{2r} \setminus \Omega}) - q\iint_{\R^d \times \R^d} (\g - \Phi_r \g)(x-y)dx (\nu(dy) + q\Leb_{|\Omega_{r} \setminus \Omega}(dy)).
	\end{equation}
	The above definition models the full energy change for the continuum approximation discussed in \sref{fLL.idea}. Note that we extend the measure $\nu$ slightly past the boundary $\Omega$ by $qdx$ to eliminate boundary layer effects. In \pref{energymin}, we will see that $\nu = q \Leb_{|\Omega}$ is a minimizer of $\mcl E_r(\cdot;q)$ among bounded measures supported on $\Omega$ with mass $q|\Omega|$.

\begin{lemma} \label{l.smooth.iso}
We have
\begin{align*}
	\Iso_{\{1,2\}, \psi_r} \g(x_1 - x_2) &= (\phi_1 \ast \phi_1 \ast \phi_r \ast \phi_r \ast \g)(x_1 - x_2), \\
	\Iso_{\{1,2\}, \psi_r} \g(x_1 - x_3) &\leq \g(x_1 - x_3).
\end{align*}
Furthermore,
\begin{align} \label{e.smooth.iso.pot}
	\Iso_{\{1\},\psi_r} W(x_1) = W(x_1) &+ \int_{\R^d} (\g - \Phi_1 \g)(y - x_1) \mu(dy) \\ \notag &+ \iint_{\R^d \times \R^d} (\g - \Phi_r \g)(y - x) \phi_1(x - x_1) dx \mu(dy).
\end{align}
\end{lemma}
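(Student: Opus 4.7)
The plan is to verify each identity by direct convolution manipulation, relying on the mean value inequality for superharmonic functions (Newton's theorem) and on the same Green's-identity computation that produced \eref{iso.W} in the proof of \pref{1C.LL.iso}.

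For the first identity, I apply the definition of $\Iso_{\{1,2\},\psi_r}$ and perform the change of variables $u=y_1-y_2$. Because $\psi_r$ is radial (both $\phi_r$ and $\phi_1$ are), $u$ has density $\psi_r\ast\psi_r$ when $y_1,y_2$ are independent with density $\psi_r$, so $\Iso_{\{1,2\},\psi_r}\g(x_1-x_2)=(\psi_r\ast\psi_r\ast\g)(x_1-x_2)$, and substituting $\psi_r=\phi_r\ast\phi_1$ gives the stated form. For the second identity, the averaging in $x_2$ acts trivially (the integrand does not depend on $x_2$), leaving $(\psi_r\ast\g)(x_1-x_3)$; the bound $\le \g(x_1-x_3)$ then follows from the mean value inequality \eref{mvi} applied with $\nu=\psi_r$.

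For the identity on $W$, write $\psi_r=\phi_1\ast\phi_r$ and split
\[
(\psi_r\ast W)(x_1)-W(x_1)=\bigl[(\phi_1\ast W)(x_1)-W(x_1)\bigr]+\phi_1\ast\bigl[(\phi_r\ast W)-W\bigr](x_1).
\]
The key building block is the identity
\[
(\rho\ast W)(y)-W(y)=\int_{\R^d}(\g-U_\rho)(y-z)\,\mu(dz),
\]
valid for any radial probability density $\rho$, where $U_\rho:=\rho\ast\g$. This is derived exactly as in the computation leading to \eref{iso.W}: $\g-U_\rho$ is compactly supported (by Newton's theorem, $U_\rho=\g$ off $\supp\rho$), so integration by parts against $\Delta W=c_d\mu$ using $-\Delta(\g-U_\rho)=c_d(\delta_0-\rho)$ produces the claimed formula. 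Applying it with $\rho=\phi_1$ at $y=x_1$ gives the first integral in \eref{smooth.iso.pot}. Applying it with $\rho=\phi_r$ to $W$, then convolving the resulting function with $\phi_1$ in the $x_1$-variable and exchanging the order of integration by Fubini — using the radial symmetry of $\phi_1$ and of $\g-\Phi_r\g$ to rewrite $\phi_1(x_1-x)=\phi_1(x-x_1)$ and $(\g-\Phi_r\g)(x-z)=(\g-\Phi_r\g)(z-x)$ — yields the second double integral in \eref{smooth.iso.pot} with the exact arguments $(\g-\Phi_r\g)(y-x)\phi_1(x-x_1)$.

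There is no genuine technical obstacle here: the entire argument is bookkeeping. The one point that requires a bit of care is the choice of ordering in the third step — smearing by $\phi_r$ first and then by $\phi_1$ — which is what produces $(\g-\Phi_r\g)$ in the inner position against $\mu$ and leaves the $\phi_1$ convolution in the outer position, matching the stated formula.
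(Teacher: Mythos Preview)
Your proposal is correct and follows essentially the same route as the paper: the first two identities are immediate from the definition and superharmonicity, and the formula for $\Iso_{\{1\},\psi_r}W$ is obtained by applying the single-step identity $(\rho\ast W)-W=(\g-\rho\ast\g)\ast\mu$ twice, once with $\rho=\phi_1$ and once with $\rho=\phi_r$. The only cosmetic difference is that the paper derives this single-step identity by explicit spherical averaging and Green's identity (as in \pref{1C.LL.iso}) rather than by your integration-by-parts against $-\Delta(\g-U_\rho)=c_d(\delta_0-\rho)$, but the two derivations are equivalent.
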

\begin{proof}
The first two equations follow immediately from the definition of $\Iso_{\{1,2\}, \psi_r}$ and superharmonicity of $\g$; see the proof of \pref{1C.LL.iso} for a similar calculation.

Considering now isotropic averaging of the potential term and letting $\sigma$ be surface measure on the unit sphere, we have
\begin{equation}
	\frac{1}{\sigma(B_1(0))} \int_{\pa B_1(0)} W(z + r  \theta) \sigma(d\theta) = W(z) + \int_{B_{r}(z)} (\g(x - z) - \g(r))\mu(dx).
\end{equation}
Defining $\g_s = (\g(x) - \g(s))_+$, we compute using radial symmetry of $\phi_r$:
\begin{align*}
	\phi_r \ast W(z) = \int_{\R^d} \phi_{r}(y) W(z + y) dy &=   \int_{\R^d} \phi_r(y) \frac{1}{\sigma(B_1(0))} \int_{\pa B_1(0)}W(z+ |y| {\theta}) d\theta dy \\ \notag &= W(z) + \(\int_{\R^d} \phi_{r}(y) \g_{|y|} dy\) \ast \mu(x).
\end{align*}
Note that $\g_s = \g - \delta^{(s)} \ast \g$, where $\delta^{(s)}$ is Dirac delta ``smeared" evenly on the sphere $\pa B_s(0)$, and so
$$
\int_{\R^d} \phi_{r}(y) \g_{|y|}(x) dy = \g(x) - \int_{\R^d} \int_{\R^d} \g(x-z) \phi_r(y) \delta^{(|y|)}(dz) dy = \g(x) - \int_{\R^d} \g(x-z) \phi_r(z) dz.
$$
This is $(\g - \Phi_r \g)(x)$. In the last equality, we used that $\int (\delta^{(|y|)}(dz) \phi_r(y)) dy = \phi_r(z)dz$, formally, since $\phi$ is radial.
By applying the above computation twice, we have
\begin{align*}
	\psi_r \ast W(x_i) &= \phi_r \ast (\phi_1 \ast W)(x_i) = (\phi_1 \ast W)(x_i) +  ((\g - \Phi_r \g) \ast \phi_1 \ast \mu)(x_i) \\
	&= W(x_i) + (\g - \Phi_1 \g) \ast \mu(x_i) + ((\g - \Phi_r \g) \ast \phi_1 \ast \mu)(x_i) .
\end{align*}
The last line with $i=1$ gives \eref{smooth.iso.pot}.
\end{proof}

The next lemma directly relates the energy change of a point charge system after isotropic averaging to the functional $\mcl E_r$. There are error terms corresponding to self-energy terms ($\mathrm{Error}_{\mathrm{vol}}$), boundary layer effects ($\mathrm{Error}_{\mathrm{bl}}$), and approximation of $\mu$ by $q$.
\begin{lemma} \label{l.twostep.energy.avg}
Let $r > 1$ and $q \in \R$. We have
\begin{align} \label{e.twostep.energy.avg}
	\lefteqn{\mcl H^{W,U}(X_N) - \Iso_{\mbb X(\Omega), \psi_r} \mcl H^{W,U}(X_N) } \quad & \\ \notag &\geq  \mcl E_r((\Phi_1 X_{| \Omega})_{|\Omega}; q) + \iint_{\R^d \times \Omega} (\g - \Phi_r \g)(y-x) (q \Leb - \mu)(dx) (\Phi_1 X_{| \Omega})(dy) \\ \notag &\quad- \mathrm{Error}_{\mathrm{bl}} - \mathrm{Error}_{\mathrm{vol}},
\end{align}
where
\begin{equation} \label{e.twostep.energy.errorbl}
	\mathrm{Error}_{\mathrm{bl}} \leq Cr^2 |q| X(\Omega \setminus \Omega_{-3r}) + Cr^3 q^2 R^{d-1} + Cr^2 ( 1 + |q|) X(\Omega \setminus \Omega_{-1}),
\end{equation}
and
\begin{equation} \label{e.twostep.energy.errorvol}
	\mathrm{Error}_{\mathrm{vol}} \leq -X(\Omega)\g(4r) + CX(\Omega).
\end{equation}
\end{lemma}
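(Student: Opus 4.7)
The plan is to decompose $\mcl H^{W,U}(X_N)$ according to where each particle sits, apply \lref{smooth.iso} term by term, then convert discrete sums to integrals against the renormalized measure $\widetilde X := \Phi_1 X_{|\Omega}$, and finally match the result to the functional $\mcl E_r(\cdot; q)$ defined in \eref{Ernuq.def}. First, I split the pair-energy sum $\frac12 \sum_{i\ne j} \g(x_i - x_j)$ into the three cases ``both indices in $\mbb X(\Omega)$'', ``exactly one'', and ``neither''. By \lref{smooth.iso}, the both-in sum gets fully smeared into $\frac12 \sum (\phi_1^2 \phi_r^2 \ast \g)(x_i - x_j)$; the one-in sum can only decrease under the one-sided smearing; and the neither-in sum is fixed. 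Since $U$ is superharmonic in each variable, $\Iso_{\mbb X(\Omega),\psi_r} U \leq U$. The $W$-contribution is handled by the explicit identity \eref{smooth.iso.pot}, producing two correction terms per particle in $\Omega$, one with kernel $\g - \Phi_1 \g$ and one with kernel $\g - \Phi_r \g$.

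For the pair-energy drop, use the pointwise superharmonic inequality $\g(z) \geq (\phi_1^2 \ast \g)(z)$ (equality for $|z| \geq 2$) to bound things below by $\frac12 \sum_{i\ne j,\ i,j \in \mbb X(\Omega)} (\phi_1^2 \ast (\g - \Phi_r^2 \g))(x_i - x_j)$. The key identity
\[
\sum_{\substack{i,j \in \mbb X(\Omega)\\ i\ne j}} (\phi_1^2 \ast K)(x_i - x_j) \;=\; \iint K(u-v)\,\widetilde X(du)\,\widetilde X(dv) \;-\; \sum_{i \in \mbb X(\Omega)} \iint K(u-v)\phi_1(u-x_i)\phi_1(v-x_i)\,du\,dv
\]
with $K = \g - \Phi_r^2 \g$ turns the off-diagonal part into exactly $2\mcl E_r(\widetilde X)$, while the diagonal contributes a universal self-energy constant per particle, bounded using that $\Phi_r^2 \g(z)$ is essentially $\g(r)$ on scale $r$. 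This diagonal correction, together with the pointwise-bounded $(\g - \Phi_1\g)$ part of the $W$-change, is what is packaged into $\mathrm{Error}_{\mathrm{vol}}$.

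Next, I relate $\mcl E_r(\widetilde X)$ to the quantity in the statement. Because $\widetilde X$ is supported in the $1$-thickening of $\Omega$, split $\widetilde X = \widetilde X_{|\Omega} + \widetilde X_{|\Omega^c}$, with the latter carrying mass at most $X(\Omega \setminus \Omega_{-1})$. Then, to match the $q$-adjusted definition \eref{Ernuq.def}, add and subtract the boundary-layer charge $q\Leb_{|\Omega_{2r}\setminus\Omega}$ inside the quadratic form, yielding (a) the wanted $\mcl E_r(\widetilde X_{|\Omega} + q\Leb_{|\Omega_{2r}\setminus\Omega})$, (b) a bilinear cross-term involving $(\g - \Phi_r^2\g)$, $q\Leb_{|\Omega_{2r}\setminus\Omega}$, and $\widetilde X_{|\Omega}$, and (c) the self-energy $\mcl E_r(q\Leb_{|\Omega_{2r}\setminus\Omega})$. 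Since $\g - \Phi_r^2\g$ is supported on scale $r$ with integral $O(r^2)$ in all $d\geq 2$, and the shell has volume $\lesssim rR^{d-1}$, contribution (c) is bounded by $Cr^3 q^2 R^{d-1}$; contribution (b) and the leakage from $\widetilde X_{|\Omega^c}$ are bounded by $Cr^2 |q| X(\Omega \setminus \Omega_{-3r})$. These terms populate $\mathrm{Error}_{\mathrm{bl}}$.

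Finally, I combine with the $W$-change to produce the linear term of the statement. The second summand of the $W$-change from \lref{smooth.iso}, rewritten using $\sum_{i \in \mbb X(\Omega)}\phi_1(x - x_i)\,dx = \widetilde X(dx)$, is exactly $-\iint(\g - \Phi_r\g)(y-x)\,\mu(dx)\,\widetilde X(dy)$, which is the $-\mu$ half of the target expression. The $+q\Leb$ half appears from the linear term $-q\iint (\g - \Phi_r \g)(x-y)\,dx\,(\widetilde X_{|\Omega} + q\Leb_{|\Omega_r \setminus \Omega})(dy)$ that is built into $\mcl E_r(\cdot;q)$, up to a boundary correction involving $q^2\cdot r \cdot R^{d-1}$ absorbed in $\mathrm{Error}_{\mathrm{bl}}$. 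The first summand of the $W$-change, with the shorter-range kernel $\g - \Phi_1\g$, contributes at most a constant per particle (since $\mu$ has bounded density) and so is absorbed into the $CX(\Omega)$ part of $\mathrm{Error}_{\mathrm{vol}}$, with the boundary-adjacent piece going into the $Cr^2(1+|q|)X(\Omega \setminus \Omega_{-1})$ term of $\mathrm{Error}_{\mathrm{bl}}$.

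The main obstacle is the bookkeeping in the last two steps: making sure every boundary-layer term, cross-term, and diagonal self-energy lands cleanly inside $\mathrm{Error}_{\mathrm{bl}}$ or $\mathrm{Error}_{\mathrm{vol}}$ with the stated bounds. The two auxiliary estimates that drive the sizes — the single-particle self-energy bound $c_\Xi \leq -\g(4r) + C$ and the $O(r^2)$ integral of $\g - \Phi_r^2\g$ — are short dimension-by-dimension calculations relying on the explicit radial structure of $\g$ and the fact that $\phi_r$ is an approximate identity at scale $r$, but must be done carefully to get the right powers of $r$.
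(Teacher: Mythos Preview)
Your proposal is correct and follows essentially the same route as the paper: decompose the pair interaction via \lref{smooth.iso}, use $\g \geq \Phi_1^2 \g$ to pass to the renormalized measure $\Phi_1 X_{|\Omega}$, absorb the resulting diagonal self-energy and the $(\g-\Phi_1\g)\ast\mu$ part of the potential change into $\mathrm{Error}_{\mathrm{vol}}$, then add and subtract $q\Leb_{|\Omega_{2r}\setminus\Omega}$ in the quadratic form and replace $\mu$ by $q$ in the $(\g-\Phi_r\g)$ potential term to manufacture $\mcl E_r(\cdot;q)$ and the explicit $(q\Leb-\mu)$ integral, with all shell-supported cross terms landing in $\mathrm{Error}_{\mathrm{bl}}$. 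One minor simplification relative to your plan: rather than splitting $\widetilde X = \widetilde X_{|\Omega} + \widetilde X_{|\Omega^c}$ and tracking the leakage through the quadratic expansion, the paper simply uses $\g - \Phi_r^2\g \geq 0$ to restrict the double integral to $\Omega\times\Omega$ in one step, which avoids some of the cross-term bookkeeping you anticipate.
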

\begin{proof}
By \lref{smooth.iso}, we have
\begin{equation*}
	\Iso_{\mbb X(\Omega),\psi_r} \mcl H^0(X_N) \leq \frac12 \sum_{\substack{i \ne j \\ \{i,j\} \not \subset \mbb X(\Omega)}} \g(x_i - x_j) + \frac12 \sum_{i,j \in \mbb X(\Omega)} (\Phi_1^2 \Phi_r^2 \g)(x_i - x_j) - \frac12 X(\Omega) \Phi_1^2 \Phi_r^2 \g(0).
\end{equation*}
Note that we have added and subtracted the diagonal terms in the second sum on the RHS. We use that $\Phi_1$ is self-dual to write
$$
\frac12 \sum_{i,j \in \mbb X(\Omega)} (\Phi_1^2 \Phi_r^2 \g)(x_i - x_j) = \frac12 \iint_{\R^d \times \R^d} \Phi_r^2 \g(x - y) (\Phi_1 X_{|\Omega})(dx)(\Phi_1 X_{|\Omega})(dy).
$$
Considering the interaction within $\mcl H^{0}(X_N)$, we estimate
$$
\frac12 \sum_{\substack{i,j \in \mbb X(\Omega) \\ i \ne j}} \g(x_i - x_j) \geq \frac12 \iint_{\R^d \times \R^d} \g(x - y) (\Phi_1 X_{|\Omega})(dx)(\Phi_1 X_{|\Omega})(dy) - \frac12 X(\Omega) \Phi_1^2 \g(0)
$$
using $\Phi_1^2 \g \leq \g$ pointwise and duality like above. It follows that
\begin{align}
	\mcl H^0(X_N) - \Iso_{\mbb X(\Omega),\psi_r} \mcl H^0(X_N) &\geq  \frac12 \iint_{\R^d \times \R^d} (\g - \Phi_r^2 \g)(x - y) (\Phi_1 X_{|\Omega})(dx)(\Phi_1 X_{|\Omega})(dy) \\ \notag &\quad + \frac12 X(\Omega) (\Phi_1^2 \Phi_r^2 \g (0)-\Phi_1^2 \g(0)).
\end{align}
We can compare the double integral above to $\mcl E_r((\Phi_1 X_{|\Omega})_{|\Omega} + q\Leb_{|\Omega_{2r} \setminus \Omega})$. They will only differ by boundary layer terms. We start by simply restricting the integral to $\Omega \times \Omega$ using $\g - \Phi_r^2 \g \geq 0$. After doing so, we find
$$
\frac12 \iint_{\R^d \times \R^d} (\g - \Phi_r^2 \g)(x - y) (\Phi_1 X_{|\Omega})^{\otimes 2} (dx,dy) \geq \mcl E_r((\Phi_1 X_{|\Omega})_{|\Omega} + q \Leb_{|\Omega_{2r} \setminus \Omega}) - T_1 - T_2
$$
where
\begin{align*}
	T_1 &= q\iint_{\Omega \times \R^d} (\g - \Phi_r^2 \g)(x-y) \Phi_1 X_{|\Omega}(dx) \Leb_{|\Omega_{2r} \setminus \Omega}(dy) \leq Cr^2 |q| X(\Omega \setminus \Omega_{-2r-1}), \\
	T_2 &=q^2\iint_{\R^d \times \R^d} (\g - \Phi_r^2 \g)(x-y) \Leb_{|\Omega_{2r} \setminus \Omega}(dy) \Leb_{|\Omega_{2r} \setminus \Omega}(dy) \leq C r^3 q^2 R^{d-1}.
\end{align*}
We also bound
$$
\frac12 X(\Omega) (\Phi_1^2 \Phi_r^2 \g (0)-\Phi_1^2 \g(0)) \geq  X(\Omega) (\g(4r) - C).
$$
This term, as well as $T_1$ and $T_2$, are absorbed into $\mathrm{Error}_{\mathrm{vol}}$ and $\mathrm{Error}_{\mathrm{bl}}$.

We now handle the potential terms. Like usual, we can handle the $U$ term using superharmonicity. For the $W$ term, we use \lref{smooth.iso} to see
\begin{align} \label{e.W.iso.1}
	&\Iso_{\mbb X(\Omega),\psi_r} \sum_{i \in \mbb X(\Omega)} W(x_i) = \sum_{i \in \mbb X(\Omega)} W(x_i) \\ \notag &\quad + \iint_{\R^d \times \R^d} (\g - \Phi_1 \g)(x-y) X_{|\Omega}(dx) \mu(dy) + \iint_{\R^d \times \R^d} (\g - \Phi_r \g)(x-y) (\Phi_1 X_{|\Omega})(dx) \mu(dy).
\end{align}
The middle term on the RHS in \eref{W.iso.1} contributes to the volume error $\mathrm{Error}_{\mathrm{vol}}$. It is bounded by
$$
\iint_{\R^d \times \R^d} (\g - \Phi_1 \g)(x-y) X_{|\Omega}(dx) \mu(dy) \leq C X(\Omega).
$$
For the last term in \eref{W.iso.1}, we replace $\mu$ by $q$ to generate the term
$$
\iint_{\R^d \times \R^d} (\g - \Phi_r \g)(x-y) (\Phi_1 X_{|\Omega})(dx) (q\Leb - \mu)(dy)
$$
in \eref{twostep.energy.avg}. We then estimate
\begin{align*}
	\lefteqn{ q\iint_{\R^d \times \R^d} (\g - \Phi_r \g)(x-y) (\Phi_1 X_{|\Omega})(dx) dy } \quad & \\ &\leq q\iint_{\Omega \times \R^d} (\g - \Phi_r \g)(x-y) (\Phi_1 X_{|\Omega})(dx) dy + Cr^2 |q| X(\Omega \setminus \Omega_{-1})
	\\ &\leq q\iint_{\R^d \times \R^d} (\g - \Phi_r \g)(x-y) ((\Phi_1 X_{|\Omega})_{|\Omega} +  q\Leb_{|\Omega_r \setminus \Omega})(dx) dy + Cr^2 |q| X(\Omega \setminus \Omega_{-1}),
\end{align*}
where we used $\| \g - \Phi_r \g \|_{L^1(\R^d)} \leq Cr^2$ and the fact that the mass of $\Phi_1 X_{|\Omega}$ lying outside of $\Omega$ is bounded by $X(\Omega \setminus \Omega_{-1})$. Assembling the above estimates proves the lemma.
\end{proof}	

We now turn to studying minimizers of $\mcl E_r(\cdot;q)$ conditioned on the weight given to $\Omega$. The following proposition is the key technical result of \sref{discrepancy}, and it is the reason for considering the precise form of the energy $\mcl E_r(\cdot;q)$.
\begin{proposition} \label{p.energymin}
Let $\Omega$ an $\alpha$-thin annulus and $q \in \R$. We have
\begin{equation} \label{e.energymin.inf}
	\inf_{\nu : \nu(\Omega)= q |\Omega|} \mcl E_r(\nu;q) \geq 0
\end{equation}
where the infimum is over measures $\nu$ supported on $\ov \Omega$ with a bounded Lebesgue density.
\end{proposition}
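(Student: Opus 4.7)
The plan is to rewrite $\mcl E_r(\nu;q)$ as a purely quadratic energy in an extended measure, identify its minimizer by convex duality, and verify the minimum value is nonnegative via a boundary computation. Define $K := \g - \Phi_r^2\g$, $L := \g - \Phi_r\g$, and $\ell := \int_{\R^d} L(x)\,dx$; the latter is finite since $L$ is the Newtonian potential of the compactly supported neutral charge $\delta_0 - \phi_r$ and hence decays faster than $\g$ at infinity. The crucial observation is that $\int L(x-y)\,dx = \ell$ is independent of $y$, so the linear term in $\mcl E_r(\nu;q)$ collapses. Setting $\mu_r := \nu + q\Leb_{|\Omega_{2r}\setminus\Omega}$, a positive measure on $\ov{\Omega_{2r}}$ of total mass $q|\Omega_{2r}|$, one checks
$$
\mcl E_r(\nu;q) = \tfrac12 \iint_{\R^d\times\R^d} K(x-y)\,\mu_r(dx)\,\mu_r(dy) - q^2\ell\,|\Omega_r|.
$$

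\textbf{Minimization via convex duality.} I will use four key properties of $K$: (i) $K \geq 0$ pointwise (superharmonicity of $\g$); (ii) $\supp K \subset \ov{B_{2r}(0)}$ (for $|x|>2r$, $\g$ is harmonic on $B_{2r}(x)$ and $\phi_r\ast\phi_r$ is a radial probability density supported in $B_{2r}(0)$, so $\Phi_r^2\g(x)=\g(x)$ by the mean value property); (iii) $K$ is positive definite, as $\hat K(\xi) = \hat\g(\xi)\,(1 - \hat\phi(r\xi)^2) \geq 0$; and (iv) $\int K = 2\ell$ (from the identity $K = L + \Phi_r L$). I claim the minimizer of $\mu_r \mapsto \tfrac12\iint K\,\mu_r\otimes\mu_r$ over admissible $\mu_r$ is $\mu_r^\ast := q\Leb_{|\Omega_{2r}}$, corresponding to $\nu = q\Leb_{|\Omega}$. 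Writing $\tilde\mu := \mu_r - \mu_r^\ast = \nu - q\Leb_{|\Omega}$, a signed measure supported in $\ov\Omega$ with $\tilde\mu(\R^d)=0$, bilinear expansion produces a cross term $q\int(K\ast\1_{\Omega_{2r}})(y)\,\tilde\mu(dy)$; but for $y \in \Omega$, $B_{2r}(y) \subset \Omega_{2r}$, so properties (ii) and (iv) give $(K\ast\1_{\Omega_{2r}})(y) = 2\ell$, constant on $\supp\tilde\mu$, making the cross term vanish, while the residual $\tfrac12\iint K\,\tilde\mu\otimes\tilde\mu$ is nonnegative by (iii). It thus suffices to prove the inequality at $\nu = q\Leb_{|\Omega}$, which, after splitting $\iint_{\Omega_{2r}^2} K$ according to whether both arguments lie in $\Omega$ and applying (iv), reduces to the boundary inequality
$$
\int_{\Omega_{2r}\setminus\Omega} (K\ast\1_{\Omega_{2r}})(y)\,dy \geq 2\ell\,|\Omega_r\setminus\Omega|.
$$

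\textbf{Boundary estimate and main obstacle.} For $y \in \Omega_{2r}\setminus\Omega$ at distance $t \in (0,2r]$ from $\pa\Omega$, its distance to $\pa\Omega_{2r}$ is $2r - t$; since $\pa\Omega_{2r}$ is a union of spheres of radius $\Theta(R)$ and $\alpha R \geq C \gg r$, local flattening yields $(K\ast\1_{\Omega_{2r}})(y) = g(2r - t) + O(r/(\alpha R))$, where $g(s) := \int_{\{z\cdot\hat n \geq -s\}} K(z)\,dz$ for a unit vector $\hat n$. The radial symmetry of $K$ gives $g(s) + g(-s) = 2\ell$, and property (i) makes $g$ monotonically increasing, so
$$
\int_0^{2r} g(s)\,ds \geq \int_{-2r}^0 g(s)\,ds = 4r\ell - \int_0^{2r} g(s)\,ds,
$$
hence $\int_0^{2r} g(s)\,ds \geq 2r\ell$. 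Integrating via coarea over $\pa\Omega$, and using $|\Omega_r\setminus\Omega| \approx r\,|\pa\Omega|$, closes the required inequality. The main obstacle is to make the flattening quantitative: one must verify that the strict surplus $\int_0^{2r}g(s)\,ds - 2r\ell$, which is of order $r\ell$ since $g$ is strictly monotone (as $K$ is strictly positive on a set of positive measure near the origin), dominates the $O(r^2\ell/(\alpha R))$ curvature corrections as well as the $|\pa\Omega|$ versus $|\Omega_r\setminus\Omega|/r$ mismatch. Taking $C$ in $\alpha R \geq C$ sufficiently large (depending on $r$, $\beta$, and $\sup \Delta W$) accomplishes this.
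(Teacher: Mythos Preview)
Your reduction and minimization steps are correct and essentially coincide with the paper's argument: the paper also expands $\mcl E_r(\nu;q)$ quadratically around $\nu = q\Leb_{|\Omega}$, kills the cross term via the observation that $K \ast \1_{\Omega_{2r}}$ is constant on $\Omega$ (since $\supp K \subset B_{2r}$), and uses positive definiteness of $K$ in Fourier for the remainder. Your rewriting $\mcl E_r(\nu;q) = \tfrac12\iint K\,\mu_r\otimes\mu_r - q^2\ell|\Omega_r|$ is a clean way to package this.

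The boundary step is where you diverge, and here the paper's argument is both simpler and strictly stronger. You try to estimate $\int_{\Omega_{2r}\setminus\Omega}(K\ast\1_{\Omega_{2r}})\,dy$ by flattening the sphere, which introduces curvature errors you then need to dominate by the strict surplus $\int_0^{2r}g(s)\,ds - 2r\ell$. This forces an extra hypothesis $r \ll \alpha R$ that is \emph{not} part of the proposition (it holds in the later application where $r=(\alpha R)^{1/3}$, but the proposition is stated for general $r$). Worse, your stated curvature error $O(r/(\alpha R))$ is wrong near the inner boundary component, where the radius of curvature is $R(1-\alpha)$, not $\alpha R$; for $\alpha$ close to but not equal to $1$ this blows up and your absorption argument fails. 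Making the flattening rigorous also requires care because $K$ inherits the singularity of $\g$ at the origin.

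The paper avoids all of this with a one-line trick you are missing: split $K = L + \Phi_r L$ and exploit that $\supp L \subset B_r$ (not $B_{2r}$). Writing $\tfrac12\iint_{\Omega_{2r}^2}K = T_1 + T_2$ with $T_1 = \tfrac{q^2}{2}\iint_{\Omega_{2r}^2}L$ and $T_2 = \tfrac{q^2}{2}\iint_{\Omega_{2r}^2}\Phi_r L$, one gets $T_1 \geq \tfrac{q^2}{2}\ell|\Omega_r|$ exactly by restricting the inner integral to $\Omega_r$ (where $B_r(y)\subset\Omega_{2r}$ so $L\ast\1_{\Omega_{2r}}=\ell$), and $T_2 \geq \tfrac{q^2}{2}\ell|\Omega_r|$ exactly by pushing $\Phi_r$ onto $\1_{\Omega_{2r}}$ and using $\Phi_r\1_{\Omega_{2r}}\geq\1_{\Omega_r}$. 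This gives $T_1+T_2 \geq q^2\ell|\Omega_r| = T_3$ with no error terms, no curvature, and no constraint relating $r$ to $\alpha R$.
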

\begin{proof}
First, note that $\mcl E_r(\cdot)$ as defined in \eref{Erf.def} is non-negative when applied to measures with bounded density and compact support. This is because the Fourier transform of $\delta_0 - \phi_r \ast \phi_r$ is non-negative. Indeed, the Fourier transform on $\R^d$ of $\phi_r \ast \phi_r$ is real since $\phi_r$ is radial, and it is bounded above by $1$ (in the normalization for which $\hat{\delta}_0 = 1$) since it is a probability density. Since $\g$ has positive Fourier transform, the Fourier transform of $\g - \Phi_r^2 \g = (\delta_0 - \phi_r \ast \phi_r) \ast \g$ is non-negative, and non-negativity of $\mcl E_r(\cdot)$ follows from Plancherel's theorem.

Let $\nu$ be a measure supported on $\ov{\Omega}$ with a bounded Lebesgue density and $\nu(\Omega) = q|\Omega|$. We use that $\mcl E_r(\cdot;q)$ is quadratic to expand
\begin{align} \label{e.Er.quadratic}
	\mcl E_r(\nu;q) &= \mcl E_r(q \Leb_{|\Omega};q) + \mcl E_r(q \Leb_{|\Omega}-\nu) \\ \notag &\quad + \iint_{\R^d \times \R^d} (\g - \Phi_r^2 \g)(x-y) (\nu - q \Leb_{|\Omega})(dx) (q \Leb_{|\Omega_{2r}})(dy) \\ \notag &\quad - q\iint_{\R^d \times \R^d} (\g - \Phi_r \g)(x-y) dx (\nu - q\Leb_{|\Omega})(dy).
\end{align}
We claim that both terms on the last line are $0$. Indeed, we have
\begin{equation} \label{e.qconv}
	\int_{\R^d} (\g - \Phi_r \g)(x-y) dy =  c_{1,\phi,r},\quad \int_{\R^d} (\g - \Phi_r^2 \g)(x-y) dy =  c_{2,\phi,r}
\end{equation}
for constants $c_{1,\phi,r}, c_{2,\phi,r}$ independent of $x$, and the same holds for $\Leb_{|\Omega_{2r}}(dy)$ in place of $dy$ as long as $x \in \Omega$. The claim follows from $\nu(\Omega) = q$. Since $\mcl E_r(q\Leb_{|\Omega}-\nu) \geq 0$, we have proved that the infimum in \eref{energymin.inf} is attained at $\nu = q \Leb_{|\Omega}$.

It remains to compute $\mcl E_r(q \Leb_{|\Omega};q)$. We write $\mcl E_r(q \Leb_{|\Omega};q)= T_1 + T_2 - T_3$ for
\begin{align*}
	T_1 &= \frac{q^2}{2}\iint_{\R^d \times \R^d} (\g - \Phi_r \g)(x-y)\Leb_{|\Omega_{2r}}(dx)\Leb_{|\Omega_{2r}}(dy), \\ T_2 &= \frac{q^2}{2}\iint_{\R^d \times \R^d} \Phi_r (\g - \Phi_r \g)(x-y) \Leb_{|\Omega_{2r}}(dx)\Leb_{|\Omega_{2r}}(dy), \\ T_3 &= q^2\iint_{\R^d \times \R^d} (\g - \Phi_r \g)(x-y) dx \Leb_{|\Omega_r}(dy).
\end{align*}
Note that $\g - \Phi_r \g \geq 0$, and so
$$
T_1 \geq \frac{q^2}{2} |\Omega_r| \inf_{y \in \Omega_r} \int_{\Omega_{2r}}(\g - \Phi_r \g)(x-y) = \frac12 q^2 c_{1,\phi,r} |\Omega_r|.
$$
We use $\Phi_r \Leb_{|\Omega_{2r}} \geq \Leb_{|\Omega_r}$ to see
\begin{align*}
T_2 &= \frac{q^2}{2} \iint_{\R^d \times \R^d} (\g - \Phi_r \g)(x-y) \Phi_r \Leb_{|\Omega_{2r}}(dx) \Leb_{|\Omega_{2r}}(dy) \\ &\geq \frac{q^2}{2} \iint_{\Omega_r \times \Omega_{2r}} (\g - \Phi_r \g)(x-y) dx dy \geq \frac{c_{1,\phi,r} q^2}{2} |\Omega_r|.
\end{align*}
Similarly, we have $T_3 \leq q^2 c_{1,\phi,r} |\Omega_r|$, and combining the bounds on $T_1,T_2,T_3$ finishes the proof.
\end{proof}

\begin{proposition} \label{p.fLL.Delta}
Let $\Omega$ be an $\alpha$-thin annulus with diameter $2R$. Assume that the parameters $\rho,\alpha,q > 0$ and $M \geq 1$ satisfy the following for fixed constants $C_i$, $i=1,2,3,4,5$.
\begin{enumerate}
	\item There is bounded excess:
	\begin{equation} \label{e.rhobdd}
		\rho \leq C_1.
	\end{equation}
	\item The constant $q$ approximates $\mu$:
	\begin{equation} \label{e.qapproxmu}
		\| \mu - q \|_{L^\infty(\Omega_R)} \leq C_2^{-1} \rho.
	\end{equation}
	\item The annulus is not too thin:
	\begin{equation} \label{e.thickannulus}
		\alpha R \geq C_3.
	\end{equation}
	\item There is significant charge excess:
	\begin{equation} \label{e.bigcharge}
		\frac{(\alpha R)^{2/3} \rho}{1 + \1_{d=2}\log (\alpha R)} \geq C_4.
	\end{equation}
	\item The boundary layer density is not too high:
	\begin{equation} \label{e.lowbdrydensity}
		M \leq C_5^{-1} (\alpha R)^{2/3}\rho.
	\end{equation}
\end{enumerate}
Assume that, dependent on $(\inf_{\Omega} \mu)^{-1}$, $\sup_{\Omega_{R}} \mu$, $\beta$, we have $C_i \gg 1$ for $i =2,3,4,5$ and $C_2 \geq C_1$. Then we have
\begin{equation}
	\Delta_{\rho,r,M}  \geq C^{-1} (\alpha R)^{2/3} \rho ( \mu(\Omega) + \rho |\Omega|)
\end{equation}
for $r = (\alpha R)^{1/3}$ and the quantity $\Delta_{\rho,r,M}$ defined in \eref{LL.Delta}.
\end{proposition}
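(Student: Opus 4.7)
The plan is to combine Lemma~\ref{l.twostep.energy.avg} with a direct quadratic-form lower bound on $\mcl E_r(\nu;q)$ at $\nu = (\Phi_1 X_{|\Omega})_{|\Omega}$ to extract a gain of the asserted order. Set $r = (\alpha R)^{1/3}$. The governing geometric facts are that $\g - \Phi_r^k \g$ is supported in $\ov{B_{kr}(0)}$ for $k=1,2$ (by Newton's theorem and the radial symmetry of $\phi$), and that the constants
$$c_{k,\phi,r} := \int_{\R^d}(\g - \Phi_r^k\g)(x)\,dx$$
satisfy $c_{k,\phi,r} = r^2 c_{k,\phi,1}$ (a direct scaling argument, which goes through in $d=2$ despite the log because the log pieces cancel). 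The identity $c_{2,\phi,r} = 2c_{1,\phi,r}$ follows either from $1 - \hat\phi_r^2 = (1-\hat\phi_r)(1+\hat\phi_r)$ at $\xi = 0$ or from the elementary computation $c_{2,\phi,r} = c_{1,\phi,r} + \int\Phi_r(\g - \Phi_r\g)\,dx = 2c_{1,\phi,r}$, using that $\g - \Phi_r\g$ is compactly supported and $\Phi_r$ preserves integrals.

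The heart of the argument is the following expansion. Writing $\xi := \nu - q\Leb_{|\Omega}$ (a signed measure supported in $\ov\Omega$) and using the quadratic structure of $\mcl E_r(\cdot;q)$, one obtains
$$\mcl E_r(\nu; q) = \mcl E_r(q\Leb_{|\Omega}; q) + q\bigl(c_{2,\phi,r} - c_{1,\phi,r}\bigr)\xi(\Omega) + \mcl E_r(\xi).$$
The cross term collapses to the constant $q(c_{2,\phi,r}-c_{1,\phi,r})\xi(\Omega)$ precisely because $\g - \Phi_r^2\g$ is supported in $\ov{B_{2r}(0)}$: for every $x \in \supp\xi \subset \Omega$ we have $B_{2r}(x) \subset \Omega_{2r}$, so $\int_{\Omega_{2r}}(\g - \Phi_r^2\g)(x-y)\,dy = c_{2,\phi,r}$, while the linear term in the definition of $\mcl E_r(\cdot;q)$ contributes $-qc_{1,\phi,r}\xi(\Omega)$. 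Applying Proposition~\ref{p.energymin} to the first term, the Fourier positivity argument of its proof to the third (it works for signed $\xi$ too), and the identity $c_{2,\phi,r} - c_{1,\phi,r} = c_{1,\phi,r}$, yields the clean bound
$$\mcl E_r(\nu; q) \geq q c_{1,\phi,r}\bigl(\nu(\Omega) - q|\Omega|\bigr).$$

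Next, on the event $E_{\rho,r,M}$ I would lower bound the mass excess. Since $(\Phi_1 X_{|\Omega})(\Omega) \geq X(\Omega_{-1}) \geq X(\Omega) - X(\Omega\setminus\Omega_{-1})$, combining the excess hypothesis $X(\Omega) \geq \mu(\Omega) + \rho|\Omega|$, the boundary control $X(\Omega\setminus\Omega_{-1}) \leq MR^{d-1}r$, and the approximation $|\mu - q| \leq \rho/C_2$ gives
$$\nu(\Omega) - q|\Omega| \geq (1 - 1/C_2)\rho|\Omega| - MR^{d-1}r \geq \tfrac12\rho|\Omega|,$$
where the last step uses $|\Omega| \sim \alpha R^d$, $r = (\alpha R)^{1/3}$, conditions \eref{thickannulus} and \eref{lowbdrydensity}, and taking $C_2, C_3, C_5$ sufficiently large. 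Since $q \gtrsim \inf_\Omega\mu > 0$ and $c_{1,\phi,r} \sim r^2 = (\alpha R)^{2/3}$, the resulting gain is $\geq c(\alpha R)^{2/3}\rho\mu(\Omega) \geq c'(\alpha R)^{2/3}\rho(\mu(\Omega) + \rho|\Omega|)$, using $\rho \leq C_1$ and $q$ bounded below to absorb the $\rho|\Omega|$ term into a multiple of $\mu(\Omega)$.

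It remains to absorb the cross term and the error terms from Lemma~\ref{l.twostep.energy.avg}. The cross term is bounded by $\|\g - \Phi_r\g\|_{L^1}\cdot\|q-\mu\|_{L^\infty(\Omega_1)}\cdot(\Phi_1 X_{|\Omega})(\R^d) \lesssim r^2(\rho/C_2)X(\Omega)$, which is a factor $1/C_2$ smaller than the gain; the boundary error $\lesssim r^3 R^{d-1}(M + q^2)$ is at most a factor $1/C_5$ of the gain under \eref{lowbdrydensity}; and the volume error is of size $X(\Omega)(1 + \1_{d=2}\log r) \lesssim \mu(\Omega)(1 + \1_{d=2}\log(\alpha R))$, which is dominated by the gain exactly thanks to condition \eref{bigcharge}. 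The main obstacle is the algebraic step: carrying out the quadratic expansion cleanly so that the contributions from the auxiliary boundary strips $\Omega_{2r}\setminus\Omega$ and $\Omega_r\setminus\Omega$ built into the definition of $\mcl E_r(\cdot;q)$ exactly produce the beneficial linear term $qc_{1,\phi,r}\xi(\Omega)$, which crucially hinges on the matching $c_{2,\phi,r} - c_{1,\phi,r} = c_{1,\phi,r}$. It is precisely this cancellation that makes $\mcl E_r(\cdot;q)$ (as opposed to $\mcl E_r(\cdot)$) the right functional to study, and the parameter $q$ the right approximation of $\mu$ to use.
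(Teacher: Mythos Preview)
Your approach is correct and in fact gives a tidier Step~1 than the paper's. Both arguments reduce to Lemma~\ref{l.twostep.energy.avg} and then must extract a positive lower bound on $\mcl E_r((\Phi_1 X_{|\Omega})_{|\Omega};q)$. The paper does this by introducing the data-dependent constant $q_X = q + m_X$ (with $m_X|\Omega| = \nu(\Omega) - q|\Omega|$), applying Proposition~\ref{p.energymin} at $q_X$ to get $\mcl E_r(\nu;q_X)\geq 0$, and then comparing $\mcl E_r(\nu;q)$ to $\mcl E_r(\nu;q_X)$; this comparison generates additional boundary-strip errors of size $Cm_X(M+C)R^{d-1}r^3$ that must be re-absorbed. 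Your direct quadratic expansion around $\nu = q\Leb_{|\Omega}$, exploiting the support property of $\g-\Phi_r^2\g$ and the identity $c_{2,\phi,r}=2c_{1,\phi,r}$, yields the same linear gain $qc_{1,\phi,r}(\nu(\Omega)-q|\Omega|)$ with no extra boundary terms, which is an improvement in clarity.

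One point needs care. In your absorption of $\mathrm{Error}_{\mathrm{vol}}$ you write $X(\Omega)\lesssim \mu(\Omega)$, but on $E_{\rho,r,M}$ there is no upper bound on $X(\Omega)$. The fix is that your gain $qc_{1,\phi,r}(\nu(\Omega)-q|\Omega|)$ is not merely $\gtrsim r^2\rho\mu(\Omega)$ but in fact $\gtrsim qr^2(X(\Omega)-q|\Omega|-MR^{d-1}r)$, which scales with $X(\Omega)$. Split into the cases $X(\Omega)\geq 2q|\Omega|$ (where gain $\gtrsim qr^2 X(\Omega)$ and the ratio to the volume error is $\gtrsim qr^2/(1+\1_{d=2}\log r)\gg 1$ by \eref{thickannulus}) and $X(\Omega)<2q|\Omega|$ (where your bound $X(\Omega)\lesssim \mu(\Omega)$ holds and \eref{bigcharge} suffices). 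The paper avoids this split because its gain $\tfrac14 c_\phi r^2\rho X(\Omega)$ already carries the factor $X(\Omega)$; your linear-term route recovers the same scaling but you should state it.
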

\begin{proof}
Let $r = (\alpha R)^{1/3}$ and let $X_N \in E_{\rho,r,M}$ be arbitrary.

{\it Step 1:} We begin by estimating the energy change for the continuum approximation discussed in \sref{fLL.idea}. First, note that $C_2 \geq C_1$ means that
$$
q \leq \| \mu - q \|_{L^\infty(\Omega_R)} + \| \mu \|_{L^\infty(\Omega_R)} \leq C_2^{-1} \rho + C \leq C_2^{-1} C_1 + C \leq C.
$$
Define $q_X = q + m_X$ for
$$
m_X := \frac{1}{|\Omega|}(\Phi_1 X_{|\Omega} (\Omega) - q |\Omega|).
$$
The parameter $m_X$ acts as an excess charge density beyond $q$ accounting for boundary layer effects. We estimate it by
\begin{align*}
	\Phi_1 X_{|\Omega}(\Omega) \geq X(\Omega) - X(\Omega \setminus \Omega_{-1}) &\geq \mu(\Omega) + \rho |\Omega| - MR^{d-1}r \\ \notag &\geq (q + \rho) |\Omega| - MR^{d-1}r - \| \mu - q \|_{L^\infty(\Omega)} |\Omega|.
\end{align*}
Applying our assumptions, we see
\begin{align} \label{e.mX.lower}
m_X \geq \rho - \frac{MR^{d-1}r}{|\Omega|} -\| \mu - q \|_{L^\infty(\Omega)} &\geq \rho - CM(\alpha R)^{-2/3} - \| \mu - q \|_{L^\infty(\Omega)} \\ \notag &\geq \rho - C C_5^{-1} \rho - C_2^{-1} \rho \geq \frac12 \rho.
\end{align}
Since $q_X|\Omega| = \Phi_1 X_{|\Omega}(\Omega)$, by \pref{energymin} we have
$$
\mcl E_r((\Phi_1 X_{|\Omega})_{|\Omega}; q_X) \geq 0.
$$
We now lower bound $\mcl E_r((\Phi_1 X_{|\Omega})_{|\Omega}; q)$ by comparison to $\mcl E_r((\Phi_1 X_{|\Omega})_{|\Omega}; q_X)$. First, for any measure $\nu$ with bounded density, we use that $\mcl E_r(\cdot)$ is quadratic to compute
\begin{align*}
	\lefteqn{ \mcl E_r(\nu + q\Leb_{|\Omega_{2r}\setminus \Omega}) - \mcl E_r(\nu + q_X \Leb_{|\Omega_{2r} \setminus \Omega}) } \quad & \\ &= \frac{q-q_X}{2} \iint_{\R^d \times (\Omega_{2r}\setminus \Omega)} (\g - \Phi_r^2 \g)(x-y) (2\nu + (q_X + q)\Leb_{|\Omega_{2r} \setminus \Omega})(dx)dy \\
	&\geq -\frac{|q-q_X|}{2} \| \g - \Phi_r^2 \g \|_{L^1(\R^d)} (2|\nu| + (q_X+q)\Leb_{|\Omega_{2r} \setminus \Omega})(\Omega_{5r} \setminus \Omega_{-3r}) \\ &\geq -C|q-q_X| r^2 (|\nu|(\Omega_{5r}\setminus \Omega_{-2r}) + CR^{d-1} r).
\end{align*}
The bound in the second to last line follows from the fact that $\g - \Phi_r^2 \g$ has support of diameter at most $4r$ and so we only need to integrate over $x \in \Omega_{5r} \setminus \Omega_{-3r}$. For $\nu = (\Phi_1 X_{|\Omega})_{|\Omega}$, we see
$$
\mcl E_r((\Phi_1 X_{|\Omega})_{|\Omega} + q\Leb_{|\Omega_{2r}\setminus \Omega}) - \mcl E_r((\Phi_1 X_{|\Omega})_{|\Omega} + q_X \Leb_{|\Omega_{2r} \setminus \Omega}) \geq -Cm_X (M+C)R^{d-1}r^3.
$$
Plugging this computation into \eref{Ernuq.def}, we find
\begin{align} \notag
	\lefteqn{ \mcl E_r((\Phi_1 X_{|\Omega})_{|\Omega}; q) - \mcl E_r((\Phi_1 X_{|\Omega})_{|\Omega}; q_X) } \quad & \\ \notag &\geq -Cm_X (M+C)R^{d-1}r^3 + q_X \iint_{\R^d \times \R^d} (\g - \Phi_r \g)(x-y) dx (\Phi_1 X_{|\Omega} + q_X \Leb_{|\Omega_r \setminus \Omega})(dy) \\ \notag &\quad -  q \iint_{\R^d \times \R^d} (\g - \Phi_r \g)(x-y) dx (\Phi_1 X_{|\Omega} + q \Leb_{|\Omega_r \setminus \Omega})(dy) \\ \notag &\geq -Cm_X (M+C)R^{d-1}r^3 + (q_X - q) \iint_{\R^d \times \R^d} (\g - \Phi_r \g)(x-y) dx (\Phi_1 X_{|\Omega} + q \Leb_{|\Omega_r \setminus \Omega})(dy) \\ \notag \label{e.Ercomp.final}
	&\geq -Cm_X(M+C)\alpha R^{d} + m_X \| \g - \Phi_r \g \|_{L^1(\R^d)} \Phi_1 X_{|\Omega}(\Omega).
\end{align}
We can compute $\| \g - \Phi_r \g \|_{L^1(\R^d)}  = c_{\phi} r^2$ for some constant $c_\phi > 0$. Furthermore by \eref{rhobdd}, \eref{thickannulus}, and \eref{lowbdrydensity}, we have $(M+C)\alpha R^d \ll r^2 \alpha R^d \leq C c_\phi r^2 \Phi_1 X_{|\Omega}(\Omega)$. It follows
$$
 \mcl E_r((\Phi_1 X_{|\Omega})_{|\Omega}; q) - \mcl E_r((\Phi_1 X_{|\Omega})_{|\Omega}; q_X) \geq \frac12 c_\phi m_X r^2 \Phi_1 X_{|\Omega}(\Omega).
$$
We apply our estimate \eref{mX.lower} on $m_X$ and the similar estimate $\Phi_1 X_{|\Omega}(\Omega) \geq \frac12 X(\Omega)$ to see 
\begin{equation} \label{e.Ergain.lwrbd0}
	\mcl E_r((\Phi_1 X_{|\Omega})_{|\Omega}; q) \geq \frac14 c_\phi r^2 \rho X(\Omega).
\end{equation}
This term will be the main energy benefit of isotropic averaging. Using $\inf_\Omega \mu \geq C^{-1}$, we can estimate
\begin{equation} \label{e.Ergain.lwrbd}
	\mcl E_r((\Phi_1 X_{|\Omega})_{|\Omega}; q) \geq C^{-1} \rho r^2 \alpha R^d \geq C^{-1} C_4 \alpha R^d.
\end{equation}

{\it Step 2:} We now relate the isotropic averaging energy change associated to $X$ to the quantity $\mcl E_r((\Phi_1 X_{|\Omega})_{|\Omega}; q)$ using \lref{smooth.iso}. We claim that
\begin{equation} \label{e.fLL.step2claim}
	\mcl H^{W,U}(X_N) - \Iso_{\mbb X(\Omega), \psi_r} \mcl H^{W,U}(X_N) \geq \frac{1}{2} \mcl E_r((\Phi_1 X_{|\Omega})_{|\Omega}; q).
\end{equation}
This claim, combined with \eref{Ergain.lwrbd0}, will finish the proof.
By step 1, it is sufficient to bound the error terms within \lref{twostep.energy.avg} by $\frac18 c_{\phi}r^2 \rho X(\Omega)$. Using the notation from that lemma, we compute
\begin{align*}
	\mathrm{Error}_{\mathrm{bl}} &\leq Cr^3 (1+q) MR^{d-1} + Cr^3 q^2 R^{d-1} \leq C \alpha R^d (M+1) \\ &\leq C \alpha R^d(C_5^{-1} (\alpha R)^{2/3} \rho + 1) \leq C \alpha R^d + C C_5^{-1} \rho r^2 X(\Omega).
\end{align*}
Applying \eref{Ergain.lwrbd0} and \eref{Ergain.lwrbd}, we see that
$$
\mathrm{Error}_{\mathrm{bl}} \leq \frac{1}{100} \mcl E_r((\Phi_1 X_{|\Omega})_{|\Omega}; q).
$$
Next, we estimate the volume error term. In $d=2$, we can compute using \eref{bigcharge}
\begin{equation}
	\mathrm{Error}_{\mathrm{vol}} \leq X(\Omega) (\log(\alpha R) + C) \leq C_4^{-1}X(\Omega)(\alpha R)^{2/3}\rho + CX(\Omega) \leq \frac{1}{400} c_\phi r^2 \rho X(\Omega).
\end{equation}
In the last inequality, we used that $r^2 \rho \geq C_4 \gg 1$. In $d \geq 3$, we can delete $\log(\alpha R)$ above and have the same final result. By \eref{Ergain.lwrbd0}, we have dominated the volume error term by $\frac{1}{100}\mcl E_r((\Phi_1 X_{|\Omega})_{|\Omega}; q)$. The last remaining error term is related to the approximation of $\mu$ by $q$:
\begin{equation}
	\iint_{\R^d \times \Omega}(\g - \Phi_r \g)(q-\mu)(dx)(\Phi_1 X_{|\Omega})(dy) \geq -c_\phi r^2 \|q - \mu\|_{L^\infty(\Omega_r)} X(\Omega) \geq - C_2^{-1} c_\phi r^2 \rho X(\Omega).
\end{equation}
Using \eref{Ergain.lwrbd0} and $C_2 \gg 1$ allows us to dominate this term as well. Assembling  the above allows us to conclude the claim \eref{fLL.step2claim} and the proof.
\end{proof}

We will apply \pref{fLL.Delta} to prove \tref{fLL.over}, but first we take care of the case in which $\rho$ is very large using our high density law \tref{1C.LL}.
\begin{proposition} \label{p.fLL.highdensity}
Let $\Omega$ be an $\alpha$-thin annulus of diameter $2R$ such that $\alpha R \geq 1 - \log \alpha$. Let $\rho \geq C_1$ for a fixed constant $C_1$ taken large enough. Then we have
$$
\P^{W,U}_{N,\beta} (\{X(\Omega) \geq \mu(\Omega) + \rho |\Omega| \}) \leq e^{-\alpha^{d+2} R^{d+2}}.
$$
\end{proposition}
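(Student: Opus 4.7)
The plan is to reduce to the high-density JLM law \tref{1C.LL} via a covering argument and the pigeonhole principle. First observe that, since $\Omega$ is an $\alpha$-thin annulus of diameter $2R$, one can cover $\Omega$ by a family of balls $B_{\alpha R}(z_1),\ldots,B_{\alpha R}(z_{N_B})$ with $N_B \leq C \alpha^{1-d}$: for $\alpha < 1$, cover the outer sphere $\pa B_R(z)$ by $\sim \alpha^{1-d}$ balls of radius $\alpha R/2$ and shift each slightly inward so that the radius-$\alpha R$ enlargement absorbs the radial thickness $\alpha R$ of $\Omega$; for $\alpha = 1$, the single ball $\Omega = B_R(z)$ suffices.

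On the event $\{X(\Omega) \geq \mu(\Omega) + \rho |\Omega|\}$, using $\rho \geq C_1 \geq 1$, $\mu \geq 0$, and $|\Omega| \geq c \alpha R^d$, we have $X(\Omega) \geq c \rho \alpha R^d$. Since every point of $\Omega$ lies in at least one ball of the cover, pigeonhole yields an index $i$ with
$$X(B_{\alpha R}(z_i)) \geq Q := c' \rho \alpha^d R^d.$$
Apply \tref{1C.LL} with radius $\alpha R$ and this $Q$: in $d \geq 3$ the hypothesis $Q \geq C (\alpha R)^d + C \beta^{-1} (\alpha R)^{d-2}$ reduces (using $\alpha R \geq 1$) to $\rho \geq C$, and in $d=2$ the analogous condition with $\lambda = 100$ reduces likewise. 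Both hold for $C_1$ large. The conclusion of \tref{1C.LL} at scale $\alpha R$ gives
$$\P^{W,U}_{N,\beta}\(\{X(B_{\alpha R}(z_i)) \geq Q\}\) \leq \exp\(-c \beta \rho^2 \alpha^{d+2} R^{d+2}\),$$
using $\beta (\alpha R)^{-d+2} Q^2 = c^2 \beta \rho^2 \alpha^{d+2} R^{d+2}$; in $d=2$ the bound carries an additional $\log \rho$ factor (by choosing $\lambda = \sqrt{Q/(\alpha R)^2} = \sqrt{c'\rho}$), which is more than we need.

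A union bound over the $N_B \leq C \alpha^{1-d}$ balls yields a total estimate of $C \alpha^{1-d} \exp(-c \beta \rho^2 \alpha^{d+2} R^{d+2})$. Taking $\rho \geq C_1$ with $c \beta C_1^2 \geq 2$ makes the JLM gain exceed $2 \alpha^{d+2} R^{d+2}$, and the hypothesis $\alpha R \geq 1 - \log \alpha$ guarantees $(\alpha R)^{d+2} \geq (1 - \log \alpha)^{d+2}$, which absorbs the polynomial union-bound cost $\log(C \alpha^{1-d}) \leq C(d-1) |\log \alpha| + C$ and leaves the desired $e^{-\alpha^{d+2} R^{d+2}}$. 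The main obstacle in executing the plan is precisely the uniform verification of the JLM hypothesis on $Q$ across all $\alpha$: applying \tref{1C.LL} directly to the enclosing ball $B_R(z)$ would force $\rho \geq C/\alpha$, which is why pigeonholing at the smaller radius $\alpha R$ is essential; the condition $\alpha R \geq 1 - \log \alpha$ in the statement then encodes exactly the balance between the union-bound overhead and the JLM gain.
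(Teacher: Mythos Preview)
Your proof is correct and follows essentially the same approach as the paper: cover $\Omega$ by $O(\alpha^{1-d})$ balls of radius $\alpha R$, use pigeonhole to find a ball with $X \geq c\rho\alpha^d R^d$, apply \tref{1C.LL} at scale $\alpha R$, and absorb the union-bound prefactor using $\alpha R \geq 1-\log\alpha$. Your additional remark explaining why pigeonholing at radius $\alpha R$ rather than $R$ is essential is a nice clarification not made explicit in the paper.
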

\begin{proof}
Let $(A_\lambda)_{\lambda \in \Lambda}$ be a covering of $\Omega$ by balls $A_\lambda$ of radius $\alpha R$ of cardinality at most $C\alpha^{-d+1}$. We have
\begin{align*}
	\P^{W,U}_{N,\beta} (\{X(\Omega) \geq \mu(\Omega) + \rho |\Omega| \}) &\leq \P^{W,U}_{N,\beta} (\{X(\Omega) \geq \rho |\Omega| \}) \\ &\leq |\Lambda| \sup_{\lambda \in \Lambda} \P^{W,U}_{N,\beta}(\{X(A_\lambda) \geq C^{-1} \alpha^{d-1} \rho |\Omega| \}).
\end{align*}
If $C_1$ is large enough, we have that $\alpha^{d-1} \rho |\Omega| \gg |A_\lambda|$, so we may apply \tref{1C.LL} to see
\begin{align*}
\P^{W,U}_{N,\beta} (\{X(\Omega) \geq \mu(\Omega) + \rho |\Omega| \})  &\leq C \alpha^{-d+1} e^{-C^{-1} (\alpha \rho |\Omega|)^{d+2}} \\ &\leq C \alpha^{-1} e^{-C^{-1} C_1^2\alpha^{d+2}R^{d+2}} \leq  e^{-\alpha^{d+2} R^{d+2}}.
\end{align*}
\end{proof}

We now consider the more general case.
\begin{proposition} \label{p.fLL.upbd}
Let $\Omega$ be an $\alpha$-thin annulus of diameter $2R$. Let $C_1$ be the constant fixed in \pref{fLL.highdensity}. Suppose we have parameters $q > 0$ and $\rho > 0$ such that conditions \eref{qapproxmu}, \eref{thickannulus}, \eref{bigcharge} hold for large enough constants $C_2, C_3, C_4$ with $C_i \gg C_1$ for $i=2,3,4$. Then we have
\begin{equation}
	\P^{W,U}_{N,\beta} (\{X(\Omega) \geq \mu(\Omega) + \rho |\Omega| \}) \leq e^{- c(\alpha R)^{2/3} \rho(\mu(\Omega) + \rho|\Omega|)} +  e^{-c(\alpha R)^{d/3 + 2} \rho^2} + e^{-\alpha^{d+2} R^{d+2}},
\end{equation}
for some $c > 0$.
\end{proposition}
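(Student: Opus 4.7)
\emph{Plan.} The proof splits according to the magnitude of $\rho$. When $\rho \geq C_1$ (with $C_1$ as in \pref{fLL.highdensity}), that proposition applies directly and supplies the term $e^{-\alpha^{d+2} R^{d+2}}$, using that \eref{thickannulus} implies $\alpha R \geq 1 - \log \alpha$. In the main regime $\rho < C_1$, set $r = (\alpha R)^{1/3}$ and fix the boundary-layer parameter $M = c_0 (\alpha R)^{2/3}\rho$ with $c_0 > 0$ chosen small enough that \eref{lowbdrydensity} holds with the constant $C_5$ required by \pref{fLL.Delta}. Then split
$$
\P^{W,U}_{N,\beta}(\{X(\Omega) \geq \mu(\Omega) + \rho|\Omega|\}) \leq \P^{W,U}_{N,\beta}(E_{\rho,r,M}) + \P^{W,U}_{N,\beta}(\{X(\Omega_{5r} \setminus \Omega_{-5r}) > MR^{d-1}r\})
$$
and bound each piece by a separate argument.

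For the boundary-layer piece, cover $\Omega_{5r}\setminus \Omega_{-5r}$ by $O((R/r)^{d-1})$ balls of radius comparable to $r$ and apply \tref{1C.LL} to each with per-ball threshold $Q \sim M r^d \sim (\alpha R)^{(d+2)/3}\rho$. In $d \geq 3$ this gives a per-ball probability at most $\exp(-c\beta r^{-d+2}Q^2) = \exp(-c(\alpha R)^{d/3+2}\rho^2)$; hypothesis \eref{bigcharge} with $C_4$ large enough ensures this exponent dominates the polynomial union-bound factor $(R/r)^{d-1}$, producing the middle term. The $d=2$ case proceeds analogously, with the $\log(\alpha R)$ correction in \eref{bigcharge} absorbing the logarithmic factor in the two-dimensional JLM rate.

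For $\P^{W,U}_{N,\beta}(E_{\rho,r,M})$, we run the model computation with the operator $\Iso_{\mbb X(\Omega), \psi_r}$. The crucial input is \pref{fLL.Delta}, which provides the uniform lower bound $\Delta_{\rho,r,M} \geq C^{-1}(\alpha R)^{2/3}\rho(\mu(\Omega) + \rho|\Omega|)$ on the energy change. Since the index set $\mbb X(\Omega)$ is random, we decompose by exchangeability as
$$
\P^{W,U}_{N,\beta}(E_{\rho,r,M}) = \sum_{n \geq 0} \binom{N}{n}\P^{W,U}_{N,\beta}(\{\mbb X(\Omega) = [n]\} \cap E_{\rho,r,M})
$$
and run the model computation with fixed $\I = [n]$ on each slice. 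The adjoint $\Iso^*_{[n],\psi_r}$ is convolution by $\psi_r$ in each of the first $n$ coordinates; since $\psi_r$ is a probability density supported in a ball of radius $\leq r+1$, it enlarges the spatial constraint only from $\{x_i \in \Omega\}$ to $\{x_i \in \Omega_{r+1}\}$, and likewise only slightly relaxes the boundary-layer condition built into $E_{\rho,r,M}$.

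The main technical obstacle is verifying that the combinatorial sum over $n$ does not spoil the exponential bound. On $E_{\rho,r,M}$ one has $n \geq \mu(\Omega) + \rho|\Omega|$, and the energy gain $\exp(-\beta c(\alpha R)^{2/3}\rho \cdot n)$ is geometric in $n$, so the series is dominated (up to a harmless constant) by its leading term at $n \approx \mu(\Omega) + \rho|\Omega|$. The binomial factor combines with the adjoint indicator to reconstruct a probability of an enlarged overcrowding event on $\Omega_{r+1}$, which is in turn controlled by \tref{1C.LL} together with the boundary-layer restriction in $E_{\rho,r,M}$. A final routine check confirms that hypotheses \eref{rhobdd}--\eref{lowbdrydensity} of \pref{fLL.Delta} are met in this regime with the chosen $r$ and $M$, thanks to the standing assumption $C_i \gg C_1$ for $i = 2,3,4$.
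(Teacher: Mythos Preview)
Your overall architecture is correct and matches the paper: set $r=(\alpha R)^{1/3}$, choose $M$ so that \eref{lowbdrydensity} is saturated, split the overcrowding event into $E_{\rho,r,M}$ and the boundary-layer event, and handle the latter by covering $\Omega_{5r}\setminus\Omega_{-5r}$ with $r$-balls and applying \tref{1C.LL}. The gap is in the entropy bookkeeping for the isotropic-averaging step on $E_{\rho,r,M}$.

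Your claim that ``the binomial factor combines with the adjoint indicator to reconstruct a probability of an enlarged overcrowding event on $\Omega_{r+1}$'' is not correct as stated, and no appeal to \tref{1C.LL} is needed at this point. If you only track $\N=\mbb X(\Omega)$, the adjoint bound $\Iso^*_{[n],\psi_r}\1_{\{\mbb X(\Omega)=[n]\}}\leq\1_{\{[n]\subset\mbb X(\Omega_{r+1})\}}$ gives, after multiplying by $\binom{N}{n}$ and taking expectation, the factorial moment $\E\bigl[\binom{X(\Omega_{r+1})}{n}\bigr]$ rather than a probability. The paper instead decomposes over \emph{two} index sets, $\N=\mbb X(\Omega)$ and $\M=\mbb X(\Omega_{5r}\setminus\Omega)$, with $|\M|=m\leq M_\rho:=\lceil MR^{d-1}r\rceil$ forced by the boundary-layer constraint in $E_{\rho,r,M}$. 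Because $\Iso_{\N,\psi_r}$ moves only the $\N$-particles and by at most $r+1<5r$, one gets $\Iso^*_{\N,\psi_r}\1_{E_{\M,\N,r,M}}\leq\1_{\{\mbb X(\Omega_{5r})=\M\cup\N\}}$, an \emph{equality} of index sets. The ratio $\binom{N}{m}\binom{N-m}{n}/\binom{N}{m+n}=\binom{m+n}{m}\leq 2n^m$ then cleanly reconstructs $\P(X(\Omega_{5r})=m+n)$, and summing over $m\leq M_\rho$ and $j=m+n$ leaves only the entropy factor $n^{M_\rho}$. The crucial estimate is $M_\rho\log n\leq CC_5^{-1}r^2\rho|\Omega|$, which for $C_5$ large is dominated by $\beta\Delta_{\rho_n,r,M}\geq C^{-1}r^2\rho_n n$ from \pref{fLL.Delta}; this is exactly what the particular scaling $M\sim r^2\rho$ buys you.

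A secondary point: your initial case split on $\rho\gtrless C_1$ is not quite how the term $e^{-\alpha^{d+2}R^{d+2}}$ enters. Even for $\rho<C_1$, the sum over $n=X(\Omega)$ must be truncated at $N_1=\lfloor\mu(\Omega)+C_1|\Omega|\rfloor$, because \pref{fLL.Delta} requires the hypothesis \eref{rhobdd} for $\rho_n$, not for $\rho$. The tail $\{X(\Omega)>N_1\}$ is then handled by \pref{fLL.highdensity}, and this is the source of the third term in all cases.
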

\begin{proof}
Let $r = (\alpha R)^{1/3}$ and $M = C_5^{-1} \rho r^2$ and for a large enough constant $C_5$ ($C_5 = \sqrt{C_4}$ will work if $C_4$ is chosen large enough). 

{\it Step 1:} We first bound the probability of $E_{\rho,r,M}$ by the isotropic averaging argument. Let $n_\rho = \lceil \rho |\Omega| + \mu(\Omega) \rceil$ and $M_\rho = \lceil M R^{d-1} r\rceil$ and $N_1 = \lfloor C_1 |\Omega| + \mu(\Omega) \rfloor$. We write
\begin{equation}
	E_{\rho,r,M} \subset \{X(\Omega) \geq \mu(\Omega) + C_1|\Omega|\} \cup \bigcup_{m = 0}^{M_\rho} \bigcup_{n = n_\rho}^{N_1} \bigcup_{\substack{\M \subset \{1,\ldots,N\} \\ |\M| = m}} \bigcup_{\substack{\N \subset \{1,\ldots,N\} \\ |\N| = n}} E_{\M, \N, r, M}
\end{equation}
for
$$
E_{\M, \N, r, M} = \{\mbb X(\Omega) = \N \} \cap \{\mbb X(\Omega_{5r} \setminus \Omega) = \M \} \cap \{ X(\Omega_{5r} \setminus \Omega_{-5r} ) \leq M R^{d-1}r \}.
$$
Note that
$$
\Iso_{\N, \psi_r}^\ast \1_{E_{\M, \N, r, M}} \leq \1_{\{\mbb X(\Omega_{5r}) = \M\cup \N\}}.
$$
Indeed, since $\Iso^\ast_{\N, \psi_r}$ is convolution by a probability measure, it is bounded by $1$ as an operator $L^\infty \to L^\infty$. By a similar argument as in the proof of \pref{1C.LL.iso}, we have $\Iso_{\N, \psi_r}^\ast \1_{E_{\M, \N, r, M}}(X_N) = 0$ if $\mbb X(\Omega_{5r}) \ne \M\cup \N$.

By isotropic averaging, we conclude, for $|\N| = n \leq N_1$, that
$$
\P^{W,U}_{N,\beta}(E_{\M, \N, r, M}) \leq e^{-\beta \Delta_{\rho_n, r, M}}\P^{W,U}_{N,\beta}(\{\mbb X(\Omega_{5r}) = \M\cup \N\})
$$
where $\rho_n := \frac{n - \mu(\Omega)}{|\Omega|}$ and $\Delta_{\rho_n, r, M}$ is as in \eref{LL.Delta}. By \pref{fLL.highdensity}, a union bound, and exchangeability, we have
\begin{align*}
\P^{W,U}_{N,\beta}(E_{\rho,r,M}) &\leq e^{-\alpha^{d+2} R^{d+2}} \\ &\quad + \sum_{m = 0}^{M_\rho} \sum_{n = n_\rho}^{N_1} {\binom{N}{m}}{\binom{N-m}{n}} e^{-\beta \Delta_{\rho_n, r, M}} \P^{W,U}_{N,\beta}(\{\mbb X(\Omega_{5r}) = \{1,\ldots,m+n\}\}).
\end{align*}
Above, we only summed over $\M$ and $\N$ disjoint. We have
$$
\P(\{\mbb X(\Omega_{5r}) = \{1,\ldots,m+n\}\}) = \frac{1}{{\binom{N}{m+n}}}\P(\{ X(\Omega_{5r}) = m+n\}\}) 
$$
and
$$
\frac{{\binom{N}{m}}{\binom{N-m}{n}}}{{\binom{N}{m+n}}} = \frac{(m+n)!}{m! n!} \leq \frac{(n+m)^m}{m!} \leq 2n^m
$$
whenever $m \leq n$, which for us is always the case if $C_5$ is large enough (independently of $C_i$, $i=2,3,4$). Letting $j = m+n$, we have
\begin{align} \label{e.fLL.prob.E.bd}
	\P^{W,U}_{N,\beta}(E_{\rho,r,M}) &\leq  e^{-\alpha^{d+2} R^{d+2}} + \sum_{n = n_\rho}^{N_1} \sum_{j=n}^{n+M_\rho} 2n^{j - n}e^{-\beta \Delta_{\rho_n, r, M}} \P^{W,U}_{N,\beta}(\{X(\Omega_{5r}) = j\}) \\ \notag &\leq  e^{-\alpha^{d+2} R^{d+2}} + \sum_{n=n_\rho}^{N_1} 2n^{M_\rho}e^{-\beta \Delta_{\rho_n,r,M}}.
\end{align}
Note that $M_\rho \leq 2MR^{d-1} r = 2C_5^{-1} R^{d-1} r^3 \rho \leq C_5^{-1} \alpha R^d r^2 \rho / \log(N_1)$. Thus for $n \leq N_1$ we have
$$
n^{M_\rho} \leq e^{M_\rho \log N_1} \leq  e^{CC_5^{-1}r^2 \rho |\Omega|}.
$$
Since $\beta \Delta_{\rho_n,r,M} \geq C^{-1} r^2 \rho  |\Omega|$ by \pref{fLL.Delta}, we can bound
$$
n^{M_\rho} e^{-\beta \Delta_{\rho_n,r,M}} \leq e^{-\frac{\beta}{2} c r^2 \rho_n (\mu(\Omega) + \rho_n |\Omega|)}
$$
for some $c > 0$ for $C_5$ large enough. The series in the RHS of \eref{fLL.prob.E.bd} can therefore be dominated by a geometric series with rate $1/2$ with the same first term, and so
\begin{equation} \label{e.fLL.prob.E.bd.final}
	\P^{W,U}_{N,\beta}(E_{\rho,r,M}) \leq  e^{-\alpha^{d+2} R^{d+2}} + 4e^{-\frac{\beta}{2} c r^2 \rho(\mu(\Omega) + \rho|\Omega|)}.
\end{equation}

{\it Step 2:} We now consider the event that $X(\Omega) \geq \mu(\Omega) + \rho |\Omega|$ on the complement of $E_{\rho,r,M}$. It suffices to bound the probability that
$
X(\Omega_{5r} \setminus \Omega_{-5r}) \geq M R^{d-1} r.
$
Let $\{  A_\lambda \}_{\lambda \in  \Lambda}$ be a covering of $\Omega_{5r}\setminus \Omega_{-5r}$ by balls of radius $r$ of cardinality at most $C (R/r)^{d-1}$. We have
$$
\P^{W,U}_{N,\beta} (\{ X(\Omega_{5r} \setminus \Omega_{-5r}) \geq M R^{d-1} r \}) \leq \(\frac{CR}{r}\)^{d-1}\sup_{\lambda \in \Lambda} \P^{W,U}_{N,\beta} (\{ X( A_\lambda) \geq C^{-1} Mr^d \}).
$$
Note that $(\alpha R)^{2/3} \rho \geq C_4$ by \eref{bigcharge}, so we have
$
M \geq C_5^{-1} C_4.
$
If we take $C_4 \gg C_5$, we have $M \gg 1$ and we may apply \tref{1C.LL} to see
$$
\P^{W,U}_{N,\beta} (\{ X(\Omega_{5r} \setminus \Omega_{-5r}) \geq M R r \})  \leq \(\frac{CR}{r}\)^{d-1} e^{-C^{-1} r^{d+6} \rho^2} \leq e^{-c r^{d+6} \rho^2}
$$
for some $c > 0$. Assembling \eref{fLL.prob.E.bd.final} and the above finishes the proof.
\end{proof}

It is now a simple matter to prove \tref{fLL.over}.
\begin{proof}[Proof of \tref{fLL.over}]
We will apply \pref{fLL.upbd} to the $1$-thin annulus $\Omega = B_R(z)$ with a certain parameter $\rho$. For the constant $q$ approximation to $\mu = \frac{1}{c_d} \Delta W$, we take $q = \mu(z)$. If $W$ is quadratic near $B_R(z)$, this approximation is exact, but we focus on the general case. One finds that
$$
\| \mu - q \|_{L^\infty(B_{2R}(z))} \leq CN^{-1/d} R,
$$
whence we have the restriction $\rho \gg N^{-1/d}R$ in applying \pref{fLL.upbd}. We also have the restriction $\rho \gg R^{-2/3} (1+\1_{d=2}\log R)$ from \eref{bigcharge}. If
$
R \leq N^{\frac{3}{5d}},
$
then the latter restriction is the only relevant one, and we achieve
$$
\P^{W,U}_{N,\beta}(\{X(B_R(z)) \geq \mu(B_R(z)) + \rho |B_R(z)|\}) \leq e^{-c R^d T} + e^{-c R^{(d+2)/3} T^2} + e^{- R^{d+2}},
$$
as desired, where we set $\rho = TR^{-2/3} (1+\1_{d=2}\log R)$ for a large $T > 0$.
\end{proof}

\subsection{Upgrading the discrepancy bound.}\
In this subsection, we upgrade \tref{fLL.over} using rigidity results for smooth linear statistics.  We will assume conditions stated in \tref{fLL.improved} throughout. We do not spend undue effort trying to optimize our bounds in $\beta$ or $V$ since our results are generally weaker than those of \cite{AS21}. Instead, our purpose is to show how our overcrowding estimates can be upgraded using known rigidity bounds for smooth linear statistics. In particular, we demonstrate that overcrowding bounds are sufficient to bound the absolute discrepancy, rather than just the positive part of the discrepancy. The mechanism for this is consists of finding a screening region of excess positive charge near the boundary of a ball with large absolute discrepancy. We note that the idea of obtaining a screening region is already present and features prominently in \cite{L21}, and we do not add anything fundamentally new to this procedure, but rather adapt it for our overcrowding estimate.

For $\alpha \in (0,1]$ and $R \in (0,\infty)$, let $\xi_{R,\alpha} : \R \to \R$ be a function satisfying
\begin{itemize}
\item $0 \leq \xi_{R,\alpha} \leq 1$
\item  $\xi_{R,\alpha}(x) = 1 \quad \forall x \in [-R,R]$ and $\xi_{R,\alpha}(x) = 0 \quad \forall x \not \in (-R-\alpha R, R+\alpha R)$
\item $\xi'_{R,\alpha}(x) \leq 0$ for $x \geq 0$. 
\item $\sup_{x \in \R} |\xi^{(k)}_{R,\alpha}(x)| \leq C_k (\alpha R)^{-k}$ for $k=1,2,3,4$.
\end{itemize}
In what follows, we will consider the map $x \mapsto \xi_{R,\alpha}(|x|)$ from $\R^d \to \R$, and by abuse of notation we will write this map as $\xi_{R,\alpha}$. We will also write $\Fluct(\phi) := \int \phi(x) \fluct(dx)$ for the ($N$-dependent) fluctuation measure $\fluct$ defined in \eref{fluctdef}.

\begin{theorem}[Corollary of \cite{S22}, Theorem 1] \label{t.serfaty}
Under the conditions on $W = V_N$ stated above \tref{fLL.improved}, for a large enough constant $C > 0$ and $\alpha R \geq C$, we have
\begin{equation} \label{e.serfaty}
	|\log \E^{V_N}_{N,\beta} \exp(t \Fluct(\xi_{R,\alpha}))| \leq \frac{C|t|R^{d-2}}{\alpha^3} + Ct^2 \(\frac{R^{d-4}}{\alpha^4} + \frac{R^{d-2}}{\alpha}\) + \frac{Ct^4 R^{d-8}}{\alpha^8} + \frac{C|t|R^d}{N^{2/d}}.
\end{equation}
for all $|t| \leq C^{-1}\alpha^2 R^2$. In particular for $t=1$ and $\alpha^3 R^2 \leq N^{2/d}$, we have
\begin{equation} \label{e.serfaty2}
	|\log \E^{V_N}_{N,\beta} \exp( \Fluct(\xi_{R,\alpha}))| \leq C\frac{|t|R^{d-2}}{\alpha^3}.
\end{equation}
\end{theorem}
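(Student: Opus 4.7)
The plan is to apply Serfaty's Theorem 1 from \cite{S22} with the test function $\phi = \xi_{R,\alpha}$: the entire task consists of computing the Sobolev-type norms of $\xi_{R,\alpha}$ that appear on the right-hand side of Serfaty's master bound and matching each to one of the four terms in the statement.

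First I would record the basic norms of $\xi_{R,\alpha}$. Since $\xi_{R,\alpha}$ is radial with a profile whose $k$-th derivative is bounded by $C(\alpha R)^{-k}$, and since its derivatives are supported in an annular region of width comparable to $\alpha R$ and volume at most $C\alpha R^d$, elementary estimates give
$$
\| \nabla \xi_{R,\alpha} \|_{L^2(\R^d)}^2 \leq CR^{d-2}/\alpha, \quad \| \Delta \xi_{R,\alpha} \|_{L^2(\R^d)}^2 \leq CR^{d-4}/\alpha^3, \quad |\supp \xi_{R,\alpha}| \leq CR^d,
$$
together with analogous bounds for the higher-order derivatives appearing in Serfaty's estimate, and the pointwise bounds $\|\nabla^k \xi_{R,\alpha}\|_{L^\infty} \leq C(\alpha R)^{-k}$ for $k \leq 4$.

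The four summands in \eref{serfaty} will correspond to the four contributions in Serfaty's bound. I expect the $|t|R^{d-2}/\alpha^3$ piece to arise from the $|t|$-linear bias correction (picked up from $\|\Delta \xi_{R,\alpha}\|_{L^2}^2$ together with one annular-width factor); the quadratic-in-$t$ terms $t^2(R^{d-4}/\alpha^4 + R^{d-2}/\alpha)$ to combine a subleading variance contribution with the principal Dirichlet energy $\|\nabla \xi_{R,\alpha}\|_{L^2}^2$ (up to a $\beta$-dependent constant); the quartic term $t^4 R^{d-8}/\alpha^8$ to be the corresponding bound on a fourth-cumulant contribution controlled by $\|\nabla^2 \xi_{R,\alpha}\|_{L^2}$-type norms; and $|t|R^d/N^{2/d}$ to encode the transport/discretization error proportional to $|\supp \xi_{R,\alpha}|$ at the microscale $N^{-1/d}$. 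The admissibility restriction $|t| \leq C^{-1}\alpha^2 R^2$ is inherited verbatim from \cite{S22}. The specialization \eref{serfaty2} then follows by setting $t=1$ and observing that under $\alpha^3 R^2 \leq N^{2/d}$ together with $\alpha R \geq C$, each of the $t^2$, $t^4$, and $|t|R^d/N^{2/d}$ terms is dominated by $R^{d-2}/\alpha^3$.

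The main (and essentially only) obstacle will be verifying, from the precise statement of \cite[Theorem 1]{S22}, that the particular combination of seminorms of $\xi_{R,\alpha}$ appearing there produces exactly the four listed terms; this is careful bookkeeping rather than new analysis, and the hypotheses on $V$ and $\Sigma$ collected above the statement are exactly those required to invoke the cited result.
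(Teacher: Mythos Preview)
Your overall plan---apply \cite[Theorem 1]{S22} to $\xi_{R,\alpha}$ and bookkeep the seminorms---is exactly what the paper does, and your norm computations are the right ones. However, you misidentify the origin of the last term $C|t|R^d/N^{2/d}$, and this is a genuine gap rather than cosmetic. Serfaty's theorem in \cite{S22} is stated with the \emph{thermal} equilibrium measure $\mu_\theta$ in place of the classical equilibrium measure $\mueq$, whereas the fluctuation $\Fluct$ in this paper is defined with respect to $\mueq^N$. The $C|t|R^d/N^{2/d}$ term does not come from any ``transport/discretization error'' intrinsic to \cite{S22}; it arises when you swap $\mu_\theta$ for $\mueq$, via the bound $\|\mu_\theta - \mueq\|_{L^\infty(\R^d)} \leq C N^{-2/d}$ from \cite[Theorem 1]{AS19}, which multiplied by $|t|\cdot|\supp\,\xi_{R,\alpha}| \leq C|t|R^d$ gives exactly that term.

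You also omit the rescaling step: \cite[Theorem 1]{S22} is written on the non-blown-up scale (interstitial distance $N^{-1/d}$), so one must pass to the blown-up coordinates used here. This is routine but should be mentioned.
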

\begin{proof}
A more general estimate on a non-blown up scale, i.e.\ the scale with interstitial distance of order $N^{-1/d}$, and with the thermal equilibrium measure $\mu_{\theta}$ in place of the equilibrium measure $\mueq$, is \cite[Theorem 1]{S22}. We slightly transform the estimate by using $\beta \geq C^{-1}$, changing the interstitial length scale to $O(1)$, and plugging in the specifics of $\xi_{R,\alpha}$. We also use (see \cite[Theorem 1]{AS19})
$$
\|\mu_{\theta} - \mueq \|_{L^\infty(\R^d)} \leq \frac{C}{N^{2/d}} 
$$
to replace the thermal equilibrium measure by $\mueq$, generating the $C|t|R^dN^{-2/d}$ term in \eref{serfaty}.
\end{proof}

The following proposition shows that one can find a screening region of excess positive charge whenever the absolute discrepancy is large in a ball. It is a natural consequence of rigidity for fluctuations of smooth linear statistics.
\begin{proposition} \label{p.disc.screen.region}
Fix $\delta \in (0,d)$ and $R \geq C$ for a large enough $C > 0$ and $\alpha \in (0,1]$. For any $T \geq 1$, if $\Fluct(\xi_{R-\alpha R,\alpha}) \leq \frac{T}{10}R^{\delta}$, we can find $k \in \Z^{\geq 0}$ and a constant $C_6 > 0$ such that
$$
\Disc(\Ann_{[R-\alpha_k R, R]}(z)) \geq \Disc(B_R(z)) - \frac{T}{2} R^\delta
$$
for $\alpha_k = \alpha - C_6^{-1} R^{-d + \delta} k$ with $C_6^{-1} R^{-d + \delta} k \leq \alpha_k \leq \alpha$. If instead $\Fluct(\xi_{R,\alpha}) \geq -\frac{T}{10}R^{\delta}$, we can also find $\alpha_k$ as above with
$$
\Disc(\Ann_{[R, R+\alpha_k R]}(z)) \geq -\Disc(B_R(z)) - \frac{T}{2} R^\delta.
$$
\end{proposition}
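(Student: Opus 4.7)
The plan is a contradiction argument that converts the discrete grid of candidate radii into a weighted integral of discrepancies via the layer-cake identity. Set $f(r) = \Disc(B_r(z))$ and write $r_k = R - \alpha_k R$, $s_k = R + \alpha_k R$. Since $\Disc(\Ann_{[r_k,R]}(z)) = f(R) - f(r_k)$ and $\Disc(\Ann_{[R,s_k]}(z)) = f(s_k) - f(R)$, the first claim reduces to exhibiting an admissible $k$ with $f(r_k) \leq \tfrac{T}{2} R^\delta$, and the second to one with $f(s_k) \geq -\tfrac{T}{2} R^\delta$. The admissible indices populate arithmetic progressions of spacing $C_6^{-1} R^{1-d+\delta}$ and total length $\alpha R/2$, lying in $[R(1-\alpha), R(1-\alpha/2)]$ for case 1 and $[R+\alpha R/2, R+\alpha R]$ for case 2.

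For case 1, I suppose toward contradiction that $f(r_k) > \tfrac{T}{2} R^\delta$ at every admissible $k$. Because particle crossings only add to $X$ and $\mu$ has bounded density, for $r \in [r_k, r_{k+1}]$ we have $f(r) \geq f(r_k) - CR^{d-1}(r-r_k) \geq f(r_k) - CC_6^{-1} R^\delta$; choosing $C_6$ sufficiently large (depending only on $d$ and $\sup \mu$) makes this correction at most $R^\delta/100$, so $f(r) \geq \tfrac{49}{100} T R^\delta$ throughout $[R(1-\alpha), R(1-\alpha/2)]$ when $T \geq 1$. I then apply the radial layer-cake identity
\[
\Fluct(\chi) = \int_0^\infty (-\chi'(r)) f(r)\, dr,
\]
choosing $\chi = \xi_{R-\alpha R, \alpha}$ to be a specific cutoff whose entire drop from $1$ to $0$ is confined to the subinterval $[R(1-\alpha), R(1-\alpha/2)]$. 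This choice is consistent with the axioms: for $\alpha \leq 1/2$ the subinterval lies inside the allowed transition region $[R(1-\alpha), R(1-\alpha^2)]$ and has length $\alpha R/2$, so the derivative bounds $|\chi^{(j)}| \leq C_j(\alpha R)^{-j}$ hold with slightly enlarged constants $C_j$; for $\alpha > 1/2$ the entire allowed transition region is already contained in the subinterval. Then $\int(-\chi')\,dr = 1$ on this subinterval, yielding $\Fluct(\chi) \geq \tfrac{49}{100} T R^\delta > \tfrac{T}{10} R^\delta$, contradicting the hypothesis.

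Case 2 is a sign-flipped mirror: the monotonicity gives $f(r) \leq f(s_k) + CR^{d-1}(s_k - r) \leq -\tfrac{49}{100} T R^\delta$ on $[R+\alpha R/2, R+\alpha R]$ under the contradiction hypothesis, and the analogous choice of $\xi_{R, \alpha}$ (keeping $\xi = 1$ on $[-R-\alpha R/2, R+\alpha R/2]$ so the transition is confined to $[R+\alpha R/2, R+\alpha R]$) produces $\Fluct(\xi_{R, \alpha}) \leq -\tfrac{49}{100} T R^\delta < -\tfrac{T}{10} R^\delta$. The main subtle point is the specific shape of the cutoff: a pathological $\chi$ whose drop falls entirely outside the grid interval would defeat the argument. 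However, since the axioms only constrain support, range, monotonicity, and derivative magnitudes, there is enough flexibility to localize the drop into the grid interval. Once this choice is fixed, the proof reduces to the elementary monotonicity control of $f$ between successive grid points; the spacing $C_6^{-1} R^{1-d+\delta}$ was designed precisely so that the continuous $\mu$-decay between grid points costs strictly less than $TR^\delta$.
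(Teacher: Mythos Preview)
Your proof is correct and uses the same two ingredients as the paper --- the layer-cake identity $\Fluct(\chi)=\int(-\chi')f$ and the one-sided Lipschitz bound $|f(r)-f(r')|\le CR^{d-1}|r-r'|$ coming from boundedness of $\mu$ --- but you organize them differently. The paper argues directly: from the layer-cake identity and the hypothesis $\Fluct(\xi)\le \tfrac{T}{10}R^\delta$ it extracts (by the integral mean-value inequality, since $-\xi'\ge 0$ integrates to $1$) a single radius $s$ in the transition region with $f(s)\le \tfrac{T}{10}R^\delta$, and then rounds $s$ to the nearest grid point using the density bound. You instead run the contrapositive: assume every grid point is bad, propagate to the full interval via the density bound, and contradict the hypothesis by integrating against $-\xi'$.

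One point of difference worth noting: the paper's direct argument works for \emph{any} cutoff $\xi_{R-\alpha R,\alpha}$ obeying the stated axioms, because the mean-value step only needs $-\xi'\ge 0$ with unit mass. Your argument requires the additional freedom to localize the drop of $\xi$ into the grid subinterval $[R(1-\alpha),R(1-\alpha/2)]$ (respectively $[R+\alpha R/2,R+\alpha R]$). This is harmless here since $\xi$ is introduced as an auxiliary object that one is free to choose, and the downstream application only uses one fixed choice; but strictly speaking the proposition as stated hypothesizes a bound on $\Fluct(\xi)$ for a \emph{given} $\xi$, so the paper's version is marginally more general. The direct mean-value route is also a line or two shorter.
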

\begin{proof}
First, note that
$$
\Disc(B_{s+\ep}(z)) - \Disc(B_s(z))) = \int_{\Ann_{[s+\ep,s)}(z)} \fluct(dx)
$$
whenever $\fluct$ has no atoms on $\pa B_s(z)$ or $\pa B_{s+\ep}(z)$. Thus, we have by integration in spherical coordinates that
\begin{align*}
	\Fluct(\xi_{R-\alpha R,\alpha}) &= \int_0^{R} \xi_{R-\alpha R,\alpha}(s) d(\Disc(B_s(z))) = -\int_0^R \frac{d}{ds} \xi_{R-\alpha R,\alpha}(s) \Disc(B_s(z)) ds \\
	&= \Disc(B_{R}(z)) -\int_{R-\alpha R}^R \frac{d}{ds} \xi_{R-\alpha R,\alpha}(s) \(\Disc(B_s(z)) - \Disc(B_{R}(z))\) ds.
\end{align*}
Assuming now $\Fluct(\xi_{R-\alpha R,\alpha}) \leq \frac{T}{10} R^\delta$, by the mean value theorem, there must exist $s \in (R-\alpha R,R)$ such that $\Disc(B_s(z)) - \Disc(B_{R}(z)) \leq \frac{T}{10} R^{\delta} - \Disc(B_R(z)) $. This means that
$$
\Disc(\Ann_{[s,R)}(z)) \geq  \Disc(B_R(z)) - \frac{T}{10} R^{\delta}.
$$
For any $k$ such that $R - \alpha R \leq R - \alpha_k R  < s$, we have
\begin{align*}
\Disc(\Ann_{(R-\alpha_k R,R)}(z)) &\geq \Disc(\Ann_{[s,R)}(z)) - \mueq^N(\Ann_{(R-\alpha_k,s]}(z)) \\ &\geq \Disc(\Ann_{[s,R)}(z)) - CR^{d-1}|s - \alpha_k R|.
\end{align*}
We choose $k$ such that $|s-(R- \alpha_k R)| \leq C^{-1}R^{-d+1 + \delta}$ for a large enough constant $C$ to conclude.

If we instead assume $\Fluct(\xi_{R,\alpha}) \geq -\frac{T}{10} R^{\delta}$, then since
$$
\Fluct(\xi_{R,\alpha}) = \Disc(B_{R}(z)) -\int_{R}^{R+\alpha R} \frac{d}{ds} \xi_{R,\alpha}(s) \(\Disc(B_s(z)) - \Disc(B_{R}(z))\) ds,
$$
we can find $s \in (R,R+\alpha R)$ with
$$
\Disc(\Ann_{[R,s)}(z)) \geq -\frac{T}{10} R^{\delta} - \Disc(B_R(z)).
$$
Choosing $\alpha_k$ such that $s \leq R + \alpha_k R \leq R+\alpha R$ and $|s - (R+\alpha_k R)| \leq C^{-1} R^{-d+1+\delta}$ is sufficient to conclude.
\end{proof}

We are now ready to prove \tref{fLL.improved}.
\begin{proof}[Proof of \tref{fLL.improved}]
Let $\P = \P^{V_N}_{N,\beta}$. We choose $\alpha = R^{-\lambda}$ for $\lambda = 2/5$ and consider discrepancies in $B_R(z)$ of size $TR^{\delta}(1+\1_{d=2}\log R)$ for $\delta = d - 4/5$ and $T$ sufficiently large. We will however write a mostly generic argument in terms of $\lambda$ and $\delta$ and insert the specific values later. We will consider the case of positive discrepancy first.

If $\Fluct(\xi_{R-\alpha R,R}) \leq \frac{T}{10}R^{\delta}$, we can use \pref{disc.screen.region} to find $\alpha_k \in (0,\alpha]$ such that $\Disc(\Ann_{[R-\alpha_k R, R]}(z)) \geq \frac{T}{2}R^{\delta}(1+\1_{d=2}\log R)$. By a union bound, we have
\begin{align} \label{e.disc.breakdown}
	\lefteqn{ \P(\{\Disc(B_R(z)) \geq TR^{\delta}\log R\}) } \quad & \\ \notag &  \leq \P(\{\Fluct(\xi_{R-\alpha R,\alpha}) > \frac{T}{10}R^{\delta}\}) + C \alpha R^{d-\delta} \sup_{\alpha_k} \P( \{ \Disc(\Ann_{[R-\alpha_k R, R]}(z)) \geq \frac{T}{2}R^{\delta}\log R \}),
\end{align}
where the supremum is over all $\alpha_k = \alpha - C^{-1} R^{-d+\delta}k$, $k \in \Z$ and $\alpha_k \in [0,\alpha)$. Note that we have $\alpha_k \geq C^{-1} R^{-d+\delta} \gg R^{-1}$ always. Next, we bound the supremum in \eref{disc.breakdown}.

Let $\alpha' \in [C^{-1}R^{-d+\delta},\alpha]$ and let $\Omega$ be the $\alpha'$-thin annulus $\Ann_{[R-\alpha'R,R]}(z)$. We bound
$$
\P( \{ \Disc(\Omega) \geq \frac{T}{2}R^{\delta}(1+\1_{d=2}\log R )\} ) \leq \P( X(\Omega) \geq \mu(\Omega) + \rho |\Omega| )
$$
for $\rho = C^{-1} T (\alpha')^{-1} R^{-d+\delta} ( 1+\1_{d=2}\log R)$. Applying \pref{fLL.upbd} shows 
\begin{equation} \label{e.discfinal}
	\P(\{  \Disc(\Omega) \geq \frac{T}{2}R^{\delta} (1+\1_{d=2} \log R) \})  \leq e^{-c (\alpha')^{2/3}R^{2/3 + \delta} T} +  e^{-c (\alpha' R)^{d/3}  R^{2+2\delta - 2d} T^2} + e^{-(\alpha' R)^{d+2}}
\end{equation}
whenever $T (\alpha')^{-1/3}R^{\delta + 2/3 -d} \geq TR^{\delta + 2/3 -d  + \lambda/3}$ is large and we can estimate $$\| \mueq^N - q \|_{L^\infty(B_{2R}(z))} \leq C_2^{-1} TR^{\delta - d + \lambda}$$ for a constant $q$. The latter condition happens when $R^{1 + d - \delta - \lambda} \ll N^{1/d}$.

Considering the other term in \eref{disc.breakdown}, we apply \tref{serfaty} with $|t| = C^{-1}$ and Chebyshev's inequality to bound
\begin{equation} \label{e.fluct.cheby}
	\P(\{\Fluct(\xi_{R-\alpha R,\alpha}) > \frac{T}{10}R^{\delta}\}) \leq e^{C^{-1}(-\frac{T}{10} R^\delta +C R^{d-2+3\lambda})}.
\end{equation}
Note that $\alpha^3 R^2 = R^{4/5} \leq N^{2/d}$ so that \eref{serfaty2} applies. We thus apply our argument to parameters $\lambda$ and $\delta$ such that
\begin{equation}
	\delta + \frac23 - d + \frac{\lambda}{3} \geq 0, \quad \delta \geq d - 2 + 3\lambda.
\end{equation}
One can check that the smallest choice of $\delta$ is $\delta = d-4/5$ with $\lambda = 2/5$. With these choices, choosing $R \leq N^{\frac{5}{7d}}$ guarantees that we can approximate $\mueq^N$ by a constant sufficiently well.

Finally, we estimate the RHS of \eref{discfinal} and \eref{fluct.cheby} and plug them into \eref{disc.breakdown}. Note that $\alpha' \geq C^{-1}R^{-4/5}$, so the RHS of \eref{discfinal} is bounded by
$$
e^{-cR^{d-10/15}T} + e^{-cR^{2/5 + d/15}T^2} + e^{-R^{(d+2)/5}}.
$$
The factor $\alpha R^{d-\delta}$ within \eref{disc.breakdown} can be absorbed into the above at the cost of a constant factor within the exponent.

This finishes the one half of the proof of \tref{fLL.improved}. The proof of the lower bound on $\Disc(B_R(z))$ is nearly identical, except we use fluctuation bounds on $\xi_{R,\alpha}$ to find a screening region outside of $\pa B_R(z)$ with positive discrepancy.
\end{proof}

\addcontentsline{toc}{section}{References}
\bibliographystyle{alpha}
\bibliography{isotropicbib}{}

\vskip .5cm
\noindent
\textsc{Eric Thoma}\\
Courant Institute, New York University. \\
Email: {eric.thoma@cims.nyu.edu}.
\vspace{.2cm}

\end{document}